\newcommand\enc[2]{\{#1\}_{#2}}
\newcommand\spr[2]{\langle #1, #2 \rangle}
\newcommand\norm[1]{#1\!\!\downarrow\ }
\newcommand\normE[2]{#1\!\!\downarrow_{#2}\ }
\newcommand\pub[1]{\mathsf{pub}(#1)}
\newcommand\sign[2]{\mathsf{sign}(#1,#2)}
\newcommand\blind[2]{\mathsf{blind}(#1,#2)}
\def\pubsym{\mathsf{pub}}
\def\signsym{\mathsf{sign}}
\def\blindsym{\mathsf{blind}}
\def\nameset{\mathsf{N}}
\def\varset{\mathsf{V}}
\def\conset{\Sigma_C}
\def\Dscr{{\cal D}}
\def\Lscr{{\cal L}}
\def\Nscr{{\cal N}}
\def\Rscr{{\cal R}}
\def\Sscr{{\cal S}}
\begin{document}

\pagestyle{plain}

\title{A Proof Theoretic Analysis of Intruder Theories}
\author{Alwen Tiu and Rajeev Gor\'e}
\institute{
Logic and Computation Group \\ 
College of Computer Science and Engineering \\
The Australian National University 
}

\maketitle

\thispagestyle{empty}

\begin{abstract}
  We consider the problem of intruder deduction in security protocol
  analysis: that is, deciding whether a given message $M$ can be
  deduced from a set of messages $\Gamma$ under the theory of blind
  signatures and arbitrary convergent equational theories modulo
  associativity and commutativity (AC) of certain binary
  operators. The traditional formulations of intruder deduction are
  usually given in natural-deduction-like systems and proving
  decidability requires significant effort in showing that the rules
  are ``local'' in some sense. By using the well-known translation
  between natural deduction and sequent calculus, we recast the
  intruder deduction problem as proof search in sequent calculus, in
  which locality is immediate. Using standard proof theoretic methods,
  such as permutability of rules and cut elimination, we show that the
  intruder deduction problem can be reduced, in polynomial time, to
  the elementary deduction problems, which amounts to solving certain
  equations in the underlying individual equational theories.  We
  further show that this result extends to combinations of disjoint
  AC-convergent theories whereby the decidability of intruder
  deduction under the combined theory reduces to the decidability of
  elementary deduction in each constituent theory. Although various
  researchers have reported similar results for individual cases, our
  work shows that these results can be obtained using a systematic and
  uniform methodology based on the sequent calculus.

  \paragraph{Keywords:} AC convergent theories, sequent calculus,
  intruder deduction, security protocols.
\end{abstract}

\section{Introduction}

One of the fundamental aspects of the analysis of security protocols is the model
of the intruder that seeks to compromise the protocols. In many situations, 
such a model can be described in terms of a deduction system which gives a formal
account of the ability of the intruder to analyse and synthesize messages. 
As shown in many previous works 
(see, e.g., ~\cite{Amadio00CONCUR,Boreale01ICALP,Comon-Lundh03LICS,Chevalier03LICS}), 
finding attacks on protocols can often be framed as
the problem of deciding whether a certain formal expression is derivable in
the deduction system which models the intruder capability. The latter is sometimes called
the {\em intruder deduction problem}, or the (ground) reachability problem.  
A basic deductive account of the intruder's capability 
is based on the so-called Dolev-Yao model, which assumes perfect encryption. 
While this model has been applied fruitfully to many situations, 
a stronger model of intruders is needed to discover certain types of attacks.
A recent survey \cite{Cortier06JCS} shows that attacks on several protocols used in real-world communication
networks can be found by exploiting algebraic properties of %the underlying
encryption functions. 

The types of attacks mentioned in \cite{Cortier06JCS} have motivated
many recent works in studying models of intruders in which the
algebraic properties of the operators used in the protocols are taken
into
account~\cite{Comon-Lundh03LICS,Chevalier03LICS,Abadi06TCS,Delaune06ICALP,Lafourcade07IC,Cortier07LPAR}.
In most of these, the intruder's capability is usually given as
a natural-deduction-like deductive system. As is common in natural
deduction, each constructor has a rule for introducing the constructor and
one for eliminating the constructor.
The elimination rule typically decomposes a term, reading
the rule top-down: {\em e.g.}, a typical elimination rule for a pair
$\spr M N$ of terms is:
$$
\infer[]
{\Gamma \vdash M}
{\Gamma \vdash \spr M N}
$$
Here, $\Gamma$ denotes a set of terms, which represents the terms
accumulated by the intruder over the course of its interaction with
participants in a protocol. While a natural deduction formulation of
deductive systems may seem ``natural'' and may reflect the meaning of
the (logical) operators, it does not immediately give us a proof
search strategy. Proof search means that we have to apply the rules
bottom up, and as the above elimination rule demonstrates, this
requires us to come up with a term $N$ which might seem arbitrary.
For a more complicated example, consider the following elimination
rule for {\em blind signatures} \cite{Fujioka92,KremerESOP05,Bernat06}.
$$
\infer[]
{\Gamma \vdash \sign M K}
{\Gamma \vdash \sign {\blind M R} K & \Gamma \vdash R}
$$
The basis for this rule is that the ``unblinding'' operation commutes
with signature.  Devising a proof search strategy in a natural
deduction system containing this type of rule does not seem trivial.
In most of the works mentioned above, in order to show the
decidability results for the natural deduction system, one needs to
prove that the system satisfies a notion of {\em locality}, i.e., in
searching for a proof for $\Gamma \vdash M$, one needs only to
consider expressions which are made of subterms from $\Gamma$ and $M.$
In addition, one has to also deal with the complication that arises
from the use of the algebraic properties of certain operators.

In this work, we recast the intruder deduction problem as proof search
in sequent calculus. A sequent calculus formulation of Dolev-Yao
intruders was previously used by the first author in a formulation of
open bisimulation for the spi-calculus~\cite{Tiu07APLAS} to prove
certain results related to open bisimulation.  The current work takes
this idea further to include richer theories.  Part of our motivation
is to apply standard techniques, which have been well developed in the
field of logic and proof theory, to the intruder deduction problem. In
proof theory, sequent calculus is commonly considered a better
calculus for studying proof search and decidability of logical
systems, in comparison to natural deduction. This is partly due to the
so-called ``subformula'' property (that is, the premise of every
inference rule is made up of subterms of the conclusion of the rule),
which in most cases entails the decidability of the deductive
system. It is therefore rather curious that none of the existing works
on intruder deduction so far uses sequent calculus to structure proof
search.  We consider the ground intruder deduction problem (i.e.,
there are no variables in terms) under the class of {\em AC-convergent
  theories}. These are equational theories that can be turned into
convergent rewrite systems, modulo associativity and commutativity of
certain binary operators. Many important theories for intruder
deduction fall into this category, e.g., theories for
exclusive-or~\cite{Comon-Lundh03LICS,Chevalier03LICS}, Abelian
groups~\cite{Comon-Lundh03LICS}, and more generally, certain classes
of monoidal theories~\cite{Cortier07LPAR}.

We show two main results.
Firstly, we show that the
decidability of intruder deduction under AC-convergent theories can be
reduced, in polynomial time, 
to {\em elementary intruder deduction problems}, which involve
only the equational theories under consideration.
Secondly, we show that the intruder deduction problem for a
combination of disjoint theories $E_1,\ldots,E_n$ can be reduced, in polynomial time, 
to the elementary deduction problem {\em for each theory $E_i$.} This
means that if the elementary deduction problem is decidable for each
$E_i$, then the intruder deduction problem under the combined theory
is also decidable.
We note that these decidability results are not really new,
although there are slight differences and improvements over the
existing works (see Section~\ref{sec:rel}).
Our contribution is more of a methodological nature. We
arrive at these results using rather standard proof theoretical
techniques, e.g., {\em cut-elimination} and permutability of inference
rules, in a uniform and systematic way. In particular, we obtain
locality of proof systems for intruder deduction, which is one of the
main ingredients to decidability results in 
\cite{Comon-Lundh03LICS,Chevalier03LICS,Delaune06ICALP,Delaune06IPL},
for a wide range of theories that cover those studied in these works. 
Note that these works deal with a more difficult problem of deducibility
constraints, which models {\em active intruders}, whereas we currently
deal only with passive intruders.  As future work, we plan to
extend our approach to deal with active intruders. 

The remainder of the paper is organised as follows. 
Section~\ref{sec:intruder} presents two systems for intruder theories,
one in natural deduction and the other in sequent calculus, and show that the two systems are equivalent.
In Section~\ref{sec:cutelim}, the sequent system is shown to enjoy cut-elimination. 
In Section~\ref{sec:normal}, we show that cut-free sequent derivations can be transformed into a certain
normal form. Using this result, we obtain another ``linear'' sequent system, from which the polynomial
reducibility result follows. Section~\ref{sec:examples} discusses several example theories 
which can be found in the literature. Section~\ref{sec:comb} shows that the sequent system 
in Section~\ref{sec:intruder} can be extended to cover any combination of disjoint AC-convergent theories,
and the same decidability results also hold for this extension. 
Detailed proofs can be found in the appendix.

\section{Intruder deduction under AC convergent theories}
\label{sec:intruder}

We consider the following problem of formalising, given a set of messages
$\Gamma$ and a message $M$, whether $M$ can be synthesized from the messages
in $\Gamma.$ We shall write this judgment as $\Gamma \vdash M.$ This is sometimes called
the `ground reachability' problem or the `intruder deduction' problem in the literature. 

Messages are formed from names, variables and function symbols.  We
shall assume the following sets: a countably infinite set $\nameset$
of names ranged over by $a$, $b$, $c$, $d$, $m$ and $n$; a countably
infinite set $\varset$ of variables ranged over by $x$, $y$ and $z$;
and a finite set $\conset = \{\pubsym, \signsym, \blindsym,
\spr{.}{.}, \enc{.}{.}\}$ of symbols representing the {\em
  constructors}. Thus $ \pubsym$ is a public key constructor,
$\signsym$ is a public key encryption function, $\blindsym$ is the
blinding encryption function (as in
\cite{Fujioka92,KremerESOP05,Bernat06}), $\spr{.}{.}$ is a pairing
constructor, and $\enc{.}{.}$ is the Dolev-Yao symmetric encryption
function.  Additionally, we also assume a possibly empty equational
theory $E$, whose signature is denoted with $\Sigma_E.$ We require
that $\Sigma_C \cap \Sigma_E = \emptyset.$\footnote{This restriction
means that intruder theory such as homomorphic encryption is excluded.
Nevertheless, it still covers a wide range of intruder theories.}
Function symbols (including
constructors) are ranged over by $f$, $g$ and $h$.  The equational
theory $E$ may contain at most one associative-commutative function
symbol, which we denote with $\oplus$, obeying the standard
associative and commutative laws.  We restrict ourselves to equational
theories which can be represented by terminating and confluent rewrite
systems, modulo the associativity and commutativity of $\oplus.$ We
consider the set of messages generated by the following grammar
$$
\begin{array}{ll}
M, N := & a \mid x \mid \pub M \mid \sign M N \mid \blind M N 
\\      & \;\;\; \mid \spr M N \mid \enc M N \mid f(M_1,\ldots,M_k).
\end{array}
$$
The message $\pub M$ denotes the public key generated from a private key $M$; 
$\sign M N$ denotes a message $M$ signed with a private key $N$;
$\blind M N$ denotes a message $M$ encrypted with $N$ using a 
special blinding encryption; $\spr M N$ denotes a pair of messages;
and $\enc M N$ denotes a message $M$ encrypted with a key $N$ using
a Dolev-Yao symmetric encryption. 
The blinding encryption has a special property that it commutes with
the $\signsym$ operation, i.e., one can ``unblind'' a signed blinded
message $\sign {\blind M r} k$ using the blinding key $r$  
to obtain $\sign M k.$ This aspect of the blinding encryption is 
reflected in its elimination rules, as we shall see later.
We denote with $V(M)$ the set of variables occurring in $M$. A term $M$ is {\em ground}
if $V(M) = \emptyset.$ 
We shall be mostly concerned with ground terms, so unless
stated otherwise, we assume implicitly that terms are ground (the only exception 
is Proposition~\ref{prop:var-abs} and Proposition~\ref{prop:var-abs2}).

We shall use several notions of equality so we distinguish them using the following
notation: we shall write $M = N$ to denote syntactic equality, $M
\equiv N$ to denote equality modulo associativity and commutativity
(AC) of $\oplus$, and $M \approx_T N$ to denote equality modulo a
given equational theory $T$. We shall sometimes omit the subscript in
$\approx_T$ if it can be inferred from context.

Given an equational theory $E$, we denote with $R_E$ the set of
rewrite rules for $E$ (modulo AC). We write $M \to_{R_E} N$ when 
$M$ rewrites to $N$ using one application of a rewrite rule in $R_E$. The definition
of rewriting modulo AC is standard and is omitted here. 
The reflexive-transitive closure of $\to_{R_E}$  is denoted with $\rightarrow_{R_E}^*.$ 
We shall often remove the subscript $R_E$ when no confusion arises.
A term $M$ is in {\em $E$-normal form} if $M \not \to_{R_E} N$ for any $N.$
We write $\normE M E$ to denote the normal form of $M$ with respect to 
the rewrite system $R_E$, modulo commutativity and associativity of $\oplus$.
Again, the index $E$ is often omitted when it is clear which equational theory we
refer to.
This notation extends straightforwardly to sets, e.g., $\norm \Gamma$ denotes the set obtained
by normalising all the elements of $\Gamma.$
A term $M$ is said to be {\em headed by } a symbol $f$ if $M = f(M_1,\ldots,M_k)$.
$M$ is {\em guarded}
if it is either a name, a variable, or a term headed by a constructor. 
A term $M$ is an {\em $E$-alien term} if $M$ is headed by a symbol $f \not \in \Sigma_E.$
It is a {\em pure $E$-term} if it contains only symbols from $\Sigma_E$, names and variables.

A {\em context} is a term with holes. We denote with $C^k[]$ a context with $k$-hole(s).
When the number $k$ is not important or can be inferred from context, we shall write
$C[\ldots]$ instead. Viewing a context $C^k[]$ as a tree, each hole in the context
occupies a unique position among the leaves of the tree. We say that a hole occurrence
is the $i$-th hole of the context $C^k[]$ if it is the $i$-th hole encountered in
an inorder traversal of the tree representing $C^k[].$
An $E$-context is a context formed using only the function symbols in $\Sigma_E.$ 
We write $C[M_1,\ldots,M_k]$ to denote the term resulting 
from replacing the holes in the $k$-hole context $C^k[]$ with $M_1, \ldots, M_k,$
with $M_i$ occuping the $i$-th hole in $C^k[].$

\paragraph{Natural deduction and sequent systems}

The standard formulation of the judgment $\Gamma \vdash M$ is usually given in
terms of a natural-deduction style inference system, as shown in
Figure~\ref{fig:nat}. We shall refer to this proof system as $\Nscr$
and write $\Gamma \Vdash_\Nscr M$ if $\Gamma \vdash M$ is derivable in $\Nscr.$
The deduction rules for Dolev-Yao encryption is standard and can be found
in the literature, e.g., \cite{Boreale01ICALP,Comon-Lundh03LICS}. 
The blind signature rules are taken from the formulation given by Bernat and 
Comon-Lundh~\cite{Bernat06}. Note that the rule $\signsym_E$ assumes
implicitly that signing a message hides its contents. An alternative rule
without this assumption would be
$$
\infer[]
{\Gamma \vdash M}
{\Gamma \vdash \sign M K}
$$
The results of the paper also hold, with minor modifications, if we adopt this rule. 

\begin{figure}[t]
  \begin{tabular}[c]{l@{\extracolsep{0.8cm}}ll}
    $\infer[id]{\Gamma \vdash M}
               {M \in \Gamma}$
  & $\infer[e_E]{\Gamma \vdash M}
                {\Gamma \vdash \enc M K & \Gamma \vdash K}$
  & $\infer[e_I]{\Gamma \vdash \enc M K}
                {\Gamma \vdash M & \Gamma \vdash K}$
\\[1em]
    $\infer[p_E]{\Gamma \vdash M}
                {\Gamma \vdash \spr M N}$
  & $\infer[p_E]{\Gamma \vdash N}
                {\Gamma \vdash \spr M N}$
  & $\infer[p_I]{\Gamma \vdash \spr M N}
                {\Gamma \vdash M & \Gamma \vdash N}$
\\[1em]
  \multicolumn{2}{c}{
      $\infer[\signsym_E]{\Gamma \vdash M}
                         {\Gamma \vdash \sign M K & \Gamma \vdash \pub K}$
    }
  &  $\infer[\signsym_I]{\Gamma \vdash \sign M K}
                       {\Gamma \vdash M & \Gamma \vdash K}$
\\[1em]
    \multicolumn{2}{c}{
    $\infer[\blindsym_{E1}]{\Gamma \vdash M}
                          {\Gamma \vdash \blind M K & \Gamma \vdash K}$
    }
  &
    $\infer[\blindsym_I]{\Gamma \vdash \blind M K}
                        {\Gamma \vdash M & \Gamma \vdash K}$
\\[1em]
    \multicolumn{3}{c}{
     $\infer[\blindsym_{E2}]{\Gamma \vdash \sign M K}
                       {\Gamma \vdash \sign {\blind M R} K & \Gamma \vdash R}$
    }
\\[1em]
    \multicolumn{2}{c}{
    $\infer[f_I, \hbox{ where } f \in \Sigma_E]
           {\Gamma \vdash f(M_1,\ldots,M_n)}
           {\Gamma \vdash M_1 & \cdots & \Gamma \vdash M_n}$
    }
  &
    \multicolumn{1}{c}{
    $\infer[\approx, \hbox{ where $M \approx_{E} N$}]
                          {\Gamma \vdash M}
                          {\Gamma \vdash N}$
    }
  \end{tabular}
\caption{System $\Nscr$: a natural deduction system for intruder deduction}
\label{fig:nat}
\end{figure}

A sequent $\Gamma \vdash M$ is in {\em normal form} if $M$ and all the terms in $\Gamma$ are
in normal form. Unless stated otherwise, in the following we assume that sequents
are in normal form.
The sequent system for intruder deduction, under the equational theory $E$, 
is given in Figure~\ref{fig:msg}. We refer to this sequent system as $\Sscr$
and write $\Gamma \Vdash_\Sscr M$ to denote the fact that the sequent
$\Gamma \vdash M$ is derivable in $\Sscr.$

Unlike natural deduction rules, sequent rules also allow introduction of terms on
the left hand side of the sequent. 
The rules $p_L,$ $e_L,$ $\signsym_L,$ $\blindsym_{L1},$ $\blindsym_{L2},$ and  $gs$ are called
{\em left introduction rules} (or simply {\em left rules}),
and the rules $p_R, e_R, \signsym_R, \blindsym_R$ are called {\em right introduction
rules} (or simply, {\em right rules}).
Notice that the rule $gs$ is very similar to $cut$, except that we have the
proviso that $A$ is a subterm of a term in the lower sequent. This is sometimes
called {\em analytic cut} in the proof theory literature. Analytic cuts are 
not problematic as far as proof search is concerned, since it still obeys the
sub-formula property. 

We need the rule $gs$ because we do not have introduction rules for
function symbols in $\Sigma_E$, in contrast to natural deduction. This
rule is needed to ``abstract'' $E$-alien subterms in a sequent (in the
sense of the variable abstraction technique common in unification theory, see
e.g., \cite{Schmidt-Schauss89,Baader96JSC}), which is needed to prove
that the cut rule is redundant.  For example, let $E$ be a theory containing 
only the associativity and the commutativity axioms for $\oplus$. 
Then the sequent ~ $ a, b \vdash
\spr a b \oplus a $ should be provable without cut. Apart from the
$gs$ rule, the only other way to prove this is by using the $id$
rule. However, $id$ is not applicable, since no $E$-context
$C[...]$ can obey $C[a,b] \approx
\spr a b \oplus a$ 
because
$E$-contexts can contain only symbols from
$\Sigma_E$ and thus cannot contain $\spr . .$. 
Therefore we need to ``abstract'' the term $\spr a
b$ in the right hand side, via the $gs$ rule:
$$
\infer[gs]
{a, b \vdash \spr a b \oplus a}
{
  \infer[p_R]
  {a, b \vdash \spr a b}
  {\infer[id]{a,b\vdash a}{} & \infer[id]{a,b \vdash b}{}}
 & 
  \infer[id]
  {a, b, \spr a b \vdash \spr a b \oplus a}
  {}
}
$$
The third $id$ rule instance (from the left) is valid because we have
$C[\spr a b, a] \equiv \spr a b \oplus a$, where
$C[.,.] = [.] \oplus [.].$ 

\begin{figure}[t]
  \begin{tabular}[c]{ll}
   $\infer[id]{\Gamma \vdash M}
              {
              \begin{array}{c}
               M \approx_{E} C[M_1,\ldots,M_k] \\
                \hbox{$C[\ ]$ an $E$-context, and $M_1,\ldots,M_k \in \Gamma$}
              \end{array}
              }$
  & $\infer[cut]{\Gamma \vdash T}
                {\Gamma \vdash M
                 & \Gamma, M \vdash T}$
\\[1em]
  $\infer[p_L]{\Gamma, \spr M N \vdash T}
              {\Gamma, \spr M N, M, N \vdash T}$
  & $\infer[p_R]{\Gamma \vdash \spr M N}
                {\Gamma \vdash M & \Gamma \vdash N}$
\\[1em]
  $\infer[e_L]{\Gamma, \enc M K \vdash N}
              {\Gamma, \enc M K \vdash K
                & \Gamma, \enc M K, M, K \vdash N}$
  & $\infer[e_R]{\Gamma \vdash \enc M K}
                {\Gamma \vdash M & \Gamma \vdash K}$
\\[1em]
  $\infer[\signsym_L, K \equiv L]{\Gamma, \sign M K, \pub L \vdash N}
                                 {\Gamma, \sign M K, \pub L, M \vdash N}$
  & $\infer[\signsym_R]{\Gamma \vdash \sign M K}
                       {\Gamma \vdash M & \Gamma \vdash K}$
\\[1em]
  $\infer[\blindsym_{L1}]{\Gamma, \blind M K \vdash N}
                        {\Gamma, \blind M K \vdash K 
                         & \Gamma, \blind M K, M, K \vdash N}$
  & $\infer[\blindsym_R]{\Gamma \vdash \blind M K}
                       {\Gamma \vdash M & \Gamma \vdash K}$
\\[1em]
  \multicolumn{2}{c}{
    $\infer[\blindsym_{L2}]{\Gamma, \sign {\blind M R} K \vdash N}
                          {\Gamma, \sign {\blind M R} K \vdash R
                           &
                           \Gamma, \sign {\blind M R} K, \sign M K, R \vdash N}$
  }
\\[1em]
  \multicolumn{2}{c}{
   $\infer[gs, \hbox{$A$ is a guarded subterm of $\Gamma \cup \{M\}$}]
           {\Gamma \vdash M}
           {\Gamma \vdash A & \Gamma, A \vdash M}$
  }
\end{tabular}
\caption{System $\Sscr$: a sequent system for intruder deduction. }
\label{fig:msg}
\end{figure}

Provability in the natural deduction system and in the sequent system
are related via the standard translation, i.e., right rules in sequent calculus
correspond to introduction rules in natural deduction and left rules corresponds
to elimination rules. The straightforward 
translation from natural deduction to sequent calculus uses the cut rule.

\begin{proposition}
\label{prop:S-equal-N}
The judgment $\Gamma \vdash M$ is provable in the natural deduction system $\Nscr$ 
if and only if $\norm \Gamma \vdash \norm M$ is provable in the sequent system $\Sscr$.
\end{proposition}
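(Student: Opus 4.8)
The plan is to prove the two directions separately, in each case by induction on the derivation, and to isolate the handling of normalisation into two auxiliary lemmas. The first is \emph{weakening} for $\Sscr$: if $\Gamma \Vdash_\Sscr M$ then $\Gamma, \Delta \Vdash_\Sscr M$, which holds by a straightforward induction since every rule is monotone in the left context and the side conditions of $id$ and $gs$ survive the addition of terms. The second is a \emph{transitivity} (composition) principle for $\Nscr$: if $\Delta \Vdash_\Nscr T$ and $\Gamma \Vdash_\Nscr D$ for every $D \in \Delta$, then $\Gamma \Vdash_\Nscr T$. This is especially clean because every rule of $\Nscr$ carries the same context in its premises and conclusion, so a derivation from $\Delta$ can be rebased onto $\Gamma$ by replacing each leaf $id$ on a $D \in \Delta$ with the supplied derivation of $\Gamma \vdash D$. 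Throughout I will freely use the standard consequences of termination and confluence of $R_E$ modulo AC: that $M \approx_E N$ implies $\norm M = \norm N$, that normalisation commutes with the constructors (e.g. $\norm{\enc M K} = \enc{\norm M}{\norm K}$ and $\norm{\sign{\blind M R}{K}} = \sign{\blind{\norm M}{\norm R}}{\norm K}$), and that subterms of a normal term are normal.

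For the direction $\Nscr \Rightarrow \Sscr$ I will prove the slightly stronger statement that $\Gamma \Vdash_\Nscr M$ implies $\norm\Gamma \Vdash_\Sscr \norm M$ for \emph{arbitrary}, not necessarily normal, $\Gamma$ and $M$; this generalisation is essential, since a natural deduction derivation of a normal sequent may pass through non-normal intermediate sequents. The induction is on the $\Nscr$ derivation. The $id$ and $\approx$ cases are immediate (for $\approx$ the premise and conclusion share a normal form). Each introduction rule $e_I, p_I, \signsym_I, \blindsym_I$ maps directly to the corresponding right rule of $\Sscr$ after pushing normalisation through the constructor. For $f_I$ with $f \in \Sigma_E$, which has no matching right rule, I first cut the normalised premises $\norm{M_1}, \ldots, \norm{M_n}$ into the antecedent (using weakening on the derivations given by the induction hypothesis) and then close the branch with $id$, taking the $E$-context $C[\,] = f([\,],\ldots,[\,])$, which is legitimate because $\norm{f(M_1,\ldots,M_n)} \approx_E f(\norm{M_1},\ldots,\norm{M_n})$. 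The elimination rules are handled uniformly through $cut$: to simulate $e_E$ I cut in $\enc{\norm M}{\norm K}$ and discharge the right premise with $e_L$ followed by $id$; $\signsym_E$ cuts in both $\sign{\norm M}{\norm K}$ and $\pub{\norm K}$ before applying $\signsym_L$ (whose AC side condition holds trivially), and $\blindsym_{E2}$ cuts in $\sign{\blind{\norm M}{\norm R}}{\norm K}$ before applying $\blindsym_{L2}$.

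For the direction $\Sscr \Rightarrow \Nscr$ I will show that for normal $\Delta, T$ we have $\Delta \Vdash_\Sscr T \Rightarrow \Delta \Vdash_\Nscr T$, by induction on the $\Sscr$ derivation. The right rules map back to the matching $\Nscr$ introduction rules. Every left rule, together with $cut$ and $gs$, is absorbed by the transitivity lemma: in each case the terms newly appearing on the left of a premise are $\Nscr$-derivable from those on the left of the conclusion --- for $e_L$, $M$ follows from $\enc M K$ and $K$ by $e_E$; for $\blindsym_{L2}$, $\sign M K$ follows from $\sign{\blind M R}{K}$ and $R$ by $\blindsym_{E2}$; for $\signsym_L$ the side condition $K \equiv L$ gives $\pub K \approx_E \pub L$, so $\pub K$ is recovered from $\pub L$ by $\approx$ before applying $\signsym_E$ --- after which transitivity rebases the induction hypothesis onto the conclusion's antecedent. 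The $id$ case builds $C[M_1,\ldots,M_k]$ from the $M_i \in \Delta$ using the introduction rules $f_I$ (legal as $C$ is an $E$-context) and then applies $\approx$. To recover the proposition from this normal-sequent form, I invoke a normalisation lemma for $\Nscr$, namely $\Gamma \Vdash_\Nscr M$ iff $\norm\Gamma \Vdash_\Nscr \norm M$, which follows from the transitivity lemma and the $\approx$ rule.

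I expect the main obstacle to be conceptual rather than computational: getting the interface with normalisation right. One must state the forward direction over arbitrary sequents (so the induction survives non-normal intermediate sequents) while stating the backward direction over normal sequents (matching the standing convention on $\Sscr$), and then bridge the two via the $\Nscr$ normalisation lemma; and one must check in each elimination and left-rule case that normalisation genuinely commutes with the relevant constructor, so that the cut formula or the $id$ context lines up syntactically. Once these invariants are fixed, the individual rule simulations are short and mechanical.
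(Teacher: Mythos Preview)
Your proposal is correct and matches the paper's approach: both directions proceed by induction on derivations, with weakening plus $cut$ handling the forward simulation of eliminations and $f_I$, and a leaf-replacement argument handling the backward simulation of left rules. Your transitivity lemma for $\Nscr$ is precisely the paper's inline construction (replacing $id$ leaves of the premise's $\Nscr$-derivation by suitable subderivations) abstracted once rather than repeated per case, and your explicit normalisation lemma for $\Nscr$ cleanly closes a step the paper leaves implicit when passing from $\norm\Gamma \Vdash_\Nscr \norm M$ back to $\Gamma \Vdash_\Nscr M$.
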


\section{Cut elimination for $\Sscr$}
\label{sec:cutelim}

We now show that the cut rule is redundant for $\Sscr$. 
\begin{definition}
An inference rule $R$ in a proof system ${\cal D}$ is {\em admissible for ${\cal D}$} if
for every sequent $\Gamma \vdash M$ derivable in ${\cal D}$, there is a derivation
of the same sequent in ${\cal D}$ without instances of $R$.
\end{definition}
The {\em cut-elimination} theorem for $\Sscr$ 
states that the cut rule is admissible for $\Sscr$. Before we proceed
with the main cut elimination proof, we first prove a basic property
of equational theories and rewrite systems, which is concerned with a
technique called {\em variable abstraction} \cite{Schmidt-Schauss89,Baader96JSC}.

Given derivation $\Pi$, the {\em height} of the derivation, denoted by $|\Pi|$, 
is the length of a longest branch in $\Pi.$ Given a normal term $M$, 
the {\em size} $|M|$ of $M$ is the number of function 
symbols, names and variables appearing in $M.$

In the following, we consider slightly more general equational
theories than in the previous section:  
each $AC$ theory $E$ can be a theory obtained from a disjoint combination of 
$AC$ theories $E_1, \ldots, E_k$, where each $E_i$ has at most
one AC operator $\oplus_i.$ This allows us to reuse the results for a 
more general case later.

\begin{definition}
  Let $E$ be a disjoint combination of AC convergent theories
  $E_1,$ $\dots,$ $E_n$.  A term $M$ is a {\em quasi-$E_i$ term} if every
  $E_i$-alien subterm of $M$ is in $E$-normal form.
\end{definition}

For example, let $E = \{h(x,x) \approx x \}$.  
Then $h(\spr a b, c)$ is a quasi $E$-term, whereas $h(\spr a b, \spr
{h(a,a)} b)$ is not, since its $E$-alien subterm $\spr {h(a,a)} b$ is not
in its $E$-normal form $\spr{a}{b}$. Obviously, any $E$ normal term is a quasi $E_i$ term.

In the following, given an equational theory $E$, we assume the
existence of a function $v_E$, which assigns a variable from $\varset$
to each ground term such that $v_E(M) = v_E(N)$ if and only if $M
\approx_E N.$ In other words, $v_E$ assigns a unique variable to each
equivalence class of ground terms induced by $\approx_E.$

\begin{definition}
Let $E$ be an equational theory obtained by disjoint combination of 
AC theories $E_1, \ldots, E_n$. 
The {\em $E_i$ abstraction function} $F_{E_i}$ is a function 
mapping ground terms to pure $E_i$ terms, defined recursively as follows:
$$
F_{E_i}(u) = 
\left\{
\begin{array}{ll}
u, & \hbox{ if $u$ is a name, }\\
f(F_{E_i}(u_1), \ldots, F_{E_i}(u_k)), & \hbox{if $u=f(u_1,\ldots,u_k)$ and $f \in \Sigma_{E_i}$,} \\
v_E(u), & \hbox{ otherwise.}
\end{array}
\right.
$$
\end{definition}

It can be easily shown that the function $F_{E_i}$ preserves the equivalence relation
$\equiv$. That is, if $M \equiv  N$ then $F_{E_i}(M) \equiv F_{E_i}(N)$.

\begin{proposition}
\label{prop:var-abs}
Let $E$ be a disjoint combination of $E_1,\ldots,E_n$. 
If $M$ is a quasi $E_i$ term and $M \to^*_{R_E} N$, then
$N$ is a quasi $E_i$ term and $F_{E_i}(M) \to^*_{R_E} F_{E_i}(N).$
\end{proposition}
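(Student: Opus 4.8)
The plan is to prove the single-step version of the statement and then lift it to $\to^*_{R_E}$ by a routine induction on the length of the reduction, using transitivity of $\to^*_{R_E}$ and the fact that each intermediate term is again a quasi-$E_i$ term. So assume $M \to_{R_E} N$ in one step, via a rule $l \to r$ taken from some constituent theory $E_j$, applied modulo AC at a position $p$ with matching substitution $\sigma$, so that $M \equiv C[l\sigma]$ and $N \equiv C[r\sigma]$ for a one-hole context $C$.

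The first key step is to locate the redex. Since $l$ is a pure $E_j$ term, $l\sigma$ is headed by a symbol in $\Sigma_{E_j}$, and I claim $j = i$. Suppose not; then by disjointness of the signatures the head of $l\sigma$ lies outside $\Sigma_{E_i}$, so $l\sigma$ is an $E_i$-alien subterm of $M$ and is therefore contained in a maximal $E_i$-alien subterm $S$ of $M$, namely the subterm occurring at the topmost non-$E_i$-headed position on the path from the root to $p$. Since $S$ contains the redex $l\sigma$, it is not in $E$-normal form, contradicting the hypothesis that $M$ is a quasi-$E_i$ term. Hence $j = i$, and moreover every symbol on the path from the root of $M$ down to $p$ lies in $\Sigma_{E_i}$; that is, the redex sits inside the maximal pure-$E_i$ context at the top of $M$.

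Next I would show that $N$ is a quasi-$E_i$ term by inspecting its $E_i$-alien subterms. Any $E_i$-alien subterm $s$ of $N \equiv C[r\sigma]$ occurs either inside the context part $C$, in which case it is already an $E_i$-alien subterm of $M$, or inside the substituted material, in which case (using $V(r) \subseteq V(l)$) it is a subterm of some $\sigma(x)$ and hence again an $E_i$-alien subterm of $M$. In either case $s$ is in $E$-normal form by the assumption on $M$, so every $E_i$-alien subterm of $N$ is $E$-normal, as required.

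Finally, for the abstraction claim I would push $F_{E_i}$ through the redex. Setting $\sigma'(x) := F_{E_i}(\sigma(x))$ and using that $F_{E_i}$ acts as a $\Sigma_{E_i}$-homomorphism (recursing transparently through $\Sigma_{E_i}$-symbols and names), one gets $F_{E_i}(l\sigma) = l\sigma'$ and $F_{E_i}(r\sigma) = r\sigma'$, and, since the path to $p$ is pure $E_i$, $F_{E_i}(C[t]) = C'[F_{E_i}(t)]$ for the abstracted context $C' = F_{E_i}(C[\cdot])$. Combining these with the fact that $F_{E_i}$ preserves $\equiv$ (remarked just before the proposition) yields $F_{E_i}(M) \equiv C'[l\sigma']$ and $F_{E_i}(N) \equiv C'[r\sigma']$, so the same rule $l \to r$ applies with substitution $\sigma'$ and gives $F_{E_i}(M) \to_{R_E} F_{E_i}(N)$, a single $\to^*_{R_E}$ step. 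The main obstacle is precisely this last step: I must verify that replacing each $\sigma(x)$ by its abstraction does not destroy the AC-match of $l$. This is where the defining property of $v_E$ is essential—two $E_i$-alien arguments that the match forces to coincide are AC-equal, hence $\approx_E$-equal, hence receive the same variable under $v_E$, so the pattern still matches $l\sigma'$ modulo AC. Getting this coherent, together with the bookkeeping of maximal pure-$E_i$ contexts modulo AC, is the part needing the most care; the remaining cases are routine.
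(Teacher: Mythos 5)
Your proof is correct and follows essentially the same route as the paper's: a single-step lemma lifted by induction on the reduction length, with the redex located in the pure-$E_i$ part of $M$ via normality of $E_i$-alien subterms, and $F_{E_i}$ pushed through the rule instance as a $\Sigma_{E_i}$-homomorphism (the paper organises the redex-location step as a structural induction on $M$ rather than your explicit path analysis, but the content is the same). You even make explicit two points the paper leaves implicit, namely that the applied rule must come from $E_i$ itself and that abstraction preserves the AC-match.
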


\begin{proposition}
\label{prop:var-abs2}
Let $E$ be a disjoint combination of $E_1,\ldots,E_n$. 
If $M$ and $N$ are quasi $E_i$ terms and
$F_{E_i}(M) \to^*_{R_E} F_{E_i}(N)$, then 
$M \to^*_{R_E} N.$
\end{proposition}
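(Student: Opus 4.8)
The plan is to prove Proposition~\ref{prop:var-abs2} as a converse to Proposition~\ref{prop:var-abs}, exploiting the fact that $F_{E_i}$ is essentially injective on the relevant equivalence classes. The key observation is that the abstraction function $F_{E_i}$ replaces each maximal $E_i$-alien subterm $u$ by the variable $v_E(u)$, and since $v_E(u) = v_E(u')$ if and only if $u \approx_E u'$, the map $F_{E_i}$ can be ``inverted'' up to $\approx_E$: from the pure $E_i$ term $F_{E_i}(M)$ together with the correspondence between fresh variables and the alien subterms they abstract, one can recover $M$ up to $E$-equivalence. The hypothesis that both $M$ and $N$ are quasi-$E_i$ terms is what makes this inversion well-behaved, since it guarantees that every $E_i$-alien subterm of $M$ (and of $N$) is already in $E$-normal form, so distinct equivalence classes correspond to distinct variables in a canonical way.

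First I would set up the inverse substitution. Let $\sigma$ be the substitution mapping each variable $v_E(u)$ (for $u$ an $E_i$-alien subterm occurring in $M$ or $N$) back to $u$ itself; this is well-defined precisely because $v_E$ is injective on $\approx_E$-classes and the alien subterms are in normal form. By construction $F_{E_i}(M)\sigma \approx_E M$ and $F_{E_i}(N)\sigma \approx_E N$; in fact, because the quasi-$E_i$ condition forces the alien subterms to be normal and the combination is disjoint, I expect these to be syntactic (or at least $\equiv$) equalities rather than merely $\approx_E$. The core step is then to show that the rewrite sequence $F_{E_i}(M) \to^*_{R_E} F_{E_i}(N)$ lifts through $\sigma$ to a rewrite sequence $M \to^*_{R_E} N$. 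Since $R_E$ is closed under substitution and rewriting modulo AC is stable under instantiation, applying $\sigma$ to each step of the derivation $F_{E_i}(M) \to^*_{R_E} F_{E_i}(N)$ yields $F_{E_i}(M)\sigma \to^*_{R_E} F_{E_i}(N)\sigma$, which gives the desired conclusion once the two endpoints are identified with $M$ and $N$.

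The main obstacle, and the point needing the most care, is the disjointness argument ensuring that rewrite steps in $F_{E_i}(M) \to^*_{R_E} F_{E_i}(N)$ cannot ``cross'' the variable boundaries introduced by abstraction. Because $F_{E_i}(M)$ is a pure $E_i$ term, an a priori worry is that a rewrite step using a rule from some $E_j$ with $j \neq i$ could apply at a position spanning an abstracted variable and part of the surrounding $E_i$-context. I would rule this out by invoking disjointness of the signatures: a redex for an $R_{E_j}$ rule must lie entirely within a subterm whose head symbol is in $\Sigma_{E_j}$, and in the abstracted term such subterms are exactly the ones hidden inside the variables, so any genuine rewrite in the pure term uses only $R_{E_i}$ rules applied to the $E_i$-skeleton. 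Consequently, under $\sigma$ these steps correspond to rewrites confined to the $E_i$-part of $M$, while the alien subterms are carried along unchanged (they are already normal, so no rewriting occurs inside them), and the quasi-$E_i$ property is preserved throughout by Proposition~\ref{prop:var-abs}. I would formalise this by an induction on the length of the rewrite sequence $F_{E_i}(M) \to^*_{R_E} F_{E_i}(N)$, using Proposition~\ref{prop:var-abs} applied to each intermediate term to maintain the quasi-$E_i$ invariant and to keep the abstraction function synchronised with the concrete rewrite steps.
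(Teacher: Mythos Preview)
Your core argument (the second paragraph) is correct and is exactly the paper's proof: build the inverse substitution $\sigma$ sending each abstraction variable back to a chosen normal representative of its $\approx_E$-class, check that $F_{E_i}(M)\sigma \equiv M$ and $F_{E_i}(N)\sigma \equiv N$ using the quasi-$E_i$ hypothesis, and then lift each rewrite step through $\sigma$ by stability of rewriting under instantiation. The paper does precisely this for the one-step case, composing the rule's matching substitution $\theta$ with $\sigma$, and notes (as you should too) that every variable in $F_{E_i}(N)$ already occurs in $F_{E_i}(M)$, so a single $\sigma$ suffices for both endpoints.

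Your third paragraph, however, is fighting a phantom. Once you have stability under instantiation, it is irrelevant which theory a rule comes from or whether a redex ``spans'' an abstracted variable: if $s \to_{R_E} t$ then $s\sigma \to_{R_E} t\sigma$, unconditionally. There is nothing to rule out, and signature disjointness plays no role at this point; it is used only to ensure that $F_{E_i}(M)\sigma \equiv M$ (alien subterms are normal, hence representatives are unique up to $\equiv$), which you already handled. Likewise, invoking Proposition~\ref{prop:var-abs} on intermediate terms is unnecessary: the intermediate $P_j$ in $F_{E_i}(M) \to P_1 \to \cdots \to F_{E_i}(N)$ need not be abstractions of anything, since $P_j\sigma \to_{R_E} P_{j+1}\sigma$ holds by stability alone. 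You can simply drop the third paragraph.
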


We now show some important proof transformations needed to prove cut elimination,
i.e., in an inductive argument to reduce the size of cut terms. 
In the following, when we write that a sequent $\Gamma \vdash M$ is derivable, 
we mean that it is derivable in
the proof system $\Sscr$, with a fixed AC theory $E$.

\begin{lemma}
\label{lm:equiv}
Let $\Pi$ be a derivation of $M_1, \ldots, M_k \vdash N.$ Then for any
$M_1',$ $\ldots,$ $M_k'$ and $N'$
such that $M_i \equiv M_i'$ and $N \equiv N'$, there is a derivation $\Pi'$
of $M_1', \ldots, M_k' \vdash N'$ such that $|\Pi| = |\Pi'|.$
\end{lemma}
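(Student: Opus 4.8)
The plan is to proceed by induction on the height $|\Pi|$ of the derivation, analysing the last rule applied. The crucial structural observation is that every rule of $\Sscr$ maps, under the replacement of terms by $\equiv$-equivalent ones, to an instance of the \emph{same} rule whose premises are again componentwise $\equiv$-equivalent to the original premises; hence no rule application is ever added or deleted, and the equality $|\Pi| = |\Pi'|$ follows immediately from the induction. The real work is only to check, rule by rule, that (i) the side conditions remain satisfied after the replacement and (ii) the principal terms can be decomposed compatibly, so that the induction hypothesis applies to the premises.

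First I would record two elementary facts about $\equiv$ that dispatch most cases. Since no constructor $c \in \Sigma_C$ is an AC operator (as $\Sigma_C \cap \Sigma_E = \emptyset$), the relation $\equiv$ is a congruence contained in $\approx_E$, and it preserves the head symbol of any term not headed by an AC operator: if $M$ is headed by a constructor $c$ and $M \equiv M'$, then $M' = c(M_1',\ldots,M_n')$ with $M_i \equiv M_i'$. The congruence property together with $\equiv\,\subseteq\,\approx_E$ immediately handles $id$: from $N \approx_E C[M_1,\ldots,M_k]$ we get $N' \equiv N \approx_E C[M_1,\ldots,M_k] \equiv C[M_1',\ldots,M_k']$, and the normal-form proviso survives because $E$-normal forms are closed under $\equiv$. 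Head-symbol preservation handles all right rules ($p_R, e_R, \signsym_R, \blindsym_R$) and all constructor-decomposing left rules ($p_L, e_L, \signsym_L, \blindsym_{L1}, \blindsym_{L2}$): the principal term splits into $\equiv$-equivalent arguments, to which the induction hypothesis on the premise applies. For $\signsym_L$ the side condition $K \equiv L$ survives by transitivity, since the replacements $K',L'$ satisfy $K' \equiv K \equiv L \equiv L'$. The $cut$ case is routine: keep the cut term fixed and apply the induction hypothesis to both premises.

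The one rule whose side condition refers to subterm structure rather than to the head symbol is $gs$, and this is where I expect the main obstacle to lie. Here I would isolate the sublemma: \emph{if $A$ is a guarded subterm of a term $P$ and $P \equiv P'$, then $P'$ has a guarded subterm $A'$ with $A \equiv A'$.} Granting this, the $gs$ case goes through: given $\Gamma' \equiv \Gamma$ (componentwise) and $M' \equiv M$, apply the sublemma to the term of $\Gamma \cup \{M\}$ containing the guarded subterm $A$ to obtain a guarded subterm $A' \equiv A$ of $\Gamma' \cup \{M'\}$; the induction hypothesis on the two premises then yields $\Gamma' \vdash A'$ and $\Gamma', A' \vdash M'$, and a single $gs$ completes the derivation. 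The sublemma itself is the only genuinely technical point. I would prove it by induction on the length of an AC-rewrite sequence witnessing $P \equiv P'$, reducing to a single application of a commutativity or associativity axiom at some position $\pi$ headed by an AC operator: a guarded $A$ can never sit exactly at $\pi$ (its head is not an AC operator), so $A$ either lies strictly below $\pi$, where associativity and commutativity merely permute it and leave it intact, or it is disjoint from or strictly above $\pi$, where it is unchanged or changes only internally to an $\equiv$-equivalent $A'$ with the same constructor head. Chaining these single-step facts by transitivity of $\equiv$ yields the sublemma, and with it the $gs$ case and the whole induction.
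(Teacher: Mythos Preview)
Your proposal is correct and follows exactly the approach the paper indicates: the paper's proof consists of the single line ``By induction on $|\Pi|$,'' so your detailed case analysis simply spells out what the authors leave implicit. In particular, your identification of the $gs$ case as the only one requiring a nontrivial side-condition check (and the sublemma that guarded subterms are preserved under $\equiv$) is the right place to put the work, even though the paper treats the whole lemma as routine.
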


\begin{lemma}
\label{lm:decomp1}
Let $X$ and $Y$ be terms in normal form and let $f$ be a binary
constructor.  If $\Gamma, f(X,Y) \vdash M$ is cut-free derivable, then
so is $\Gamma, X, Y \vdash M$.
\end{lemma}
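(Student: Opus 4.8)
The plan is to argue by induction on the height $|\Pi|$ of the given cut-free derivation $\Pi$ of $\Gamma, f(X,Y) \vdash M$, transforming it into a cut-free derivation of $\Gamma, X, Y \vdash M$. Recall that $f$ ranges over the binary constructors $\spr{\cdot}{\cdot}$, $\enc{\cdot}{\cdot}$, $\signsym$ and $\blindsym$, and that each of these has both a right rule (which lets us rebuild $f(X,Y)$ from $X$ and $Y$) and a matching left rule. Since $\Pi$ is only required to be free of $cut$, the analytic cut $gs$ remains available, and this is what I would use to reintroduce $f(X,Y)$ where necessary. First I would dispatch the routine cases in which $f(X,Y)$ is not the principal term of the last rule of $\Pi$: here I apply the induction hypothesis to each premise (each has strictly smaller height and still contains $f(X,Y)$ in its antecedent) and reapply the same rule, checking that the side conditions---being stated in terms of the whole antecedent and of subterm/$E$-context relations---survive the replacement of $f(X,Y)$ by its smaller components $X,Y$, which needs only mild care since $X$ and $Y$ are subterms of $f(X,Y)$.

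The interesting principal cases split according to $f$. For $f = \spr{\cdot}{\cdot}$, $\enc{\cdot}{\cdot}$, $\signsym$ (via $\signsym_L$) and $\blindsym$ (via $\blindsym_{L1}$), the left rule that decomposes $f(X,Y)$ introduces into the main premise exactly the immediate subterms $X$ and $Y$ (for $\signsym_L$ only $X$, the key being already present as some $\pub L$ with $L \equiv Y$). Consequently the relevant premise has the form $\Gamma, X, Y, f(X,Y) \vdash M$, and applying the induction hypothesis to it removes $f(X,Y)$ and collapses the duplicated $X,Y$, yielding $\Gamma, X, Y \vdash M$ directly; the auxiliary key premise of $e_L$ and $\blindsym_{L1}$ is simply discarded.

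The $id$ case is where the variable abstraction machinery of Proposition~\ref{prop:var-abs} and Proposition~\ref{prop:var-abs2} enters. Here $M \approx_E C[M_1,\ldots,M_k]$ for an $E$-context $C$ and plugs $M_i \in \Gamma \cup \{f(X,Y)\}$. If $f(X,Y)$ is not among the plugs, the same instance of $id$ derives $\Gamma, X, Y \vdash M$. Otherwise, since $f$ is a constructor and hence $E$-alien, I would abstract each plug by its $v_E$-variable and use Proposition~\ref{prop:var-abs} to compute the normal form at the abstracted level. Either the variable abstracting $f(X,Y)$ survives in that normal form---so $f(X,Y)$ is a guarded subterm of $M$ and I can reintroduce it by $gs$, with left premise the right rule for $f$ applied to $X,Y$ and right premise the original $id$---or it cancels, in which case $M$ is $\approx_E$ to an $E$-context over the remaining plugs, all of which lie in $\Gamma$, so that $id$ applies directly.

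I expect the genuine obstacle to be the principal case of $\blindsym_{L2}$, where $f(X,Y) = \sign{\blind M R}{K}$ (so $f = \signsym$, $X = \blind M R$ and $Y = K$). Unlike the other left rules, $\blindsym_{L2}$ introduces $\sign M K$ and $R$ into its main premise rather than $X$ and $Y$, so the induction hypothesis does not close the goal immediately. The plan is: apply the induction hypothesis to both premises to obtain $\Gamma, X, Y \vdash R$ and $\Gamma, X, Y, \sign M K, R \vdash N$; then derive the goal $\Gamma, X, Y \vdash N$ by applying $\blindsym_{L1}$ to $X = \blind M R$, whose key premise is exactly the derived $\Gamma, X, Y \vdash R$ and whose main premise supplies $M$ and $R$ to the antecedent; and finally eliminate the spurious $\sign M K$ using that $M$ and $Y = K$ are now available, rebuilding it by $\signsym_R$ and removing it by a recursive appeal to the lemma. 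Because the composite $\sign M K$ is strictly smaller than $f(X,Y) = \sign{\blind M R}{K}$, this recursion is justified by taking the induction measure to be lexicographic in the size of the composite term and then the derivation height. Making the bookkeeping of this last step precise, together with the cancellation analysis in the $id$ case, is where the real work lies.
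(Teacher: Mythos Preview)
Your proposal is essentially correct and follows the same line as the paper's proof: the paper also argues by induction on $|f(X,Y)|$ with a sub-induction on $|\Pi|$ (your lexicographic measure), treats the $\blindsym_{L2}$ principal case exactly as you describe (inner induction on height for both premises, then outer induction on size to decompose the freshly created $\sign M K$, then close with $\blindsym_{L1}$), and handles $id$ via variable abstraction using Propositions~\ref{prop:var-abs} and~\ref{prop:var-abs2}.

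One small point to tighten in your $id$ case: your dichotomy asks whether the abstraction variable $x = v_E(f(X,Y))$ survives in $F_E(M)$, but you should instead ask whether $f(X,Y)$ occurs (up to $\equiv$) as a guarded subterm of $M$ \emph{or of some $M_i \in \Gamma$}. If $x$ disappears from $F_E(M)$ yet still occurs inside some $F_E(M_j)$, the substitution step you sketch does not go through cleanly, because instantiating $x$ also alters $F_E(M_j)$ and you can no longer invoke Proposition~\ref{prop:var-abs2} with the original $M_j$. The paper's fix is simply to put that situation into the $gs$ branch as well (since $f(X,Y)$ is then a guarded subterm of $\Gamma$), leaving for the substitution branch only the case where $x$ is absent from all of $F_E(M), F_E(M_1),\ldots,F_E(M_k)$; then replacing $x$ by a fresh name and finally by $X$ works verbatim. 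With this adjustment your argument coincides with the paper's.
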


The more interesting case in the proof of Lemma~\ref{lm:decomp1}
is when  $\Gamma, f(X,Y) \vdash M$ is proved by an application of the $id$
rule where $f(X,Y)$ is active. That is, we have
$
C[f(X,Y),M_1, \ldots, M_k] \approx_E M, 
$
where $M_1,\dots,M_k \in \Gamma$, 
for some $E$-context $C[..].$ Since $M$ is in normal form, we have
\begin{equation}
\label{eq1}
C[f(X,Y), M_1, \ldots, M_k] \rightarrow^* M.
\end{equation}
There are two cases to consider in the construction of a proof for
$\Gamma, X, Y \vdash M.$ If $f(X,Y)$ 
occurs as a subterm of $M$ or $\Gamma$, then we simply apply the $gs$
rule (bottom up) to abstract the term $f(X,Y)$ and then apply the
$id$ rule. 
Otherwise,  we use the variable abstraction techniques 
(Proposition~\ref{prop:var-abs} and  Proposition~\ref{prop:var-abs2}) 
to abstract $f(X,Y)$ from the rewrite steps (\ref{eq1}) above,
and then replace its abstraction with $X$ to obtain: 
$
C[X,M_1, \ldots, M_k] \rightarrow^* M.
$
That is, the $id$ rule is applicable to the sequent $\Gamma, X, Y \vdash M$,
with $X$ taking the role of $f(X,Y)$. 

\begin{lemma}
\label{lm:decomp2}
Let $X_1,\ldots, X_k$ be normal terms and let 
$\Pi$ be a cut-free derivation of $\Gamma, \norm{f(X_1,\ldots,X_k)} \vdash M$, where $f \in \Sigma_E.$
Then there exists a cut-free derivation $\Pi'$ of $\Gamma, X_1,\ldots,X_k \vdash M.$ 
\end{lemma}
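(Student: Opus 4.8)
The plan is to reduce everything to one observation: since $f \in \Sigma_E$, the term $f(X_1,\ldots,X_k)$ is nothing but the $E$-context $f(\,\cdot,\ldots,\cdot\,)$ with its holes filled by $X_1,\ldots,X_k$. Writing $N_0 = \norm{f(X_1,\ldots,X_k)}$, this gives $N_0 \approx_E f(X_1,\ldots,X_k)$, so a single instance of the $id$ rule already derives $\Gamma, X_1,\ldots,X_k \vdash N_0$ (take the $E$-context to be $f(\,\cdot,\ldots,\cdot\,)$, with $X_1,\ldots,X_k$ in its holes). I will also use the height-preserving admissibility of left weakening, which is immediate by induction on derivations because every rule of $\Sscr$ — in particular the side conditions of $id$ and $gs$ — survives enlarging the antecedent. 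Finally I record the structural fact that, because the rules of $R_E$ are built solely from $\Sigma_E$-symbols and variables, $E$-rewriting never creates a new $E$-alien subterm: it can only copy, permute or delete the maximal $E$-alien subterms occupying variable positions. Hence every guarded subterm of $N_0$ is already a subterm of some $X_j$; this can be extracted from the variable-abstraction machinery behind Proposition~\ref{prop:var-abs}.

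The argument then splits on the shape of $N_0$. If $N_0$ is guarded, no induction is needed: by the structural fact $N_0$ is a guarded subterm of the endsequent $\Gamma, X_1,\ldots,X_k \vdash M$, so the $gs$ rule applies with $N_0$ as the abstracted term. Its left premise $\Gamma, X_1,\ldots,X_k \vdash N_0$ is discharged by the $id$ instance above, and its right premise $\Gamma, X_1,\ldots,X_k, N_0 \vdash M$ is obtained by weakening the given derivation $\Pi$ of $\Gamma, N_0 \vdash M$. This yields the desired cut-free derivation in three lines.

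If instead $N_0$ is headed by a symbol of $\Sigma_E$, I argue by induction on $|\Pi|$, replacing the designated occurrence of $N_0$ by $X_1,\ldots,X_k$ and inspecting the last rule. The key case is $id$: when $M \approx_E C[\ldots,N_0,\ldots]$ with the remaining holes filled from $\Gamma$, I replace each $N_0$-hole of the $E$-context $C$ by $f(\,\cdot,\ldots,\cdot\,)$, producing an $E$-context $C'$ with $M \approx_E C'[\ldots,X_1,\ldots,X_k,\ldots]$ all of whose holes are now filled from $\Gamma \cup \{X_1,\ldots,X_k\}$, so $id$ applies directly (and if $N_0$ is unused in the instance there is nothing to do). Because an $E$-headed $N_0$ can never be principal in a left rule, in every other case $N_0$ occurs passively, and I simply apply the induction hypothesis to the premise(s) and reapply the same rule; the one delicate point is a final $gs$ that abstracts a guarded term $A$ lying inside $N_0$ rather than inside $\Gamma$ or $M$, where the structural fact again guarantees that $A$ is a subterm of some $X_j$ and hence the $gs$ side condition is preserved after the substitution.

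The step I expect to be the main obstacle is this structural fact together with the bookkeeping of the $gs$ side condition: one has to be certain that normalising $f(X_1,\ldots,X_k)$ produces no guarded subterm foreign to the $X_j$, and that this invariant is threaded correctly through the inductive reapplications of $gs$. By contrast the $id$-absorption is the clean core of the proof, and — unlike the corresponding case of Lemma~\ref{lm:decomp1}, where an alien constructor must be abstracted away — it requires no variable abstraction here, since $f \in \Sigma_E$ is merged into the ambient $E$-context rather than removed from it.
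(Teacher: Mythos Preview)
Your argument is correct and rests on the same two ingredients as the paper's: the variable-abstraction fact that every guarded subterm of $N_0=\norm{f(X_1,\ldots,X_k)}$ already occurs (up to $\equiv$) inside some $X_j$, and the observation that $f(X_1,\ldots,X_k)$ is itself an $E$-context instance, so the $id$ rule immediately yields $\Gamma,X_1,\ldots,X_k\vdash N_0$. The paper proceeds by a single induction on $|\Pi|$ and treats the cases where $N_0$ is principal in some left rule (hence guarded) one constructor at a time; in each such case it builds exactly your $gs$-over-$id$-plus-weakened-$\Pi$ derivation, without actually invoking the inductive hypothesis. Your reorganisation---splitting on the head symbol of $N_0$ first and disposing of the guarded case uniformly in one shot---is therefore a genuine streamlining: it makes explicit that no induction is needed when $N_0$ is guarded, and in the $E$-headed branch it isolates the one nontrivial inductive step (namely $gs$ abstracting a term inside $N_0$) rather than spreading the same idea across several constructor cases. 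Your $id$-absorption for the $E$-headed case is also a point the paper simply labels ``trivial'' without spelling out.

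One small correction: your structural fact gives only that a guarded subterm $A$ of $N_0$ is $\equiv$-equivalent to a subterm $A'$ of some $X_j$, not literally equal. The $gs$ side condition demands an actual subterm of the conclusion, so when you reapply $gs$ you must use $A'$ rather than $A$, and then invoke Lemma~\ref{lm:equiv} to transport the premise derivations across $A\equiv A'$. The paper does exactly this; just make the appeal to Lemma~\ref{lm:equiv} explicit in both your guarded case and your delicate $gs$ case.
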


\begin{lemma}
\label{lm:decomp3}
Let $M_1, \ldots, M_k$ be terms in normal form and let $C[\ldots]$ be
a $k$-hole $E$-context. If $\Gamma, \norm{C[M_1,\ldots,M_k]} \vdash M$
is cut-free derivable, then so is $\Gamma, M_1, \ldots, M_k \vdash M$.
\end{lemma}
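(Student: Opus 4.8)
The plan is to prove Lemma~\ref{lm:decomp3} by induction on the size of the $E$-context $C$ (its number of $\Sigma_E$-symbols), stripping one constructor at a time and appealing to Lemma~\ref{lm:decomp2} at each layer. Lemma~\ref{lm:decomp2} is precisely the single-symbol instance of what we want: it replaces a normalised application $\norm{f(X_1,\ldots,X_k)}$ of an $E$-symbol $f$ on the left by its arguments. Lemma~\ref{lm:decomp3} is the iterated version for an arbitrary $E$-context, so the natural strategy is an induction that peels the context symbol by symbol, feeding each peeled layer to Lemma~\ref{lm:decomp2} and each residual subcontext to the induction hypothesis.

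For the base case $C$ is a single hole, so $C[M_1] = M_1$; since $M_1$ is already in normal form, $\norm{C[M_1]} = M_1$ and the conclusion $\Gamma, M_1 \vdash M$ is literally the hypothesis. For the inductive step write $C = f(C_1,\ldots,C_n)$ with $f \in \Sigma_E$, where each $C_i$ is an $E$-context whose holes form a consecutive block (in the inorder indexing) of the holes of $C$; let $\vec{M}^i$ be the corresponding block of $M_1,\ldots,M_k$, so that $C[M_1,\ldots,M_k] = f(C_1[\vec{M}^1],\ldots,C_n[\vec{M}^n])$. Set $N_i = \norm{C_i[\vec{M}^i]}$, each a normal term. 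The key normalisation fact is $\norm{C[M_1,\ldots,M_k]} = \norm{f(N_1,\ldots,N_n)}$: indeed $f(C_1[\vec{M}^1],\ldots,C_n[\vec{M}^n]) \to^* f(N_1,\ldots,N_n)$ by reducing inside each argument, and termination and confluence of $R_E$ modulo AC force both terms to share the same normal form.

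With this identity the hypothesis reads: $\Gamma, \norm{f(N_1,\ldots,N_n)} \vdash M$ is cut-free derivable. Applying Lemma~\ref{lm:decomp2} with the normal arguments $N_1,\ldots,N_n$ yields a cut-free derivation of $\Gamma, N_1,\ldots,N_n \vdash M$. Each $N_i = \norm{C_i[\vec{M}^i]}$ now matches the conclusion of the lemma but with the strictly smaller $E$-context $C_i$ (and with the normal terms $\vec{M}^i$), so I would invoke the induction hypothesis once for each $i$ in turn: treating $\Gamma$ together with the remaining $N_j$ as the ambient left context, the hypothesis replaces $N_i$ by the block $\vec{M}^i$. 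Processing $i = 1,\ldots,n$ sequentially transforms $\Gamma, N_1,\ldots,N_n \vdash M$ into $\Gamma, \vec{M}^1,\ldots,\vec{M}^n \vdash M$, which is exactly $\Gamma, M_1,\ldots,M_k \vdash M$.

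The main difficulty is bookkeeping rather than anything deep: I must track how the $k$ holes of $C$ are partitioned among the subcontexts $C_i$ while respecting the inorder indexing used to identify the $M_i$, and I must justify the identity $\norm{C[\vec{M}]} = \norm{f(N_1,\ldots,N_n)}$ rigorously from confluence modulo AC, which is where the convergence assumption on $R_E$ is essential. One boundary case deserves a remark: a subcontext $C_i$ with no holes is a pure ground $E$-subterm, and the induction hypothesis would then have to \emph{remove} a term from the left; such a term $N_i = \norm{C_i}$ is derivable from any $\Gamma$ by the $id$ rule (taking $N_i$ itself as a hole-free $E$-context), so its presence on the left is redundant, and I expect this case to reduce to the others, possibly via a small auxiliary elision argument. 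Everything else is a direct combination of Lemma~\ref{lm:decomp2} with the induction hypothesis.
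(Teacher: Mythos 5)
Your proposal is correct and is essentially the paper's own proof spelled out: the paper proves Lemma~\ref{lm:decomp3} precisely by induction on the size of $C[\ldots]$, using Lemma~\ref{lm:decomp2} to peel the head symbol and Lemma~\ref{lm:equiv} to transfer derivations across AC-equivalence (the latter is what rigorously licenses replacing $\norm{C[M_1,\ldots,M_k]}$ by $\norm{f(N_1,\ldots,N_n)}$, since normal forms modulo AC are unique only up to $\equiv$, not syntactic equality). Your explicit discussion of hole-free subcontexts is extra care the paper's one-line proof omits; note that it is unproblematic only because $E$-contexts contain no names, so a hole-free $\norm{C_i}$ is derivable by $id$ from the empty context.
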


One peculiar aspect of the sequent system $\Sscr$ is that in the
introduction rules for encryption functions (including blind
signatures), there is no switch of polarities for the encryption
key. For example, in the introduction rule for $\enc M K$, both on the
left and on the right, the key $K$ appears on the right hand
side of a premise of the rule. This means that there is no exchange
of information between the left and the right hand side of sequents,
unlike, say, typical implication rules in logic. This gives rise to an
easy cut elimination proof, where we need only to measure the
complexity of the left premise of a cut in determining the cut
rank. 

\begin{theorem}
\label{thm cut elim}
The cut rule is admissible for $\Sscr$.
\end{theorem}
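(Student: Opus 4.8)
The plan is to prove cut-elimination by the standard double induction on the pair $(|M|, |\Pi_1| + |\Pi_2|)$, where $M$ is the cut term and $\Pi_1, \Pi_2$ are the derivations of the two premises of the topmost cut, with $|M|$ the primary measure. I would consider a derivation ending in a single application of $cut$ whose premises are already cut-free, and show how to either remove the cut or replace it with cuts on structurally smaller terms or with strictly smaller derivations. The key structural observation, highlighted by the authors in the paragraph preceding the theorem, is that there is no polarity switch on encryption keys: in every left and right introduction rule for an encryption/signature/blinding constructor, the key always reappears on the \emph{right} of a premise. This means the left-hand context never receives information flowing from the right, so the cut rank can be measured using only the \emph{left} premise, which dramatically simplifies the key-reduction cases that are usually the crux of such proofs.

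\smallskip
\noindent\textbf{Key steps.} First I would handle the commuting cases, where the last rule in $\Pi_1$ or $\Pi_2$ does not act on the cut term $M$. Here the rule permutes below the cut, and the induction hypothesis applies on a smaller derivation height with the same cut term. For the left premise $\Pi_1$, a subtlety is that $\Sscr$ allows left rules to introduce principal terms into the context; I would check that the $gs$ rule and the left rules $p_L, e_L, \signsym_L, \blindsym_{L1}, \blindsym_{L2}$ permute down past the cut without increasing the cut term size. Second, I would treat the case where $M$ is principal in both premises (a logical cut): $M$ is introduced on the right in $\Pi_1$ by a right rule and used on the left in $\Pi_2$ by the matching left rule. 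For instance if $M = \enc{M'}{K}$ with $\Pi_1$ ending in $e_R$ and $\Pi_2$ ending in $e_L$, I reduce to cuts on the strictly smaller terms $M'$ and $K$; the absence of a polarity switch ensures both resulting cut terms are genuine subterms of $M$, so the primary induction measure strictly decreases. Third, and most delicate, is the case where $M$ is principal in $\Pi_1$ via the $id$ rule, i.e.\ $M \approx_E C[M_1,\ldots,M_k]$ for an $E$-context and $M_1,\ldots,M_k \in \Gamma$. Here $M$ need not be decomposable by a constructor rule, and I would invoke the decomposition lemmas (Lemma~\ref{lm:decomp1}, Lemma~\ref{lm:decomp2}, Lemma~\ref{lm:decomp3}) to push the cut through: these lemmas let me replace $\Gamma, M \vdash T$ by a cut-free derivation of $\Gamma, M_1, \ldots, M_k \vdash T$, after which the premises $\Gamma \vdash M_i$ (derivable since $M_i \in \Gamma$ by $id$) discharge the added antecedents.

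\smallskip
\noindent\textbf{The hard part} will be the third case, the interaction between the $id$ rule on the right premise and the modulo-$E$ reasoning. When $M$ is built by $id$ from an $E$-context, the cut term is not introduced by any constructor rule, so the usual ``reduce to subterm cuts'' strategy does not directly apply; instead the equational structure $M \approx_E C[M_1,\ldots,M_k]$ must be exploited. This is precisely where Lemma~\ref{lm:decomp3} is designed to help: it converts a normalized $E$-context antecedent $\norm{C[M_1,\ldots,M_k]}$ into its components $M_1,\ldots,M_k$ in a cut-free manner, letting me trade the cut on $M$ for cuts on the $M_i$, each of which is a subterm of a term already in $\Gamma$ and hence derivable by $id$ alone. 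I would also need Lemma~\ref{lm:equiv} throughout to freely pass between AC-equivalent forms of antecedents and succedents while preserving derivation height, so that the induction measure is unaffected by the implicit normalization built into sequents. Once these cases are dispatched, the topmost cut is eliminated, and iterating removes all cuts, establishing admissibility.
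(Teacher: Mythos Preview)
Your approach is correct but diverges from the paper's in an instructive way. You set up the standard double induction on $(|M|, |\Pi_1|+|\Pi_2|)$ and handle principal cuts by matching the right rule in $\Pi_1$ against the corresponding left rule in $\Pi_2$. The paper instead exploits the no-polarity-switch observation much more aggressively: it inducts \emph{only} on the height of the left premise $\Pi_1$ and never cases on $\Pi_2$ at all. When $\Pi_1$ ends in a right rule introducing $f(M,N)$, the paper does not look for a matching left rule in $\Pi_2$; it simply invokes Lemma~\ref{lm:decomp1} to turn the cut-free right premise $\Gamma, f(M,N)\vdash R$ into a cut-free derivation of $\Gamma, M, N\vdash R$, then cuts on $M$ and $N$ with strictly shorter left premises. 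So Lemma~\ref{lm:decomp1} is the workhorse for the constructor case, not just a tool for the $id$ case as you suggest. This buys a single-measure induction and a uniform treatment that sidesteps the awkward $\signsym_R/\blindsym_{L2}$ interaction you would otherwise have to unravel by hand.

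Two minor points on your write-up. First, in your $id$ case you speak of ``trading the cut on $M$ for cuts on the $M_i$'', but since $M_1,\ldots,M_k\in\Gamma$ already, Lemma~\ref{lm:decomp3} applied to $\Gamma, M\vdash T$ yields a cut-free derivation of $\Gamma\vdash T$ directly; no further cuts are needed. Second, you note the no-polarity-switch property but then do not actually use it to simplify your measure; the paper's point is precisely that this property makes the left-premise height alone suffice, which is what you should cash in rather than merely cite.
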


\section{Normal derivations and decidability}
\label{sec:normal}

We now turn to the question of the decidability of the deduction problem
$\Gamma \vdash M.$ This problem is known already for several AC theories, 
e.g., exclusive-or, abelian groups and their extensions with a homomorphism
axiom~\cite{Comon-Lundh03LICS,Chevalier03LICS,Delaune06ICALP,Delaune06IPL,Abadi06TCS}. 
What we would like to show here is how the decidability
result can be reduced to a more elementary decision 
problem, defined as follows.

\begin{definition}
Given an equational theory $E$, the {\em elementary deduction problem} for $E$, written $\Gamma \Vdash_E M$, 
is the problem of deciding whether the $id$ rule is applicable
to the sequent $\Gamma \vdash M$ (by checking 
whether there exists an $E$-context
$C[\ldots]$ and terms $M_1,\ldots, M_k \in \Gamma$ such that
$C[M_1,\ldots,M_k] \approx_{E} M$).
\end{definition}

Note that as a consequence of Proposition~\ref{prop:var-abs} and Proposition~\ref{prop:var-abs2},
in checking elementary deducibility, it is enough to consider the pure $E$ equational
problem where all $E$-alien subterms are abstracted, i.e., we have
$$
C[M_1, \ldots, M_k] \approx_{E} M
\qquad
\mbox{iff}
\qquad
C[F_E(M_1), \ldots, F_E(M_k)] \approx_E F_E(M).
$$
Our notion of elementary deduction corresponds roughly to the notion of ``recipe''
in \cite{Abadi06TCS}, but we note that the notion of a recipe is a stronger one, since it
bounds the size of the equational context.

The cut free sequent system does not strictly speaking enjoy the ``sub-formula'' property,
i.e., in $\blindsym_{L2}$, the premise sequent has a term which is not a subterm of
any term in the lower sequent. However, it is easy to see that, reading the rules bottom up,
we only ever introduce terms which are smaller than the terms in the lower sequent.
Thus a naive proof search strategy which non-deterministically tries all applicable rules
and avoids repeated sequents will eventually terminate. 
This procedure is of course rather expensive. 
We show that we can obtain a better complexity result by analysing the structures of cut-free derivations. 
Recall that the rules $p_L, e_L, \signsym_L, \blindsym_{L1}, \blindsym_{L2}$ and $gs$
are called left rules (the other rules are right rules). 

\begin{definition}
A cut-free derivation $\Pi$ is said to be a {\em normal derivation} if 
it satisfies the following conditions: 
\begin{enumerate}
\item no left rule appears above a right rule;
\item no left rule appears immediately above the left-premise of a
  branching left rule (i.e., all left rules except $p_L$ and
  $\signsym_L$).
\end{enumerate}
\end{definition}

\begin{proposition}
\label{prop:norm}
If $\Gamma \vdash M$ is derivable then it has a normal derivation. 
\end{proposition}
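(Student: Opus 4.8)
The plan is to prove Proposition~\ref{prop:norm} by a double induction that pushes left rules upward past right rules, and pushes left rules upward past the left premises of branching left rules, until both normality conditions hold. Since we already have cut-elimination (Theorem~\ref{thm cut elim}), I start from an arbitrary cut-free derivation $\Pi$ of $\Gamma \vdash M$ and argue that any violation of the two conditions can be eliminated by local rule permutations, with a suitable termination measure ensuring the process stops.

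The key steps, in order, would be as follows. First I would establish the basic \emph{permutation lemmas}: whenever a left rule $L$ occurs immediately above a right rule $R$, the two can be swapped so that $L$ is applied below $R$. The point to check is that the left rules act only on the left-hand side of the sequent and preserve the principal right-hand term across their premises (the observation already flagged in the paper, that encryption keys never switch polarity, is exactly what makes this clean). Concretely, every right rule ($p_R$, $e_R$, $\signsym_R$, $\blindsym_R$) either does not touch the context formulae that a left rule decomposes, or it splits into premises to each of which the left rule can be reapplied; conversely the left rule's decomposition of $\Gamma$ is unaffected by which right rule sits below it. I would treat the branching cases ($e_L$, $\blindsym_{L1}$, $\blindsym_{L2}$, $gs$) by permuting the right rule into the relevant premise(s). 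Second, for condition~2, I would show that a left rule appearing immediately above the left premise of a branching left rule can likewise be permuted downward, past that branching rule, so that the left premise becomes generated without a left rule on top.

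For termination I would attach a measure to $\Pi$ — for instance the multiset of heights at which the offending configurations occur, or a lexicographic pair counting (condition-1 violations, condition-2 violations) ordered so that each permutation strictly decreases it — and argue by well-founded induction that repeatedly applying the permutation lemmas yields a derivation in which neither condition is violated, i.e.\ a normal derivation. I would also record that the permutations preserve the endsequent and keep the derivation cut-free, so the result is a normal \emph{cut-free} derivation as required.

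\textbf{The main obstacle} I expect is the permutation of left rules past the \emph{left premise} of a branching left rule, and more generally making the termination measure genuinely decrease under every permutation. The subtlety is that permuting one rule downward can create fresh violations elsewhere (for example duplicating a subderivation into two premises of a branching rule), so a naive height count need not drop. I would address this by choosing the measure to prioritise the \emph{topmost} or \emph{lowest} violation and showing each step strictly reduces it in the chosen well-order, rather than reducing total rule count; care is also needed that the side conditions on $gs$ (that $A$ be a guarded subterm of the lower sequent) and on $\signsym_L$ (that $K \equiv L$) remain satisfied after permutation, which follows because permutation does not enlarge the term set of the relevant sequents.
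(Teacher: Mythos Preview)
Your approach is essentially identical to the paper's: first permute rules to satisfy condition~1 (the paper's Lemma~\ref{lm:perm1} pushes right rules up over left rules, which is the same as your swapping so that $L$ is applied below $R$), then permute the offending left rule out of the left premise of a branching left rule for condition~2, with termination via a count of offending configurations and a subinduction on the height of the relevant premise. One small slip: your opening sentence says ``pushes left rules upward past right rules,'' which is the wrong direction---your detailed description (swap so that $L$ ends up \emph{below} $R$) is the correct one and matches the paper.
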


In a normal derivation, the left branch of a branching left rule is
provable using only right rules and $id$. This means that we can
represent a normal derivation as a sequence (reading the proof bottom-up) 
of sequents, each of which is obtained from the previous
one by adding terms composed of subterms of the previous sequent, with
the proviso that certain subterms can be constructed using
right-rules.  Let us denote with $\Gamma \Vdash_\Rscr M$ the fact that
the sequent $\Gamma \vdash M$ is provable using only the right rules
and $id$.  This suggests a more compact deduction system for intruder
deduction, called system $\Lscr$, given in Figure~\ref{fig:linear}.

\begin{figure}[t]
  \begin{tabular}[c]{ll}
   $\infer[r]{\Gamma \vdash M}
             {\Gamma \Vdash_\Rscr M}$
  &
   \qquad $\infer[le, \hbox{ where $\Gamma, \enc M K \Vdash_\Rscr K$}]
              {\Gamma, \enc M K \vdash N}
              {\Gamma, \enc M K, M, K \vdash N}$
\\[1em]
   $\infer[lp]{\Gamma, \spr M N \vdash T}
              {\Gamma, \spr M N, M, N \vdash T}$
   &
   \qquad $\infer[\signsym, K \equiv L]
              {\Gamma, \sign M K, \pub L \vdash N}
              {\Gamma, \sign M K, \pub L, M \vdash N}$
\\[1em]
  \multicolumn{2}{l}{
   $\infer[\blindsym_1, \hbox{ where $\Gamma, \blind M K \Vdash_\Rscr K$ }]
             {\Gamma, \blind M K \vdash N}
             {\Gamma, \blind M K, M, K \vdash N}$
  }
\\[1em]
  \multicolumn{2}{l}{
   $\infer[\blindsym_2,]
             {\Gamma, \sign {\blind M R} K \vdash N}
             {\Gamma, \sign {\blind M R} K, \sign M K, R \vdash N}$
  }
\\[1em] 
  \multicolumn{2}{l}{\qquad where $\Gamma, \sign {\blind M R} K \Vdash_\Rscr R.$} 
\\[1em]
  \multicolumn{2}{l}{
  $\infer[ls, 
           \hbox{where $A$ is a guarded subterm of $\Gamma \cup \{M\}$ 
                 and $\Gamma \Vdash_\Rscr A.$}]
            {\Gamma \vdash M}
            {\Gamma, A \vdash M} $
  }
  \end{tabular}
\caption{System $\Lscr$: a linear proof system for intruder deduction.}
\label{fig:linear}
\end{figure}

\begin{proposition}
\label{prop:normal}
Every sequent $\Gamma \vdash M$ is provable in $\Sscr$ if and only if it
is provable in $\Lscr.$
\end{proposition}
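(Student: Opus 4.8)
The plan is to treat the linear system $\Lscr$ as a compact repackaging of \emph{normal} $\Sscr$-derivations and to prove the two directions of the equivalence separately. The easier direction is that every $\Lscr$-derivation can be expanded into an $\Sscr$-derivation, which I would establish by structural induction on the $\Lscr$-derivation, showing that each $\Lscr$-rule can be simulated by the corresponding $\Sscr$-rule together with an $\Rscr$-subderivation supplied by its side condition. The capping rule $r$ is immediate, since its premise $\Gamma \Vdash_\Rscr M$ is by definition an $\Sscr$-derivation built from right rules and $id$. For each remaining $\Lscr$-rule the single displayed premise is translated by the induction hypothesis, while a side condition of the form $\Gamma' \Vdash_\Rscr A$ furnishes, again by definition, an $\Sscr$-derivation of the corresponding \emph{left} premise; plugging the two subderivations into the matching branching $\Sscr$-rule yields the conclusion. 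Thus $lp$, $\signsym$, $le$, $\blindsym_1$, $\blindsym_2$ and $ls$ reduce respectively to $p_L$, $\signsym_L$, $e_L$, $\blindsym_{L1}$, $\blindsym_{L2}$ and $gs$, and the $\equiv$-constraint on $\signsym$ carries over verbatim.

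For the converse I would first invoke Proposition~\ref{prop:norm} to replace an arbitrary $\Sscr$-derivation of $\Gamma \vdash M$ by a normal one $\Pi$, and then induct on $\Pi$ (noting that every subderivation of a normal derivation is again normal). If the last rule (at the root) of $\Pi$ is a right rule or $id$, then condition~(1)---no left rule above a right rule---forces the \emph{entire} derivation to consist only of right rules and $id$, so $\Gamma \Vdash_\Rscr M$ holds and a single application of $r$ closes the case. If the last rule is a non-branching left rule ($p_L$ or $\signsym_L$), its unique premise has a normal subderivation to which I apply the induction hypothesis, concluding with $lp$ or $\signsym$. If the last rule is a branching left rule ($e_L$, $\blindsym_{L1}$, $\blindsym_{L2}$ or $gs$), the key observation is that conditions~(1) and~(2) together guarantee that the \emph{left} premise is derived using only right rules and $id$: condition~(2) forbids a left rule immediately above the left premise, and condition~(1) then propagates this upward. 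Hence the left premise is exactly an $\Rscr$-derivation, which I record as the $\Vdash_\Rscr$ side condition of the corresponding $\Lscr$-rule, while the right premise is dispatched by the induction hypothesis.

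The main obstacle is the branching-left-rule case of the forward direction, where I must check that the normal-form discipline is \emph{precisely} what the $\Lscr$ side conditions demand: I have to show that conditions~(1) and~(2) collapse each left premise of a branching rule into an $\Rscr$-derivation and simultaneously segregate $\Pi$ into a lower block of left rules and an upper block of right rules. Once this structural fact is pinned down, the translation in both directions is purely a matter of matching each $\Lscr$-rule with its $\Sscr$-counterpart, and I expect no trouble from the equational or guarded-subterm provisos, which are identical in the two systems. It is worth emphasising that the genuinely hard work---showing that a derivable sequent admits a derivation satisfying conditions~(1) and~(2)---is already discharged by Proposition~\ref{prop:norm}, which I assume; relative to it, the present proposition amounts to a bookkeeping correspondence between two notations for the same normal derivations.
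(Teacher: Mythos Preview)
Your proposal is correct and follows essentially the same approach as the paper: invoke Proposition~\ref{prop:norm} (which in turn relies on cut elimination) for the $\Sscr$-to-$\Lscr$ direction, and observe that the $\Lscr$-to-$\Sscr$ direction is a direct rule-by-rule simulation. The paper's own proof is a one-liner that simply cites cut elimination and Proposition~\ref{prop:norm}; you have spelled out the bookkeeping translation that the paper leaves implicit, including the key observation that conditions~(1) and~(2) force the left premise of every branching left rule in a normal derivation to be an $\Rscr$-derivation.
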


We now show that the decidability of the deduction problem $\Gamma \Vdash_\Sscr M$
can be reduced to decidability of elementary deduction problems. 
We consider a representation of terms as directed acyclic graphs (DAG), 
with maximum sharing of subterms. 
Such a representation is quite standard and can be found in, e.g., \cite{Abadi06TCS}, so
we will not go into the details here. 

In the following, we denote with $st(\Gamma)$ the set of subterms of the terms in $\Gamma.$
In the DAG representation of $\Gamma$, the number of distinct nodes in the DAG representing 
distinct subterms of $\Gamma$ co-incides with the cardinality of $st(\Gamma).$ 
A term $M$ is a {\em proper subterm} of $N$ if $M$ is a subterm of $N$ and $M \not = N.$
We denote with $pst(\Gamma)$ the set of proper subterms of $\Gamma$,
and we define 
$$
sst(\Gamma) = \{\sign M N \mid M, N \in pst(\Gamma) \}.
$$
The {\em saturated set} of $\Gamma$, written $St(\Gamma)$, is the set
$$
St(\Gamma) = \Gamma \cup pst(\Gamma) \cup sst(\Gamma).
$$
The cardinality of $St(\Gamma)$ is at most quadratic in the size of
$st(\Gamma)$. If $\Gamma$ is represented as a DAG, one can %obviously 
compute the DAG representation of $St(\Gamma)$ in polynomial time, with only
a quadratic increase of the size of the graph. 
Given a DAG representation of $St(\Gamma \cup \{M\})$, 
we can represent a sequent $\Gamma \vdash M$ by associating 
each node in the DAG with a tag which indicates 
whether or not the term represented by the subgraph rooted at that node 
appears in $\Gamma$ or $M$. 
Therefore, in the following complexity results for deducibility problem 
$\Gamma \Vdash_S M$ (for some proof system $S$), we assume that 
the input consists of the DAG representation of the saturated set 
$St(\Gamma \cup \{M\})$, together with approriate tags in the nodes.
Since each tag takes only a fixed amount of space (e.g., a two-bit data
structure should suffice), we shall state the complexity result
w.r.t. the size of $St(\Gamma \cup \{M\}).$

\begin{definition}
Let $\Gamma \Vdash_{\Dscr} M$ be a deduction
problem, where $\Dscr$ is some proof system, and let $n$ be the size of $St(\Gamma \cup \{M\}).$
Let $E$ be the equational theory associated with $\Dscr$. Suppose that
the elementary deduction problem in $E$ has complexity $O(f(m)),$ where $m$ is
the size of the input. 
Then the problem $\Gamma \Vdash_{\Dscr} M$ is said to be {\em polynomially reducible} 
to the elementary deduction problem $\Vdash_E$
if it has complexity $O(n^k \times f(n))$ for some constant $k.$
\end{definition}

A key lemma in proving the decidability result is 
the following invariant property of linear proofs.
\begin{lemma}
\label{lm:St-inv}
Let $\Pi$ be an $\Lscr$-derivation of $\Gamma \vdash M.$
Then for every sequent $\Gamma' \vdash M'$ occurring in $\Pi$,
we have
$\Gamma' \cup \{M'\} \subseteq St(\Gamma \cup \{M\}).$
\end{lemma}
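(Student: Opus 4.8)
The plan is to prove Lemma~\ref{lm:St-inv} by induction on the height of the $\Lscr$-derivation $\Pi$, proceeding top-down in the inductive argument but reasoning about each rule bottom-up as it is applied in a normal derivation. The base case is the leaf, where $\Pi$ consists of a single application of the rule $r$ (an application of $\Vdash_\Rscr$); here the only sequent is $\Gamma \vdash M$ itself, and $\Gamma \cup \{M\} \subseteq St(\Gamma \cup \{M\})$ holds trivially since $\Gamma \subseteq St(\Gamma \cup \{M\})$ and $M \in St(\Gamma \cup \{M\})$ by definition of the saturated set. For the inductive step, I would assume the claim holds for the immediate subderivation(s) and verify that each left rule of $\Lscr$, when applied at the root to conclude $\Gamma \vdash M$, produces premises whose terms remain inside $St(\Gamma \cup \{M\})$.

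The heart of the argument is a rule-by-rule inspection showing that the new terms introduced into the left-hand side of each premise are already members of $St(\Gamma \cup \{M\})$. For $lp$, the new terms $M, N$ are immediate subterms of $\spr M N \in \Gamma$, hence lie in $pst(\Gamma) \subseteq St(\Gamma \cup \{M\})$; similarly $le$ adds $M, K$ as subterms of $\enc M K$, and $\signsym$ adds $M$ as a subterm of $\sign M K$. The rule $ls$ adds a guarded subterm $A$ of $\Gamma \cup \{M\}$, which is in $St(\Gamma \cup \{M\})$ by the side condition. The genuinely interesting case is $\blindsym_2$: it removes nothing but adds $\sign M K$ and $R$, where the principal term is $\sign{\blind M R}{K}$. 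Here $R$ is a proper subterm of the principal term, so $R \in pst(\Gamma)$, but $\sign M K$ need \emph{not} be a subterm of any term in the original sequent. This is precisely why the saturated set includes $sst(\Gamma)$: since $M$ and $K$ are both proper subterms of $\sign{\blind M R}{K}$ (noting $M$ is a proper subterm of $\blind M R$, which is itself a proper subterm), we have $M, K \in pst(\Gamma)$, and therefore $\sign M K \in sst(\Gamma) \subseteq St(\Gamma \cup \{M\})$.

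The main obstacle, and the step requiring the most care, is the $\blindsym_2$ case, because it is the sole rule that introduces a term ($\sign M K$) which is not a subterm of the conclusion. I must check carefully that both $M$ and $K$ qualify as \emph{proper} subterms of something in $\Gamma$ so that $sst(\Gamma)$ captures $\sign M K$; this hinges on the observation that $M$ and $K$ sit strictly below $\sign{\blind M R}{K}$ in the term tree. A subtle point is that the principal term $\sign{\blind M R}{K}$ is itself in $\Gamma'$ for the relevant subsequent, so its subterms are proper subterms of $\Gamma'$, but I need them to be proper subterms of the \emph{original} $\Gamma$; this follows because the inductive hypothesis guarantees $\sign{\blind M R}{K} \in St(\Gamma \cup \{M\})$, and I would argue that the saturation construction is closed in the appropriate way, or more directly, track that every term appearing in any sequent of $\Pi$ arose as a subterm (or a signed pair of proper subterms) of the root, so that $pst$ and $sst$ computed relative to the root suffice. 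Once this closure observation is pinned down, the remaining cases are routine subterm bookkeeping.
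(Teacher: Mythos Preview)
Your approach is essentially the same as the paper's: induction on $|\Pi|$, a rule-by-rule check, and the observation that $\blindsym_2$ is the only interesting case because $\sign M K$ is not a subterm of the conclusion but lies in $sst(\Gamma)$ since $M,K \in pst(\Gamma)$. The one difference worth noting is how the induction is set up. You only argue that the \emph{new} terms added by each rule lie in $St(\Gamma\cup\{M\})$, and then flag a ``subtle point'': to propagate the claim upward you need $St(\Gamma'\cup\{M'\}) \subseteq St(\Gamma\cup\{M\})$, which you propose to get via a closure property of $St$. That closure does hold (from $X \subseteq St(Y)$ one checks $pst(X)\subseteq pst(Y)$ and hence $sst(X)\subseteq sst(Y)$), but you leave it as an exercise. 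The paper sidesteps this entirely by proving the stronger statement that $St(\Gamma'\cup\{M'\}) = St(\Gamma\cup\{M\})$ for each rule instance; once the saturated set is literally invariant across every rule, the induction is immediate with no auxiliary closure argument needed. For $\blindsym_2$ this equality follows because the proper subterms of $\sign N K$ are already among the proper subterms of $\sign{\blind N R}{K}$, so $pst$ (and hence $sst$) is unchanged. Your argument is correct, but the paper's invariant is the cleaner way to package it.
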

The existence of linear size proofs then follows from the above
lemma.
\begin{lemma}
\label{lm:linear}
If there is an $\Lscr$-derivation of $\Gamma \vdash M$ then there is an $\Lscr$-derivation
of the same sequent whose length is at most $|St(\Gamma \cup \{M\})|.$
\end{lemma}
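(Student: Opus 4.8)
The plan is to exploit the fact that every rule of $\Lscr$ has at most one sequent premise, so that an $\Lscr$-derivation $\Pi$ of $\Gamma \vdash M$ is not a branching tree but a single chain of sequents. Indeed, the $r$ rule is a leaf (its hypothesis $\Gamma \Vdash_\Rscr M$ is a side condition discharged in the system $\Rscr$, not a sequent of $\Lscr$), and each of $le$, $lp$, $\signsym$, $\blindsym_1$, $\blindsym_2$ and $ls$ has exactly one sequent premise, the remaining hypotheses $\Vdash_\Rscr$ again being side conditions. Hence I would first record that $\Pi$ consists of a linear sequence of sequents $S_\ell, S_{\ell-1}, \ldots, S_0$, where $S_0 = (\Gamma \vdash M)$ is the conclusion, each $S_i$ is the premise of the rule whose conclusion is $S_{i-1}$, and the topmost sequent $S_\ell$ is the conclusion of the $r$ rule.

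Next I would isolate two invariants of this chain. First, the right-hand side never changes: inspecting the rules, every left rule and the $ls$ rule copies the succedent unchanged from premise to conclusion, so every $S_i$ has the form $\Gamma_i \vdash M$ with the same $M$. Second, reading the chain from the conclusion upwards, the antecedent only grows: each left rule adds the decomposed subterms (for instance $M, K$ in $le$, or $\sign M K$ and $R$ in $\blindsym_2$) and $ls$ adds the guarded term $A$, so $\Gamma_{i-1} \subseteq \Gamma_i$ for every $i$. Because antecedents are sets, not multisets, this is a genuine weak inclusion. Finally, Lemma~\ref{lm:St-inv} gives $\Gamma_i \subseteq St(\Gamma \cup \{M\})$ for every $i$.

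With these invariants in hand, the bound follows by eliminating stalling steps. If some step satisfies $\Gamma_{i-1} = \Gamma_i$, then $S_{i-1} = S_i$ as sequents (the succedent being fixed), and I can splice out this step: the sub-derivation above $S_i$ already derives the sequent $S_{i-1}$, and all rules below $S_{i-1}$, together with their $\Vdash_\Rscr$ side conditions, are unaffected since the sequents on which they act are unchanged. Repeating this until no stalling step remains, I obtain an $\Lscr$-derivation of $\Gamma \vdash M$ whose antecedents form a strictly increasing chain $\Gamma = \Gamma_0 \subsetneq \Gamma_1 \subsetneq \cdots \subsetneq \Gamma_{\ell'} \subseteq St(\Gamma \cup \{M\})$. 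Since each strict inclusion raises the cardinality by at least one and every antecedent has size at most $|St(\Gamma \cup \{M\})|$, the number of steps $\ell'$, and hence the length of the derivation, is at most $|St(\Gamma \cup \{M\})|$.

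The routine part is the case analysis verifying the two invariants across the rule schemes; I expect the only points requiring care to be the bookkeeping in the splicing argument, namely checking that removing a stalling step preserves the side conditions of the surrounding rules (which it does, precisely because those side conditions refer only to antecedents left unchanged), and the minor off-by-one arising from interpreting the length as the number of inference steps rather than the number of sequents.
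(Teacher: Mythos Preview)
Your proposal is correct and follows essentially the same approach as the paper: observe that $\Lscr$-derivations are linear chains with a fixed succedent and weakly increasing antecedents contained in $St(\Gamma\cup\{M\})$ (via Lemma~\ref{lm:St-inv}), then remove repeated sequents to obtain a strictly increasing chain of length at most $|St(\Gamma\cup\{M\})|$. Your write-up is in fact more explicit than the paper's about the splicing step and the preservation of side conditions, but the underlying argument is the same.
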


Another useful observation is that the left-rules of $\Lscr$ are {\em invertible};
at any point in proof search, we do not lose provability by applying any 
left rule. Polynomial reducibility of $\Vdash_\Lscr$ to $\Vdash_E$ can
then be proved by a deterministic proof search strategy which systematically tries
all applicable rules. %Details are given in the appendix. 

\begin{theorem}
\label{thm:deducibility}
The decidability of the relation $\Vdash_\Lscr$ is polynomially reducible
to the decidability of elementary deduction $\Vdash_E.$
\end{theorem}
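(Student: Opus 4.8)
The plan is to turn the invertibility of the left rules together with the two bounded-search-space lemmas into a deterministic, terminating saturation procedure, and then to account for its cost as polynomially many calls to the elementary-deduction oracle $\Vdash_E$. The reduction factors through two levels: first I reduce right-rule provability $\Vdash_\Rscr$ to $\Vdash_E$, and then I reduce $\Vdash_\Lscr$ to $\Vdash_\Rscr$; composing the two yields the stated bound. Throughout I fix $n = |St(\Gamma \cup \{M\})|$ and, by Lemma~\ref{lm:St-inv}, restrict attention to the finite term universe $U = St(\Gamma \cup \{M\})$, inside which every sequent of every $\Lscr$-derivation of $\Gamma \vdash M$ resides.

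First I would show that $\Vdash_\Rscr$ is polynomially reducible to $\Vdash_E$. The key observation is that in a derivation using only right rules and $id$ the left context never changes, and the $id$ leaves may only invoke an $E$-context over terms \emph{literally} in the context (the $id$ rule requires $M_1,\ldots,M_k \in \Gamma$), never over terms synthesized higher up by a constructor right rule. Hence the set $R(\Gamma) = \{t \in U \mid \Gamma \Vdash_\Rscr t\}$ is exactly the least subset of $U$ that contains every $t \in U$ with $\Gamma \Vdash_E t$ and is closed, within $U$, under the constructor right rules $p_R, e_R, \signsym_R, \blindsym_R$. This set can be computed by a fixpoint: seed it with the at most $n$ terms of $U$ passing an elementary-deduction test, at cost $O(n \cdot f(n))$, then saturate under the four binary constructors, a further $O(n^2)$ (or $O(n)$ with a worklist). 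A routine soundness/completeness induction on the structure of a right-rule derivation confirms that $R(\Gamma)$ is computed correctly, so deciding any single query $\Gamma \Vdash_\Rscr t$ costs $O(n \cdot f(n))$.

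Next I would set up the outer loop for $\Vdash_\Lscr$. Since an $\Lscr$-derivation is a single chain of single-premise left rules topped by an $r$ rule, and since the left rules are invertible, the complete strategy is to saturate the left-hand context: starting from $\norm\Gamma$, repeatedly add to the current context $\Gamma'$ every term mandated by an applicable left rule whose side condition now holds. The rules $lp$ and $\signsym$ (the latter merely checking $K \equiv L$) carry no $\Vdash_\Rscr$ side condition, whereas $le$, $\blindsym_1$, $\blindsym_2$ and $ls$ do, and these I test against the freshly computed $R(\Gamma')$. Crucially, the $ls$ rule can enlarge $R(\Gamma')$ on a later round: a newly abstracted guarded subterm $A$ with $\Gamma' \Vdash_\Rscr A$, once inserted into the context, becomes available to the $E$-contexts used by $id$, which is why the loop must iterate to a fixpoint rather than make a single pass. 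By Lemma~\ref{lm:St-inv} the context stays within $U$, and by Lemma~\ref{lm:linear} at most $n$ terms are ever added, so there are at most $n$ productive rounds; each recomputes $R(\Gamma')$ once at cost $O(n \cdot f(n))$ and then checks all pending side conditions in $O(n)$. When the context stabilises at some $\Gamma^\ast$, I apply the final $r$ rule by testing $M \in R(\Gamma^\ast)$, and invertibility guarantees that $\Gamma \Vdash_\Lscr M$ holds iff this last test succeeds. The total cost is $O(n) \cdot O(n \cdot f(n)) = O(n^2 \cdot f(n))$, of the required form $O(n^k f(n))$ with $k = 2$.

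I expect the main obstacle to be the completeness of the greedy saturation rather than the complexity bookkeeping. The counting is routine once the search space is pinned down by Lemmas~\ref{lm:St-inv} and~\ref{lm:linear}, but arguing that eagerly applying every invertible left rule never closes off a needed derivation requires care: one must check that invertibility applies uniformly to all six left rules, including the $\Vdash_\Rscr$-guarded ones, whose side conditions are monotone in $\Gamma'$ and hence cannot be invalidated by adding terms, and that the interaction between $ls$ and the $id$-leaves of $\Vdash_\Rscr$ is faithfully captured by iterating $R(\Gamma')$ to a fixpoint. The second delicate point is to justify that each elementary-deduction query is posed on an input of size $O(n)$, which holds because $\Gamma' \cup \{t\} \subseteq U$ throughout and $U$ has size $n$ in the DAG representation, so that the assumed $O(f(m))$ cost instantiates to $O(f(n))$ at every call.
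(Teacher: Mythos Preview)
Your proposal is correct and follows essentially the same approach as the paper: a two-level reduction (first $\Vdash_\Rscr$ to $\Vdash_E$, then $\Vdash_\Lscr$ to $\Vdash_\Rscr$), a saturation loop driven by the invertibility of the left rules, and termination via Lemmas~\ref{lm:St-inv} and~\ref{lm:linear}. The only noteworthy difference is presentational: for $\Vdash_\Rscr$ the paper uses backward decomposition of $M$ by right rules with an $\Vdash_E$ check at each node, whereas you compute the whole set $R(\Gamma)$ forward by seeding with $\Vdash_E$-derivable terms and closing under constructors inside $U$; both cost $O(n\cdot f(n))$ and your version has the minor advantage of answering all $\Vdash_\Rscr$ queries for a given $\Gamma'$ at once, which tightens the final bound to $O(n^2 f(n))$ against the paper's $O(n^{c+1} f(n))$.
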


Note that in the case where the theory $E$ is empty, we obtain a
\textsc{ptime} decision procedure for intruder deduction with blind
signatures.

\section{Some example theories}
\label{sec:examples}

We now consider several concrete AC convergent theories that are often used
in reasoning about security protocols. Decidability of intruder deduction
under these theories has been extensively studied 
\cite{Comon-Lundh03LICS,Chevalier03LICS,Abadi06TCS,Delaune06ICALP,Lafourcade07IC,Cortier07LPAR}. 
These results can be broadly categorized into those with explicit
pairing and encryption constructors, e.g., \cite{Comon-Lundh03LICS,Lafourcade07IC}, and those where the constructors
are part of the equational theories, e.g., \cite{Abadi06TCS,Cortier07LPAR}. 
For the latter, one needs explicit decryption operators with, e.g., an equation like
$
dec(\enc M N, N) \approx M.
$
Decidability results for these deduction problems %, especially those with explicit destructors, 
are often obtained by separating elementary deducibility from
the general deduction problem. This is obtained by studying some form of normal
derivations in a natural deduction setting. Such a reduction, as has been shown
in the previous section, applies to our calculus in a more systematic fashion. 

\paragraph{Exclusive-or.}
The signature of this theory consists
of a binary operator $\oplus$ and a constant $0.$ The theory is given by the axioms
of associativity and commutativity of $\oplus$ together with the axiom $x \oplus x \approx 0$
and $x \oplus 0 \approx x.$ This theory can be turned into an AC convergent rewrite system
with the following rewrite rules:
$$
x \oplus x \to 0 \qquad \hbox{ and } \qquad x \oplus 0 \to x.
$$
Checking $\Gamma \Vdash_E M$ can be done in PTIME, as shown in, e.g., \cite{Chevalier03LICS}.

\paragraph{Abelian groups.}

The exclusive-or theory is an instance of Abelian groups, where the inverse of
an element is the element itself. The more general case of Abelian groups includes
an inverse operator, denoted with $I$ here. The equality theory for Abelian groups is given by the axioms
of associativity and commutativity, plus the theory 
$
\{ x \oplus 0 \approx 0, x \oplus I(x) \approx 0 \}.
$
The equality theory of Abelian groups can be turned into a rewrite system modulo AC by orienting
the above equalities from left to right, in addition to the following rewrite rules:
$$
I(x\oplus y) \to I(x) \oplus I(y) \qquad I(I(x)) \to x \qquad I(0) \to 0.
$$
One can also obtain an AC convergent rewrite system for an extension of Abelian groups
with a homomorphism axiom involving a unary operator $h$: $h(x \oplus y) = h(x) \oplus h(y).$
In this case, the rewrite rules above need to be extended with 
$$
h(x \oplus y) \to h(x) \oplus h(y) \qquad h(0) \to 0 \qquad h(I(x)) \to I(h(x)). 
$$
Decidability of elementary deduction under Abelian groups (with homomorphism)
can be reduced to solving a system of linear equations over some semirings 
(see \cite{Delaune06IPL} for details).

\section{Combining disjoint convergent theories}
\label{sec:comb}

We now consider the intruder deduction problem under a convergent AC theory
$E$, which is obtained from the union of pairwise disjoint convergent 
AC theories $E_1, \ldots, E_n.$ Each theory $E_i$ may contain an associative-commutative
binary operator, which we denote with $\oplus_i.$
We show that the intruder deduction problem
under $E$ can be reduced to the elementary deduction problem of each $E_i.$

Given a term $M = f(M_1,\ldots, M_k)$, where $f$ is a function symbol (i.e., a constructor,
an equational symbol or $\oplus$), the terms $M_1, \ldots, M_k$ are called the
{\em immediate subterms} of $M.$ 
Given a term $M$ and a subterm occurrence $N$ in $M$, we say that $N$ is a 
{\em cross-theory subterm} of $M$ if $N$ is 
headed with a symbol $f \in \Sigma_{E_i}$ and it is an immediate subterm
of a subterm in $M$ which is headed by a symbol $g \in \Sigma_{E_j}$, where $i \not = j.$
We shall also refer to $N$ as an {\em $E_{ij}$-subterm} of $M$ when we need to be explicit
about the equational theories involved.

Throughout this section, we consider a sequent system $\Dscr$, whose rules
are those of $\Sscr$, but with $id$ replaced by 
the rule below left and
with the addition of the rule below right
where $N$ is a cross-theory subterm of some term in $\Gamma \cup \{M\}$:
$$
\infer[id_{E_i}]
{\Gamma \vdash M}
{
\begin{array}{c}
M \approx_{E} C[M_1,\ldots,M_k] \\
\hbox{$C[\ ]$ an $E_i$-context, and $M_1,\ldots,M_k \in \Gamma$}
\end{array}
}
\qquad
\infer[cs]
{\Gamma \vdash M}
{\Gamma \vdash N & \Gamma, N \vdash M}
$$

The analog of Proposition~\ref{prop:S-equal-N} holds for $\Dscr$. Its proof is a straightforward
adaptation of the proof of Proposition~\ref{prop:S-equal-N}.

\begin{proposition}
\label{prop:D-equal-N}
The judgment $\Gamma \vdash M$ is provable in the natural deduction system $\Nscr$, under theory $E$, 
if and only if $\norm \Gamma \vdash \norm M$ is provable in the sequent system $\Dscr$.
\end{proposition}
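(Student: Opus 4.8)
The plan is to adapt the two-directional induction used for Proposition~\ref{prop:S-equal-N}, exploiting the standard correspondence between the two calculi: the right rules of $\Dscr$ match the introduction rules of $\Nscr$, the left rules match the elimination rules, and the $cut$, $gs$ and $cs$ rules correspond to composition of natural deduction derivations. As in that proof, the passage between a term and its normal form is absorbed by the $\approx$ rule of $\Nscr$ together with Lemma~\ref{lm:equiv}, so throughout I reason only up to $\equiv$ on normal forms; in particular a use of $\approx$ with $M \approx_E N$ is discharged by noting that $\norm M \equiv \norm N$ and appealing to Lemma~\ref{lm:equiv}.

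For the direction from $\Nscr$ to $\Dscr$ I would induct on the given $\Nscr$-derivation of $\Gamma \vdash M$. Each constructor introduction rule translates to the matching right rule of $\Dscr$, and each elimination rule translates to the matching left rule together with a $cut$; since $cut$ is itself a rule of $\Dscr$, no cut elimination is required here. The only genuinely new cases are $f_I$ for $f \in \Sigma_E$ and the equational rule $\approx$, because $\Dscr$ has no introduction rule for equational symbols and only the single-theory $id_{E_i}$ rule. To treat $f_I$ for $f \in \Sigma_{E_i}$, where $M = f(M_1,\ldots,M_n)$ and the induction hypothesis yields $\norm\Gamma \Vdash_\Dscr \norm{M_j}$ for each $j$, I would abstract, by repeated use of the $cs$ rule, the maximal $E_i$-alien immediate subterms of $\norm M$; each such subterm is a cross-theory subterm of $\norm M$, so the side condition of $cs$ is met, and each is supplied by the induction hypothesis. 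Once these cross-theory subterms sit in the left context, the remaining deduction is witnessed by a pure $E_i$-context and is closed by $id_{E_i}$.

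For the converse direction I would induct on the $\Dscr$-derivation. Right rules become introduction rules and left rules become elimination rules exactly as before. The $id_{E_i}$ rule is a restriction of the $\Sscr$-rule $id$ to an $E_i$-context, so it is simulated by building the witnessing context over $M_1,\ldots,M_k \in \Gamma$ from $f_I$ rules with $f \in \Sigma_{E_i}$ and then applying the $\approx$ rule to reach $M$. The rules $cut$, $gs$ and $cs$, which all have premises of the form $\Gamma \vdash N$ and $\Gamma, N \vdash M$, are discharged by the admissibility of cut in $\Nscr$, which is simply substitution of the derivation of $\Gamma \vdash N$ into the uses of $N$ in the derivation of $\Gamma, N \vdash M$; the side condition restricting $N$ to a cross-theory subterm plays no role in this direction.

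I expect the main obstacle to be the forward simulation of a mixed-theory elementary step by single-theory $id_{E_i}$ steps glued together with $cs$. The point is that, because the signatures $\Sigma_{E_i}$ are pairwise disjoint, every normal term stratifies into alternating maximal single-theory layers whose interfaces are precisely its cross-theory subterms, so abstracting those subterms isolates a single pure $E_i$-layer that $id_{E_i}$ can handle. Making this faithful to rewriting --- ensuring that normalising $f(\norm{M_1},\ldots,\norm{M_n})$ neither creates nor destroys the equalities needed after abstraction --- is exactly what Propositions~\ref{prop:var-abs} and~\ref{prop:var-abs2} guarantee, and I would invoke them to justify that abstracting the $E_i$-alien subterms commutes with $E_i$-normalisation of the top layer.
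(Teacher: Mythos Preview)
Your overall strategy---adapt the two inductions from Proposition~\ref{prop:S-equal-N}, with $id_{E_i}$ playing the role of $id$ and the new $cs$ rule handled like $gs$---is exactly what the paper does, and the converse direction ($\Dscr$ to $\Nscr$) is fine as you describe it.

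However, your treatment of the forward $f_I$ case has a genuine gap. You propose to abstract, via $cs$, the maximal $E_i$-alien subterms of $\norm M$ and then close with $id_{E_i}$. But the induction hypothesis hands you derivations of $\norm\Gamma \vdash \norm{M_j}$, and these need not coincide with the cross-theory subterms of $\norm M$. If some $\norm{M_j}$ is itself headed by a symbol of $\Sigma_{E_i}$, it is \emph{not} a cross-theory subterm of $\norm M$ (so $cs$ does not apply to it), yet its own $E_i$-alien subterms---which \emph{are} cross-theory subterms of $\norm M$---are not supplied by the induction hypothesis. Concretely, with $f=\oplus\in\Sigma_{E_i}$, $M_1=a\oplus b$ and $M_2=g(c)$ with $g\in\Sigma_{E_j}$, after abstracting only $g(c)$ you are left needing $a\oplus b$ (or $a$ and $b$ separately) in the context, which you do not have. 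The variable-abstraction propositions you cite concern how rewriting interacts with abstraction; they do not manufacture the missing premises.

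The fix is much simpler than the machinery you anticipate, and it is what the adaptation of Lemma~\ref{lm:nd-to-seq} actually gives: use $cut$, not $cs$, to bring \emph{all} of $\norm{M_1},\ldots,\norm{M_k}$ into the left context (cut is a rule of $\Dscr$, as you yourself note for the elimination cases), and then close $\norm\Gamma,\norm{M_1},\ldots,\norm{M_k}\vdash\norm{f(M_1,\ldots,M_k)}$ by a single $id_{E_i}$ with the one-symbol context $f([\,],\ldots,[\,])$, since $f(\norm{M_1},\ldots,\norm{M_k})\approx_E\norm{f(M_1,\ldots,M_k)}$. No stratification, no $cs$, and no appeal to Propositions~\ref{prop:var-abs} or~\ref{prop:var-abs2} is needed for this proposition; the ``main obstacle'' you anticipate does not arise because $\Nscr$ only ever introduces one equational symbol at a time.
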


Cut elimination also holds for $\Dscr$. Its proof is basically the same as the proof
for $\Sscr$, since the ``logical structures'' (i.e., those concerning constructors)
are the same. The only difference is in the treatment of abstracted terms (the rules
$gs$ and $cs$). In $\Dscr$ we allow abstraction of arbitrary cross-theory subterms,
in addition to guarded subterm abstraction.
The crucial part of the proof in this case relies on the variable abstraction
technique (Proposition~\ref{prop:var-abs} and Proposition~\ref{prop:var-abs2}),
which applies to both guarded subterm and cross-theory subterm
abstraction.

\begin{theorem}
\label{thm cut elim for D}
The cut rule is admissible for $\Dscr$. 
\end{theorem}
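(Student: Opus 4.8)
The plan is to follow the inductive skeleton of the proof of Theorem~\ref{thm cut elim} for $\Sscr$ essentially unchanged, since the constructor (``logical'') rules of $\Dscr$ coincide with those of $\Sscr$ and the $cut$ rule is literally the same. As observed before that theorem, the encryption/signature rules exhibit no switch of polarity for the key, so there is no exchange of information between the two sides of a sequent; consequently the cut rank can be taken to be just the size $|M|$ of the cut term, with a secondary induction on the height of the derivation of the left premise. I would therefore set up the same double induction and reduce, as in the $\Sscr$ case, a cut on a compound term to cuts on its immediate subterms.

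The first concrete task is to reprove the auxiliary lemmas---Lemma~\ref{lm:equiv} and the decomposition Lemmas~\ref{lm:decomp1}, \ref{lm:decomp2}, \ref{lm:decomp3}---for $\Dscr$ rather than $\Sscr$. The overwhelming majority of cases transfer verbatim, because they analyse applications of the constructor rules, which are unchanged. The only cases requiring a new argument are those in which the term being decomposed is \emph{active} in an application of $id_{E_i}$, or is introduced by one of the abstraction rules $gs$ or $cs$. For the $id_{E_i}$ case the argument is exactly the variable-abstraction argument sketched after Lemma~\ref{lm:decomp1}: given a witnessing rewrite sequence $C[f(X,Y),M_1,\ldots,M_k] \to^* M$ with $C[\ ]$ an $E_i$-context, one uses Proposition~\ref{prop:var-abs} and Proposition~\ref{prop:var-abs2}---both stated for a disjoint combination $E_1,\ldots,E_n$---to abstract the $E_i$-alien subterms and re-run the rewrite sequence after decomposition, so that $id_{E_i}$ remains applicable to the decomposed sequent. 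The new rule $cs$ abstracts cross-theory subterms, and it is handled completely symmetrically to the guarded-subterm abstraction performed by $gs$: whenever a decomposed term $f(X,Y)$ occurs as a cross-theory subterm of the conclusion, one re-introduces it bottom-up via $cs$ and closes with $id_{E_i}$, exactly as $gs$ was used in the $\Sscr$ proof.

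With the decomposition lemmas in hand, the cut-reduction step proper proceeds by cases on the last rule of the left premise. When the cut term is a constructed term introduced there by a right rule and decomposed in the right premise by the matching left rule, I would push the cut onto its immediate subterms using Lemmas~\ref{lm:decomp1}--\ref{lm:decomp3}, producing cuts of strictly smaller cut term. When the cut term is produced by $id_{E_i}$, $gs$, or $cs$, the analytic-cut discipline (the active term is a subterm, or an $E_i$-combination of subterms, of the lower sequent) lets us either absorb the cut into an abstraction rule or permute it upward, decreasing the height measure. The remaining commutative cases, where the cut term is not principal in the last rule of a premise, are disposed of by permuting the cut above that rule, again reducing the secondary measure.

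The main obstacle is the interaction of cut reduction with $id_{E_i}$ and $cs$ across the disjoint combination: one must be sure that after decomposing a term and re-abstracting with $F_{E_i}$, the rewrite sequence witnessing $C[\ldots]\approx_E M$ is still faithfully tracked, i.e.\ that abstracted $E_i$-alien subterms behave as opaque constants under $\to_{R_E}$. This is precisely what the disjointness hypothesis guarantees, and it is the content of Propositions~\ref{prop:var-abs} and~\ref{prop:var-abs2}; verifying that these propositions cover both guarded-subterm abstraction ($gs$) and cross-theory abstraction ($cs$) uniformly is the crux of the argument, and everything else follows the $\Sscr$ template.
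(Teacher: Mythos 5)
Your proposal is correct and takes essentially the same route as the paper: the paper likewise reproves the equivalence and decomposition lemmas for $\Dscr$ (with the new cases for $id_{E_i}$ and $cs$ handled via Propositions~\ref{prop:var-abs} and~\ref{prop:var-abs2}, which were deliberately stated for disjoint combinations so they could be reused here) and then reruns the cut-reduction argument of Theorem~\ref{thm cut elim}, driven by the last rule of the left premise. The only cosmetic difference is the induction measure: the paper uses just the height of the left premise derivation as the cut rank, whereas you use the lexicographic pair of cut-term size and left-premise height, which works equally well.
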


The decidability result for $\Sscr$ also holds for $\Dscr.$ 
This can be proved with straightforward modifications of the similar proof for $\Sscr$,
since the extra rule $cs$ has the same structure as $gs$ in $\Sscr.$
It is easy to see that the same normal forms for $\Sscr$ also holds for $\Dscr,$ with $cs$
considered as a left-rule.  
It then remains to design a linear proof system for $\Dscr.$ 
We first define the notion of right-deducibility: The relation $\Gamma \Vdash_{\Rscr\Dscr} M$
holds if and only if the sequent $\Gamma \vdash M$ is derivable in $\Dscr$ using only the right rules.
We next define a linear system for $\Dscr$, called $\Lscr\Dscr$, which consists of 
the rules of $\Lscr$ defined in the previous section, but with the proviso $\Gamma \Vdash_\Rscr M$
changed to $\Gamma \Vdash_{\Rscr\Dscr} M$, and with the additional rule:
$$
\infer[lcs]
{\Gamma \vdash M}
{\Gamma, R \vdash M} 
$$
where $R$ is a cross-theory subterm of some term in $\Gamma \cup \{M\}$ and $\Gamma \Vdash_{\Rscr\Dscr} R.$

\begin{proposition}
A sequent $\Gamma \vdash M$ is provable in $\Dscr$ if and only if it
is provable in $\Lscr\Dscr.$
\end{proposition}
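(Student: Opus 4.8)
The plan is to follow the proof of Proposition~\ref{prop:normal} line by line, substituting $\Dscr$ for $\Sscr$, $\Lscr\Dscr$ for $\Lscr$, and $\Vdash_{\Rscr\Dscr}$ for $\Vdash_\Rscr$ throughout, and treating the pair $cs$/$lcs$ exactly as the pair $gs$/$ls$ is treated there. I would prove the two directions separately.

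For the direction from $\Lscr\Dscr$ to $\Dscr$, I would show that every rule of $\Lscr\Dscr$ is derivable in $\Dscr$ and then induct on the height of the $\Lscr\Dscr$-derivation. Each linear left rule expands into its branching $\Dscr$-counterpart ($le$ into $e_L$, $\blindsym_1$ into $\blindsym_{L1}$, $\blindsym_2$ into $\blindsym_{L2}$, $ls$ into $gs$, and $lcs$ into $cs$), whose left premise is supplied by the attached side condition: since $\Gamma' \Vdash_{\Rscr\Dscr} (\cdot)$ is by definition a right-rule-only $\Dscr$-derivation, it may be grafted in directly as that premise. The non-branching rules $lp$ and $\signsym$ map to $p_L$ and $\signsym_L$, and the rule $r$ simply unfolds the $\Vdash_{\Rscr\Dscr}$ derivation it abbreviates. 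In particular the side condition of $lcs$ --- that $R$ is a cross-theory subterm of $\Gamma \cup \{M\}$ with $\Gamma \Vdash_{\Rscr\Dscr} R$ --- is exactly what $cs$ demands together with a right-only proof of its left premise, so this direction is routine.

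For the converse, I would first apply cut elimination for $\Dscr$ (Theorem~\ref{thm cut elim for D}) and then the normal-form result for $\Dscr$ (the analog of Proposition~\ref{prop:norm}, with $cs$ counted as a left rule) to obtain a normal $\Dscr$-derivation of $\Gamma \vdash M$. Reading it bottom-up, normal-form condition (1) forces the derivation to split into an initial stack of left-rule and $cs$ applications capped by a subderivation using only right rules and $id_{E_i}$; this cap is precisely a $\Vdash_{\Rscr\Dscr}$ derivation and becomes one application of $r$. Normal-form condition (2) forces the left premise of each branching left rule (including $cs$) in the stack to be provable by right rules and $id_{E_i}$ alone, i.e.\ by $\Vdash_{\Rscr\Dscr}$; this discharges exactly the proviso of the matching $\Lscr\Dscr$ rule. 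Translating each step of the stack to its $\Lscr\Dscr$ counterpart and inducting on height then yields the $\Lscr\Dscr$-derivation.

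The main obstacle is not this translation, which is bookkeeping, but the normal-form result for $\Dscr$ on which it rests: one must show that $cs$ permutes through the other rules exactly as $gs$ does in $\Sscr$. Since $cs$ and $gs$ have identical shape --- an abstraction rule with left premise $\Gamma \vdash N$ and continuation $\Gamma, N \vdash M$, differing only in that $N$ ranges over cross-theory subterms rather than guarded subterms --- the permutation lemmas behind Proposition~\ref{prop:norm} transfer mechanically. The single point needing genuine care is that the cross-theory-subterm side condition on $cs$ survives when $cs$ is permuted below another left rule and when left rules above its left premise are deleted; this is the exact analog of preserving the guarded-subterm condition for $gs$, and it holds because these permutations only ever shrink the set of terms appearing above a given sequent, so any term that was a cross-theory subterm of the relevant context remains one.
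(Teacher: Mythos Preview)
Your proposal is correct and follows essentially the same approach as the paper: the paper's proof of Proposition~\ref{prop:normal} is the one-line ``This follows immediately from cut elimination for $\Sscr$ and the normal form for $\Sscr$ (Proposition~\ref{prop:norm}),'' and for the $\Dscr$/$\Lscr\Dscr$ version the paper says only that the same normal forms hold with $cs$ treated as a left rule. Your write-up is a faithful unpacking of that sketch, with the easy direction spelled out and the hard direction reduced to cut elimination plus normalisation exactly as intended.

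One small imprecision worth tightening: in your final paragraph you justify preservation of the cross-theory-subterm side condition by saying the permutations ``only ever shrink the set of terms appearing above a given sequent.'' That is not quite the right invariant. What actually happens when $cs$ is permuted downward (either below a right rule as in Lemma~\ref{lm:perm1}, or out of a left premise as in Proposition~\ref{prop:norm}) is that the right-hand side of its new conclusion is a term whose subterms \emph{contain} those of the old right-hand side: in the right-over-left case $M$ becomes $f(M,N)$, and in the left-over-left case the old target $N_1$ of the outer rule's left premise is always already a subterm of $\Gamma_1 \cup \{M'\}$ (inspect each branching left rule and $gs$/$cs$ to see this). So the subterm set at the conclusion of $cs$ grows, not shrinks, and that is why a cross-theory subterm remains one. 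This is the same check the paper silently relies on for $gs$; once stated correctly your argument goes through.
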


The notion of polynomial reducibility is slightly changed. 
Suppose each elementary deduction problem in $E_i$ is bounded by $O(f(m)).$
Let $m$ be the size of $St(\Gamma \cup \{M\}).$ 
Then the deduction problem $\Gamma \Vdash_{\Dscr} M$ is polynomially reducible
to $\Vdash_{E_1}, \ldots, \Vdash_{E_n}$ if it has complexity
$O(m^k f(m))$, for some constant $k$.

\begin{theorem}
\label{thm:D-deducibility}
The decidability of the relation $\Vdash_{\Lscr\Dscr}$ is polynomially reducible
to the decidability of elementary deductions $\Vdash_{E_1},$ $\ldots,$ $\Vdash_{E_n}$.
\end{theorem}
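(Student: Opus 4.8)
The plan is to follow the proof of Theorem~\ref{thm:deducibility} almost verbatim, since $\Lscr\Dscr$ differs from $\Lscr$ only in that the right-deducibility side conditions now refer to $\Vdash_{\Rscr\Dscr}$ (which uses the rules $id_{E_i}$ in place of $id$) and in the single extra left rule $lcs$. The argument has two nested layers: an \emph{outer} saturation that builds up the left context by applying the invertible left rules and $lcs$, and an \emph{inner} routine that decides the right-deducibility premises $\Gamma' \Vdash_{\Rscr\Dscr} A$. Each layer runs a polynomial number of iterations over the fixed finite set $St(\Gamma \cup \{M\})$, and each right-deducibility query is reduced to polynomially many calls to the elementary oracles $\Vdash_{E_1}, \ldots, \Vdash_{E_n}$.

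First I would re-establish the two structural lemmas. For the analog of Lemma~\ref{lm:St-inv}, the only new case is $lcs$: the term $R$ it introduces is a cross-theory subterm of some term in $\Gamma \cup \{M\}$, hence a proper subterm, so $R \in pst(\Gamma \cup \{M\}) \subseteq St(\Gamma \cup \{M\})$ and the invariant is preserved; all other rules are handled exactly as before. The linear-length bound (analog of Lemma~\ref{lm:linear}) then follows, since every left rule and $lcs$ strictly enlarges the context with a term drawn from $St(\Gamma \cup \{M\})$, so any branch has length at most $|St(\Gamma \cup \{M\})|$. Invertibility of the left rules carries over unchanged, and $lcs$ is invertible because, given its side condition $\Gamma \Vdash_{\Rscr\Dscr} R$, adding $R$ to the context can never destroy provability (a weakening/admissibility argument). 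This licenses a deterministic bottom-up search: maintain a tagged subset of $St(\Gamma \cup \{M\})$ and repeatedly apply every left rule and $lcs$ whose side condition holds, each application adding at least one new term; the loop terminates in $O(m)$ rounds, where $m = |St(\Gamma \cup \{M\})|$, each round scanning $O(m)$ candidate terms.

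The heart of the proof is the inner routine deciding $\Gamma' \Vdash_{\Rscr\Dscr} A$. By the $St$-invariant applied to right-only derivations, it suffices to compute the closure $D$ of $\Gamma'$ under the right-construction rules inside $St(\Gamma \cup \{M\})$: start from $D = \Gamma'$ and repeatedly add a term $N \in St(\Gamma \cup \{M\})$ whenever either $N$ is built by a constructor right rule ($p_R, e_R, \signsym_R, \blindsym_R$) from immediate subterms already in $D$, or $N$ is deducible from $D$ by some $id_{E_i}$. The latter check is exactly one elementary deduction in $E_i$: by the remark following the definition of elementary deduction, justified by Proposition~\ref{prop:var-abs} and Proposition~\ref{prop:var-abs2}, we may abstract all $E_i$-alien subterms via $F_{E_i}$ and query the oracle $\Vdash_{E_i}$ on the resulting pure $E_i$ problem. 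Since $D$ grows monotonically within $St(\Gamma \cup \{M\})$, there are at most $m$ additions; each addition scans $O(m)$ candidates and, for the $id_{E_i}$ case, performs one oracle call of cost $O(f(m))$. Thus each right-deducibility query costs $O(m^2 f(m))$, and folding this into the outer loop gives an overall bound of the form $O(m^k f(m))$, as required by the definition of polynomial reducibility.

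The main obstacle is justifying that right-deducibility in the combined theory genuinely decomposes into independent, single-theory $id_{E_i}$ checks interleaved with $lcs$-abstraction of cross-theory subterms, rather than requiring a ``mixed'' context spanning two disjoint theories at once. This is precisely where cut elimination for $\Dscr$ (Theorem~\ref{thm cut elim for D}) and the variable-abstraction machinery do the work: because the $id$ rule is split into the pure rules $id_{E_i}$ and every cross-theory subterm can be abstracted by $cs$ (and, in the linear system, $lcs$), each $id_{E_i}$ leaf sees only a pure $E_i$ elementary problem once its $E_i$-alien subterms are replaced by their $F_{E_i}$-images. I would therefore spend the bulk of the effort verifying that the closure $D$ faithfully captures $\Vdash_{\Rscr\Dscr}$ --- that is, that completeness of the inner saturation holds --- using the disjointness of the $E_i$ together with Proposition~\ref{prop:var-abs} and Proposition~\ref{prop:var-abs2} to move between the mixed term and its abstractions without loss.
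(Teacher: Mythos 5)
Your outer loop (saturation over $St(\Gamma \cup \{M\})$ with the invertible left rules plus $lcs$, together with the $St$-invariant and the linear-length bound) matches the paper's intended argument, which simply adapts the proof of Theorem~\ref{thm:deducibility} with $lcs$ treated like $ls$. The gap is in your inner routine: the closure $D$ you build does \emph{not} decide $\Gamma' \Vdash_{\Rscr\Dscr} A$. Right-deducibility means derivability using only constructor right rules and $id_{E_i}$, so every $id_{E_i}$ leaf is checked against the \emph{fixed} context $\Gamma'$; your saturation instead checks $id_{E_i}$ against the growing set $D$, which lets elementary deductions from different theories be chained through intermediate terms. Concretely, take disjoint AC theories $E_1, E_2$ with operators $\oplus_1, \oplus_2$, let $\Gamma' = \{a,b\}$ and $A = (a \oplus_1 b) \oplus_2 a$. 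Your closure first adds $a \oplus_1 b$ (an $id_{E_1}$ instance) and then $A$ (an $id_{E_2}$ instance using $a \oplus_1 b \in D$), yet $A$ is \emph{not} right-deducible from $\{a,b\}$: no single $E_i$-context filled with terms from $\{a,b\}$ is $\approx_E$-equal to $A$ (by Propositions~\ref{prop:var-abs} and~\ref{prop:var-abs2}, the cross-theory subterm $a \oplus_1 b$ would have to survive as an abstracted variable), and $A$ is not constructor-headed, so no right rule applies. In $\Lscr\Dscr$ this sequent is proved only by first invoking $lcs$ to move the cross-theory subterm $a \oplus_1 b$ into the context. So the equivalence you plan to spend ``the bulk of the effort'' establishing --- that $D$ faithfully captures $\Vdash_{\Rscr\Dscr}$ --- is false, and note that the failure is in the soundness direction, not the completeness direction you flag.

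There are two ways to repair this. The faithful route --- and the paper's, via a straightforward modification of Lemma~\ref{lm:right-deducibility} --- is to decide $\Vdash_{\Rscr\Dscr}$ goal-directedly: $\Gamma' \Vdash_{\Rscr\Dscr} A$ holds iff some $id_{E_i}$ applies to $\Gamma' \vdash A$, or $A$ is constructor-headed and each immediate subterm is right-deducible from the \emph{same} $\Gamma'$; this costs at most $|St(\Gamma \cup \{M\})|$ recursive calls, each making one oracle call per theory, and the rest of your argument then goes through essentially verbatim. Alternatively you may keep your saturation, but then the procedure no longer simulates $\Lscr\Dscr$-derivations (it applies rules whose side conditions $\Lscr\Dscr$ does not license), and its correctness must be re-proved against the full calculus $\Dscr$: soundness because every term added to $D$ is $\Dscr$-derivable from $\Gamma'$ by assembling the $id_{E_i}$ steps with cuts and appealing to cut admissibility (Theorem~\ref{thm cut elim for D}), completeness because $D$ contains all right-deducible terms. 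As written, with the false equivalence at its heart, the proposal does not establish the theorem.
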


\section{Conclusion and related work}
\label{sec:rel}

We have shown that decidability of the intruder deduction problem,
under a range of equational theories, can be reduced to the simpler
problem of elementary deduction, which amounts to solving equations in
the underlying equational theories. This reduction is obtained in a
purely proof theoretical way, using standard techniques such as cut
elimination and permutation of inference rules.

There are several existing works in the literature that deal with
intruder deduction.  Our work is more closely related to, e.g.,
\cite{Comon-Lundh03LICS,Delaune06IPL,Lafourcade07IC}, in that we do
not have explicit destructors (projection, decryption, unblinding),
than, say, \cite{Abadi06TCS,Cortier07LPAR}. In the latter work, these
destructors are considered part of the equational theory, so in this
sense our work slightly extends theirs to allow combinations of
explicit and implicit destructors.  A drawback for the approach with
explicit destructors is that one needs to consider these destructors
together with other algebraic properties in proving decidability,
although recent work in combining decidable theories
\cite{Arnaud07frocos} allows one to deal with them modularly.
Combination of intruder theories has been considered in
\cite{Chevalier05ICALP,Arnaud07frocos,DelauneIC08}, as part of their
solution to a more difficult problem of deducibility constraints
which assumes active intruders.  In particular, Delaune, et. al.,
\cite{DelauneIC08} obtain results similar to what we have here concerning
combination of AC theories.  One difference between these works and
ours is in how this combination is derived.  Their approach is more
algorithmic whereas our result is obtained through analysis of proof
systems.

It remains to be seen whether sequent calculus, and its associated
proof techniques, can prove useful for richer theories. For certain
deduction problems, i.e., those in which the constructors interact
with the equational theory, there does not seem to be general results
like the ones we obtain for theories with no interaction with the
constructors.  One natural problem where this interaction occurs is
the theory with homomorphic encryption, e.g., like the one considered
in \cite{Lafourcade07IC}. Another interesting challenge is to see how
sequent calculus can be used to study the more difficult problem of
solving intruder deduction constraints, e.g., like those studied in
\cite{Comon-Lundh03LICS,Chevalier03LICS,Delaune06ICALP}.

\paragraph{Acknowledgement}
We thank the anonymous referees of earlier drafts of this paper for
their careful reading and helpful comments. This work has been
supported by the Australian Research Council (ARC) Discovery Project
DP0880549.

\newpage
\appendix

\section{Proofs}

\subsection{Proofs for Section~\ref{sec:intruder}}

\begin{lemma}[Weakening]
\label{lm:weak}
Let $\Pi$ be a derivation of $\Gamma \vdash M$. If $\Gamma \subseteq \Gamma'$, then there exists a derivation
$\Pi'$ of $\Gamma' \vdash M$ such that $|\Pi| = |\Pi'|$.
\end{lemma}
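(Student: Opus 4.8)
The plan is to prove the statement by induction on the height $|\Pi|$ of the derivation, equivalently by structural induction on the last rule applied in $\Pi$. Write $\Delta = \Gamma' \setminus \Gamma$ for the terms being added; the goal is to thread $\Delta$ through every sequent of $\Pi$ without changing its height. The crucial structural observation is that in every rule of $\Sscr$ (Figure~\ref{fig:msg}) the antecedent of the conclusion occurs, possibly together with some additional decomposed terms, in the antecedent of each premise. Hence enlarging the conclusion's antecedent by $\Delta$ is compatible with enlarging each premise's antecedent by the same $\Delta$, after which the same rule can be reapplied.

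For the base case, suppose $\Pi$ ends in $id$, so $M \approx_E C[M_1,\ldots,M_k]$ for an $E$-context $C[\ ]$ with $M_1,\ldots,M_k \in \Gamma$. Since $\Gamma \subseteq \Gamma'$ we still have $M_1,\ldots,M_k \in \Gamma'$, so $id$ derives $\Gamma' \vdash M$ in one step and $|\Pi'| = |\Pi|$. For the inductive step, consider each remaining rule. For a right rule such as $p_R$, $e_R$, $\signsym_R$ or $\blindsym_R$, the antecedent is unchanged across the rule, so applying the induction hypothesis to each premise with added context $\Delta$ and reapplying the rule yields the result. For the left rules $p_L$, $e_L$, $\signsym_L$, $\blindsym_{L1}$, $\blindsym_{L2}$, the premises add decomposed subterms to the antecedent; applying the induction hypothesis with $\Delta$ to each premise and reapplying the rule works because set union absorbs any overlap between $\Delta$ and the newly introduced terms, and because the side condition $K \equiv L$ of $\signsym_L$ does not mention the context. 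The $cut$ case is symmetric: apply the induction hypothesis to both premises with $\Delta$ and reapply $cut$. In every case exactly one rule is applied above premises of unchanged height, so $|\Pi'| = |\Pi|$.

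The only genuinely rule-specific check is the proviso of the analytic-cut rule $gs$, which requires the cut term $A$ to be a guarded subterm of $\Gamma \cup \{M\}$. Since $\Gamma \cup \{M\} \subseteq \Gamma' \cup \{M\}$, this proviso is monotone under enlarging the context, so $A$ remains a guarded subterm of $\Gamma' \cup \{M\}$ and $gs$ reapplies without change. This monotonicity of the side conditions of $id$ and $gs$ under context extension is the single point needing care; once observed, the induction is entirely routine, and the height is preserved throughout because the induction hypothesis preserves heights and each case introduces exactly one inference above the modified premises.
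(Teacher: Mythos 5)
Your proof is correct and follows exactly the paper's approach: the paper proves this lemma by induction on $|\Pi|$ (stated there in one line), and your case analysis of the rules of $\Sscr$, including the observation that the side conditions of $id$ and $gs$ are monotone under enlarging the antecedent, is precisely the routine verification that one-liner leaves implicit.
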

\begin{proof}
By induction on $|\Pi|.$ \qed
\end{proof}

\begin{lemma}
\label{lm:nd-to-seq}
If the judgment $\Gamma \vdash M$ is derivable in natural deduction system $\Nscr$ then
$\norm \Gamma \vdash \norm M$ is derivable in sequent system $\Sscr$.
\end{lemma}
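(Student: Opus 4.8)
The plan is to argue by induction on the height $|\Pi|$ of the given natural deduction derivation $\Pi$ of $\Gamma \vdash M$, performing a case analysis on its last rule. My guiding principle would be the standard correspondence noted before Proposition~\ref{prop:S-equal-N}: I would simulate each introduction rule of $\Nscr$ by the matching right rule of $\Sscr$, and each elimination rule by first using $cut$ to move the normalised major premise into the left context and then applying the matching left rule. Throughout I would exploit the fact that, since $\Sigma_C \cap \Sigma_E = \emptyset$, no rewrite rule of $R_E$ can fire at the head of a constructor-headed term, so normalisation commutes with the constructors: $\norm{\enc M K} = \enc{\norm M}{\norm K}$, $\norm{\spr M N} = \spr{\norm M}{\norm N}$, $\norm{\sign M K} = \sign{\norm M}{\norm K}$, $\norm{\blind M K} = \blind{\norm M}{\norm K}$, and $\norm{\sign{\blind M R}{K}} = \sign{\blind{\norm M}{\norm R}}{\norm K}$. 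This is exactly what ensures that, after normalisation, the major premise retains the constructor shape needed to fire the corresponding left rule.

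For the base case (the $id$ rule of $\Nscr$, where $M \in \Gamma$) I would observe that $\norm M \in \norm\Gamma$ and close $\norm\Gamma \vdash \norm M$ by the $id$ rule of $\Sscr$ with the one-hole $E$-context $[\ ]$. The introduction rules $e_I$, $p_I$, $\signsym_I$, $\blindsym_I$ I would translate directly: apply the induction hypothesis to the premises and then the homonymous right rule $e_R$, $p_R$, $\signsym_R$, $\blindsym_R$, reading off the conclusion via the commutation of $\norm\cdot$ with the constructor. For the equational rule $\approx$, from $M \approx_E N$ I get $\norm M \equiv \norm N$, so the induction hypothesis gives $\norm\Gamma \vdash \norm N$ in $\Sscr$ and Lemma~\ref{lm:equiv} converts this into a derivation of $\norm\Gamma \vdash \norm M$. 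For $f_I$ with $f \in \Sigma_E$, I would cut the normalised premises $\norm{M_1}, \dots, \norm{M_n}$ into the context one at a time (each left premise of a $cut$ coming from the induction hypothesis plus weakening, Lemma~\ref{lm:weak}), reaching $\norm\Gamma, \norm{M_1}, \dots, \norm{M_n} \vdash \norm{f(M_1,\dots,M_n)}$, which I then close by $id$ with the $E$-context $f([\ ],\dots,[\ ])$ — legitimate precisely because $f \in \Sigma_E$ and $f(\norm{M_1},\dots,\norm{M_n}) \approx_E \norm{f(M_1,\dots,M_n)}$.

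The elimination rules are the cases of genuine interest. For $e_E$, the induction hypothesis supplies $\norm\Gamma \vdash \enc{\norm M}{\norm K}$ and $\norm\Gamma \vdash \norm K$; I would $cut$ the term $\enc{\norm M}{\norm K}$ into the context, derive the residual right premise $\norm\Gamma, \enc{\norm M}{\norm K} \vdash \norm M$ by $e_L$, discharge its left premise $\norm\Gamma, \enc{\norm M}{\norm K} \vdash \norm K$ by weakening, and close its right premise $\norm\Gamma, \enc{\norm M}{\norm K}, \norm M, \norm K \vdash \norm M$ by $id$. The cases $p_E$, $\blindsym_{E1}$ and $\signsym_E$ would be entirely analogous, using $p_L$, $\blindsym_{L1}$ and $\signsym_L$; for $\signsym_E$ I would cut in both $\sign{\norm M}{\norm K}$ and $\pub{\norm K}$, the side condition $K \equiv L$ of $\signsym_L$ holding trivially since both keys are $\norm K$. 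For $\blindsym_{E2}$, from $\norm\Gamma \vdash \sign{\blind{\norm M}{\norm R}}{\norm K}$ and $\norm\Gamma \vdash \norm R$ I would cut in the first term and apply $\blindsym_{L2}$, leaving a left premise closed by weakening and a right premise closed by $id$, yielding the desired $\norm\Gamma \vdash \sign{\norm M}{\norm K} = \norm{\sign M K}$.

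I expect the only real friction to be bookkeeping around normalisation rather than conceptual difficulty: I must consistently keep sequents in normal form, always cutting in already-normalised terms, and invoke the constructor-commutation of $\norm\cdot$ at each step so the term shapes match the $\Sscr$ rules. The places that genuinely use the specific design of $\Sscr$ are the $f_I$ and $\Nscr$-$id$ cases, which rely on the $E$-context formulation of $id$, and the $\approx$ case, which relies on Lemma~\ref{lm:equiv}; everything else is a mechanical rule-by-rule simulation.
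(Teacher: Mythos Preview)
Your proposal is correct and follows essentially the same route as the paper's proof: induction on the height of the $\Nscr$-derivation, mapping $id$ to $id$, introduction rules to right rules, $f_I$ to a chain of cuts closed by $id$ with the $E$-context $f([\ ],\dots,[\ ])$, and each elimination rule to a cut on the normalised major premise followed by the corresponding left rule. The only cosmetic differences are that you discharge the key premise of $e_L$ (and analogously $\blindsym_{L1}$, $\blindsym_{L2}$) directly by weakening rather than via a second cut, and you make the appeal to Lemma~\ref{lm:equiv} in the $\approx$ case explicit where the paper simply says the two sequents are ``the same''; both are harmless refinements.
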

\begin{proof}
Let $\Pi$ be a natural deduction derivation of $\Gamma \vdash M$. We construct a sequent derivation
$\Pi'$ of $\norm \Gamma \vdash \norm M$ by induction on $|\Pi|.$ 
The $id$ rule translates to the $id$ rule in sequent calculus; the introduction rules
for constructors translate to the right-rules for the same constructors. 
If $\Pi$ ends with the $\approx$-rule, then the premise and the conclusion of the rules
translate to the same sequent, hence $\Pi'$ is constructed by induction hypothesis. 
It remains to show the translations for elimination rules and rules concerning $f \in \Sigma_E.$
\begin{itemize}
\item Suppose $\Pi$ ends with $f_I$, for some $f \in \Sigma_E$:
$$
\infer[f_I]
{\Gamma \vdash f(M_1,\ldots,M_k)}
{
 \deduce{\Gamma \vdash M_1}{\Pi_1}
 & \cdots &
 \deduce{\Gamma \vdash M_k}{\Pi_k}
}
$$
By induction hypothesis, we have sequent derivations $\Pi_i'$ of
$\norm \Gamma \vdash \norm {M_i}$, for each $i \in \{1,\ldots,k\}$. 
Lemma~\ref{lm:weak}, applied to the $\Pi_i'$, gives us another sequent
derivation $\Pi_i''$ of $\norm \Gamma, \norm {M_1}, \ldots, \norm {M_{i-1}} \vdash \norm {M_i}.$
We note that the sequent 
$$
\norm \Gamma, \norm {M_1}, \ldots, \norm {M_k} \vdash \norm{f(M_1,\ldots,M_k)}
$$
is provable in the sequent system by an application of the $id$-rule. 
The derivation $\Pi'$ is then constructed by successive applications of
the cut rule to this sequent with $\Pi_k'', \ldots, \Pi_1'',$ where the $i$-th
cut eliminates $\norm{M_i}.$

\item Suppose $\Pi$ ends with $p_E:$
$$
\infer[p_E]
{\Gamma \vdash M}{\deduce{\Gamma \vdash \spr M N}{\Pi_1}}
$$
Note that $\norm {\spr M N} \equiv \spr {\norm M} {\norm N}.$ By induction hypothesis,
we have a sequent derivation $\Pi_1'$ of $\norm \Gamma \vdash \spr {\norm M}{\norm N}$,
and since the sequent 
$$\norm \Gamma, \spr {\norm M}{\norm N} \vdash \norm M$$ 
is derivable in sequent calculus (using an $id$ rule followed by a $p_L$-rule), 
we can use the cut rule to get a sequent derivation of $\norm \Gamma \vdash \norm M.$

\item Suppose $\Pi$ ends with $e_E:$
$$
\infer[e_E]
{\Gamma \vdash M}
{
  \deduce{\Gamma \vdash \enc M N}{\Pi_1}
  &
  \deduce{\Gamma \vdash N}{\Pi_2}
}
$$
By induction hypothesis, we have a sequent derivation $\Pi_1'$
of $\norm \Gamma \vdash \enc {\norm M} {\norm N}$ 
and a sequent derivation $\Pi_2'$ of $\norm \Gamma \vdash \norm N.$
By Lemma~\ref{lm:weak}, we have a derivation $\Pi_3$ of
$\norm \Gamma, \enc {\norm M} {\norm N} \vdash \norm N.$ We construct a sequent derivation 
for the sequent
$$
\norm \Gamma, \enc {\norm M}{\norm N}, \norm N \vdash \norm M.
$$
This can be done (in a bottom-up proof construction) by 
an application of $e_L$, followed by two applications of $id.$
Then $\Pi'$ is constructed by applying the cut rule to this sequent using $\Pi_1'$ and $\Pi_3.$

\item Suppose $\Pi$ ends with $\signsym_E$:
$$
\infer[\signsym_E]
{\Gamma \vdash M}
{
 \deduce{\Gamma \vdash \sign M K}{\Pi_1}
 &
 \deduce{\Gamma \vdash \pub K}{\Pi_2}
}
$$
By induction hypothesis, we have a sequent derivation $\Pi_1'$
and a sequent derivation $\Pi_2'$ of, respectively,
$$
\norm \Gamma \vdash \sign {\norm M} {\norm K} 
\qquad \hbox{ and } 
\qquad
\norm \Gamma \vdash \pub {\norm K}.
$$
Let $\Pi_2''$ be a derivation of 
$$
\norm \Gamma, \sign {\norm M} {\norm K}  \vdash \pub {\norm K}
$$
obtained by an application of Lemma~\ref{lm:weak} to $\Pi_2'.$
Let $\Pi_3$ be the derivation
$$
\infer[\signsym_L]
{\norm \Gamma, \sign {\norm M} {\norm K}, \pub {\norm K} \vdash \norm M}
{
 \infer[id]
 {\norm \Gamma, \sign {\norm M} {\norm K}, \pub {\norm K}, \norm M \vdash \norm M}
 {}
}
$$
Then $\Pi'$ is constructed by successive applications of $\Pi_2''$ and $\Pi_1'$
to $\Pi_3.$

\item The cases where $\Pi$ ends with $\blindsym_{E1}$ is
analogous to the case with $e_E$.

\item Suppose $\Pi$ ends with $\blindsym_{E2}$:
$$
\infer[\blindsym_{E2}]
{\Gamma \vdash \sign M K}
{\deduce{\Gamma \vdash \sign {\blind M R} K}{\Pi_1}
 & 
 \deduce{\Gamma \vdash R}{\Pi_2}
}
$$
By induction hypothesis, we have a derivation $\Pi_1'$ and a derivation
$\Pi_2'$ of, respectively, 
$$
\norm \Gamma \vdash \sign {\blind {\norm M} {\norm R}} {\norm K}
\qquad
\hbox{ and }
\qquad
\norm \Gamma \vdash \norm R.
$$
Let $\Pi_3$ be the derivation
$$
\infer[\blindsym_{L2}]
{\norm \Gamma, \sign {\blind {\norm M} {\norm R}} {\norm K}  \vdash \sign {\norm M} {\norm K}}
{
\deduce{\ldots \vdash \norm R}{\Pi_2''}
&
\infer[id]
{\ldots, \sign {\norm M} {\norm K}, \norm R \vdash \sign {\norm M} {\norm K}}
{}
}
$$
where $\Pi_2''$ is obtained from $\Pi_2'$ by weakening the sequent with
$$\sign {\blind {\norm M} {\norm R}} {\norm K}.$$
Then the derivation $\Pi'$ is constructed by a cut between $\Pi_1'$ and $\Pi_3.$
\end{itemize}
\qed
\end{proof}

\begin{lemma}
\label{lm:seq-to-nd}
If $\Gamma \vdash M$ is derivable in sequent system $\Sscr$ then
$\Gamma \vdash M$ is derivable in natural deduction system $\Nscr.$
\end{lemma}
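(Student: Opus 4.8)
The plan is to prove this by induction on the height of the sequent derivation $\Pi$, mapping each rule of $\Sscr$ to a construction in $\Nscr$. The central tool I would first establish is a \emph{substitution} (admissible cut) lemma for $\Nscr$: if $\Gamma \Vdash_\Nscr M$ and $\Gamma, M \Vdash_\Nscr T$, then $\Gamma \Vdash_\Nscr T$. Since no rule of $\Nscr$ ever modifies the context (every rule carries the same set of hypotheses from its premises to its conclusion), this is immediate: given a derivation of $\Gamma, M \vdash T$, I relabel the context of every sequent from $\Gamma, M$ to $\Gamma$ and replace each leaf that closes by $id$ on the hypothesis $M$ with the supplied derivation of $\Gamma \vdash M$; because no rule inspects the context beyond membership, the result is a valid derivation of $\Gamma \vdash T$. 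This lemma absorbs all the genuine ``cut-like'' work, so the remaining cases become routine.

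With the substitution lemma in hand, the induction proceeds by cases on the last rule of $\Pi$. The $id$ rule of $\Sscr$ concludes $\Gamma \vdash M$ from $M \approx_E C[M_1,\ldots,M_k]$ with $M_1,\ldots,M_k \in \Gamma$ and $C[\ ]$ an $E$-context; in $\Nscr$ I derive each $\Gamma \vdash M_i$ by $id$, build $\Gamma \vdash C[M_1,\ldots,M_k]$ by a bottom-up sequence of $f_I$ introductions (legal because $C$ is built solely from symbols of $\Sigma_E$), and finish with a single $\approx$ step to pass from $C[M_1,\ldots,M_k]$ to $M$. The right rules $p_R, e_R, \signsym_R, \blindsym_R$ translate directly to the introduction rules $p_I, e_I, \signsym_I, \blindsym_I$, and the cases $cut$ and $gs$ are exactly instances of the substitution lemma.

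The main content lies in the left rules, which I would handle uniformly: each left rule, read bottom-up, augments the context with terms that are \emph{derivable} from the principal hypothesis via elimination rules, so I reconstruct those terms in $\Nscr$ and then discharge them using substitution. For $p_L$, from $\spr M N$ in the context I obtain $\Gamma, \spr M N \vdash M$ and $\Gamma, \spr M N \vdash N$ by $id$ followed by $p_E$, and substitute both into the induction hypothesis for the premise to remove $M$ and $N$. For $e_L$ and $\blindsym_{L1}$ I use the induction hypothesis on the left premise to obtain the key ($\Gamma, \enc M K \vdash K$, resp.\ $\Gamma, \blind M K \vdash K$), apply $e_E$ (resp.\ $\blindsym_{E1}$) to recover $M$, and substitute. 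For $\signsym_L$ I first turn the hypothesis $\pub L$ into $\pub K$ via $\approx$ (using $K \equiv L$, hence $\pub K \approx_E \pub L$) and apply $\signsym_E$ to obtain $M$; for $\blindsym_{L2}$ I use the left premise to get $R$ and apply $\blindsym_{E2}$ to $\sign{\blind M R}K$ to obtain $\sign M K$, then substitute away the introduced terms.

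The step I expect to require the most care is the interaction with normalisation and the AC-equivalence side conditions. Because sequents of $\Sscr$ are kept in normal form while $\Nscr$ reasons up to $\approx_E$ only through the explicit $\approx$-rule, I must insert $\approx$ steps precisely at the points where the two presentations disagree (notably in the $id$ case and in $\signsym_L$, where the side condition is $K \equiv L$ rather than $K = L$). None of this is deep, but getting the bookkeeping of these equality steps right, and verifying that every introduced hypothesis is genuinely re-derivable before it is substituted away, is where the argument must be checked carefully.
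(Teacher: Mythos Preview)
Your proposal is correct and follows essentially the same approach as the paper. The paper also proceeds by induction on the height of the $\Sscr$-derivation and, for each left rule, applies the induction hypothesis to the premise and then walks through the resulting $\Nscr$-derivation replacing $id$-leaves on the newly introduced hypotheses by small derivations built from the corresponding elimination rules; your substitution lemma is exactly this replacement step abstracted once and for all rather than repeated inline in each case, which is a clean organisational improvement but not a different argument.
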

\begin{proof}
Let $\Pi$ be a sequent derivation of $\Gamma \vdash M$. We construct a natural
deduction $\Pi'$ of $\Gamma \vdash M$ by induction on $\Pi.$
\begin{itemize}
\item The right-introduction rules for $\Sscr$ maps to the same introduction rules in $\Nscr.$
$\Pi'$ in this case is constructed straightforwardly from the induction hypothesis using
the introduction rules of $\Nscr.$
\item If $\Pi$ ends with an $id$ rule, i.e., $M \approx C[M_1,\ldots,M_k]$, for some
$M_1, \ldots, M_k \in \Gamma$ and $E$-context $C[..]$, we construct a derivation $\Pi_1$ of 
$\Gamma \vdash C[M_1,\ldots,M_k]$ by induction on the context $C[\ldots]$. 
This is easily done using the $f_I$ introduction rule in $\Nscr.$
The derivation $\Pi'$ is then constructed from $\Pi_1$ by an application of the $\approx$-rule.

\item Suppose $\Gamma = \Gamma' \cup \{\spr U V\}$ and $\Pi$ ends with $p_L:$
$$
\infer[p_L]
{\Gamma', \spr U V \vdash M}
{\deduce{\Gamma', \spr U V, U, V \vdash M}{\Pi_1}}
$$
By induction hypothesis, we have an $\Nscr$-derivation $\Pi_1'$
of $\Gamma', \spr U V, U, V \vdash M$. The derivation $\Pi'$ is constructed
inductively from $\Pi_1'$ by copying the same rule applications in $\Pi_1'$, except
when $\Pi_1'$ is either 
$$
\infer[id]
{\Gamma, U, V \vdash U}
{}
~ \hbox{ or } ~
\infer[id]
{\Gamma, U, V \vdash V}
{}
$$
in which case, $\Pi'$ is 
$$
\infer[p_E]
{\Gamma \vdash U}
{
  \infer[id]
  {\Gamma \vdash \spr U V}{}
}
~ \hbox{ and } ~
\infer[p_E]
{\Gamma \vdash V}
{
  \infer[id]
  {\Gamma \vdash \spr U V}{}
}
$$
respectively.

\item Suppose $\Gamma = \Gamma' \cup \{\enc U V\}$ and $\Pi$ ends with $e_L:$
$$
\infer[e_L]
{\Gamma', \enc U V \vdash M}
{
 \deduce{\Gamma \vdash V}{\Pi_1}
 &
 \deduce{\Gamma, U, V \vdash M}{\Pi_2}
}
$$
By induction hypothesis, we have an $\Nscr$-derivation $\Pi_1'$ of 
$\Gamma \vdash V$ and an $\Nscr$-derivation $\Pi_2'$ of
$\Gamma, U, V \vdash M.$ $\Pi'$ is then constructed inductively from $\Pi_2'$ 
by applying the same rules as in $\Pi_2'$, except when 
$\Pi_2'$ is either
$$
\infer[id]
{\Gamma, U, V \vdash U}{}
~ \hbox{ or } ~
\infer[id]
{\Gamma, U, V \vdash V}{}
$$
in which case, $\Pi'$ is, respectively,
$$
\infer[e_E]
{\Gamma \vdash U}
{
  \infer[id]{\Gamma \vdash \enc U V}{}
  &
  \deduce{\Gamma \vdash V}{\Pi_1'}
}
$$
and $\Pi_1'.$

\item Suppose $\Gamma = \Gamma' \cup \{\sign N K, \pub L\}$ and 
$\Pi$ ends with $\signsym_L$:
$$
\infer[\signsym_L]
{\Gamma', \sign N K, \pub L \vdash M}
{\deduce{\Gamma', \sign N K, \pub L, N \vdash M}{\Pi_1}}
$$
where $L \equiv K$ (hence $L \approx K$).
By induction hypothesis, we have an $\Nscr$-derivation
$\Pi_1'$ of 
$$
\Gamma', \sign N K, \pub L, N \vdash M.
$$
As in the previous case, the derivation $\Pi'$ is constructed
by imitating the rules of $\Pi_1'$, except for the following
$id$ case:
$$
\infer[id]
{\Gamma', \sign N K, \pub L, N \vdash N}
{}
$$
which is replaced by 
$$
\infer[\signsym_E]
{\Gamma, \sign N K, \pub L \vdash N}
{
 \infer[id]
 {\Gamma', \sign N K, \pub L \vdash \sign N K}
 {}
 &
 \infer[\approx]
 {\Gamma', \sign N K, \pub L \vdash \pub K}
 {
  \infer[id]
  {\Gamma', \sign N K, \pub L \vdash \pub L}
  {}
 }
}
$$

\item The case where $\Pi$ ends with $\blindsym_{L1}$ is similar
to the case with $e_L.$

\item Suppose $\Gamma = \Gamma' \cup \{\sign {\blind N R} K  \}$
and  $\Pi$ ends with $\blindsym_{L2}$:
$$
\infer[\blindsym_{L2}]
{\Gamma', \sign {\blind N R} K \vdash M}
{
\deduce{ \Gamma \vdash R}{\Pi_1}
&
\deduce{ \Gamma, \sign N K, R \vdash M}{\Pi_2}
}
$$
Similarly to the previous case, we apply induction hypothesis to both
$\Pi_1$ and $\Pi_2$, obtaining $\Pi_1'$ and $\Pi_2'$. The derivation
$\Pi'$ is constructed by imitating the rules of $\Pi_2'$, but with the
following $id$ instances:
$$
\infer[id]
{\Gamma, \sign N K, R \vdash \sign N K}
{}
\qquad
\infer[id]
{\Gamma, \sign N K, R \vdash R}{}
$$
replaced by 
$$
\infer[]
{\Gamma \vdash \sign N K}
{
 \infer[id]
 {\Gamma \vdash \sign {\blind N R} K}{}
 &
 \infer[]
 {\Gamma \vdash R}{\Pi_1'}
}
\quad
\hbox{ and }
\quad
\deduce{\Gamma \vdash R}{\Pi_1'}.
$$
\item Suppose $\Pi$ ends with $gs$:
$$
\infer[gs]
{\Gamma \vdash M}
{\deduce{\Gamma \vdash A}{\Pi_1} 
 &
 \deduce{\Gamma, A \vdash M}{\Pi_2}
}
$$
By induction hypothesis, we have an $\Nscr$-derivation
$\Pi_1'$ of $\Gamma \vdash A$ and an $\Nscr$-derivation $\Pi_2'$
of $\Gamma, A \vdash M.$ Again, as in the previous cases,
we construct $\Pi'$ inductively, on the height of $\Pi_2'$, 
by imitating the rules in $\Pi_2'$, except 
when $\Pi_2'$ ends with an instance of $id$ of the form
$$
\infer[id]
{\Gamma, A \vdash A}
{}
$$
in which case, $\Pi'$ is $\Pi_1'.$
\end{itemize}
\qed
\end{proof}

\paragraph{Proposition \ref{prop:S-equal-N}}
The judgment $\Gamma \vdash M$ is provable in the natural deduction system $\Nscr$ 
if and only if $\norm \Gamma \vdash \norm M$ is provable in the sequent system $\Sscr$.
\begin{proof}
Immediate from Lemma~\ref{lm:nd-to-seq} and Lemma~\ref{lm:seq-to-nd}. \qed
\end{proof}

\subsection{Proofs for Section~\ref{sec:cutelim}}

\begin{lemma}
\label{lm:var-abs}
Let $E$ be a disjoint combination of AC theories $E_1, \ldots, E_n$.
Let $M$ be a quasi $E_i$-term. 
If $M \to_{R_E} N$ then $N$ is also a quasi $E_i$ term
and $F_{E_i}(M) \to_{R_E} F_{E_i}(N).$
\end{lemma}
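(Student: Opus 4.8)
The plan is to reduce the single rewrite step $M \to_{R_E} N$ to a step that takes place entirely inside the ``pure $E_i$ cap'' of $M$, and then to push the abstraction $F_{E_i}$ through it using the fact (stated just above this lemma) that $F_{E_i}$ preserves $\equiv$. So I would first write the step as the contraction of a redex $M|_p \equiv l\sigma$ at some position $p$, using a rule $l \to r$ drawn from $R_{E_j}$ for some $j$, with $N \equiv M[r\sigma]_p$.

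First I would pin down $j$ and the location of the redex. The head of $M|_p$ agrees (modulo AC) with the head of $l$, so $M|_p$ is headed by a symbol of $\Sigma_{E_j}$. More generally, every subterm of $M$ lying on the path from the root to $p$ contains the redex and is therefore not in $E$-normal form. Since $M$ is a quasi-$E_i$ term, none of these subterms can be $E_i$-alien, because an $E_i$-alien subterm would have to be in normal form; hence every ancestor of $p$, including $M$ itself and $M|_p$, is headed by a symbol of $\Sigma_{E_i}$. By disjointness of the signatures, $M|_p$ being $E_i$-headed forces $l$ to be $E_i$-headed, so the rule lies in $R_{E_i}$, i.e. $j = i$. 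Thus the whole spine from the root down to the redex consists of $\Sigma_{E_i}$-symbols, so $p$ sits in the maximal pure-$E_i$ context of $M$.

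Next I would verify the abstracted step. Because the spine above $p$ uses only $E_i$-symbols and $F_{E_i}$ recurses (as the identity) through such symbols, $F_{E_i}$ commutes with the spine: $F_{E_i}(M)|_p = F_{E_i}(M|_p)$, and the same holds for $N$ since the contraction leaves the spine unchanged. Since $l$ and $r$ are pure $E_i$ terms, $F_{E_i}$ also commutes with the instantiation: setting $\sigma' = F_{E_i}\circ\sigma$ one gets $F_{E_i}(l\sigma) = l\sigma'$ and $F_{E_i}(r\sigma) = r\sigma'$, which covers even the degenerate case $r = x$, where both sides are $F_{E_i}(\sigma(x))$. Using that $F_{E_i}$ preserves $\equiv$ together with $M|_p \equiv l\sigma$, I obtain $F_{E_i}(M)|_p \equiv l\sigma'$, so the rule $l \to r$ applies at $p$ in $F_{E_i}(M)$ with matcher $\sigma'$; contracting it replaces $l\sigma'$ by $r\sigma'$ while leaving every off-spine subterm (identical in $M$ and $N$) untouched, yielding exactly $F_{E_i}(N) \equiv F_{E_i}(M)[r\sigma']_p$, i.e. $F_{E_i}(M) \to_{R_E} F_{E_i}(N)$.

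Finally I would check that $N$ is again quasi-$E_i$. Subterms of $N$ disjoint from $p$ are unchanged subterms of $M$, so their $E_i$-alien subterms are already normal by hypothesis; the ancestors of $p$ are $E_i$-headed and hence impose no condition. The only new material sits in $r\sigma$, and since $r$ is pure $E_i$, every $E_i$-alien subterm of $r\sigma$ occurs inside some $\sigma(x)$; each such $\sigma(x)$ is assembled (via $\oplus_i$ under the AC match) from subterms of $M|_p$, hence from subterms of $M$, so its $E_i$-alien subterms are $E_i$-alien subterms of $M$ and therefore normal. The main obstacle, and the point requiring the most care, is exactly this interaction with AC matching: the redex is only AC-equal to $l\sigma$ and the matcher may recombine $\oplus_i$-arguments, so one must transport both the abstraction variables and the normal-form status correctly across $\equiv$ — which is precisely what preservation of $\equiv$ by $F_{E_i}$ delivers. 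Proposition~\ref{prop:var-abs} then follows by iterating this lemma along a rewrite sequence.
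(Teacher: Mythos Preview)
Your argument is correct and arrives at the same conclusion, but it is organised differently from the paper's proof. The paper proceeds by structural induction on $M$: the case $M = g(\ldots)$ with $g \notin \Sigma_{E_i}$ is vacuous because $M$ would be a normal $E_i$-alien term; the case $M = f(\ldots)$ with $f \in \Sigma_{E_i}$ splits into ``redex in an argument'' (handled by the induction hypothesis) and ``redex is $M$'' (handled by writing the rule as $C[x_1,\dots,x_l] \to C'[x_1,\dots,x_l]$ with $E_i$-contexts $C,C'$, observing that each $x_j\sigma$ is quasi-$E_i$, and pushing $F_{E_i}$ through the contexts to get the abstracted step). Your approach instead works directly with the redex position $p$: you show in one shot that the whole spine from the root to $p$ is $\Sigma_{E_i}$-headed (because every node on that spine contains the redex, hence is non-normal, hence cannot be $E_i$-alien), which simultaneously pins the rule to $R_{E_i}$ and lets $F_{E_i}$ commute with the surrounding context. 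This avoids the induction and is a bit more explicit about the AC-matching subtlety (that $\sigma(x)$ may be a $\oplus_i$-combination of genuine subterms of $M|_p$, so its $E_i$-alien subterms are still alien subterms of $M$); the paper's proof handles the same point more tersely by asserting that each $x_j\sigma$ is quasi-$E_i$. Both routes are sound; yours is slightly more self-contained, while the paper's inductive structure makes the case split more visible.
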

\begin{proof}
By induction on the structure of $M$:
\begin{itemize}
\item If $M$ is a name then the lemma holds vacuously.
\item Suppose $M = f(u_1,\ldots,u_k).$
There are two cases to consider:
\begin{itemize}
\item The redex is in $u_j$. This case follows straightforwardly from the induction
hypothesis and the definition of $F_{E_i}$.
\item The redex is $M$. Then there must be a rewrite rule in $R_E$ of the form
$$
C[x_1,\ldots,x_n] \rightarrow C'[x_1,\ldots,x_n]
$$
where $C[..]$ and $C'[..]$ are $E_i$-context, such that
$$
M \equiv (C[x_1,\ldots,x_l])\sigma 
\qquad
\hbox{ and }
\qquad
N \equiv (C'[x_1,\ldots,x_l])\sigma
$$
for some substitution $\sigma.$
Note that since $M$ is a quasi $E_i$ term, it follows that 
each $x_i\sigma$ is also a quasi $E_i$ term. Hence $N$ must also be 
a quasi $E_i$ term.
From the definition of $F_{E_i}$, we have the following
equality (we abbreviate $F_{E_i}$ as $F$):
$$
\begin{array}{ll}
F(M) & \equiv F(C[x_1,\ldots,x_l]\sigma) \\
& = C[F(x_1\sigma), \ldots, F(x_l\sigma)] \\
& = C[x_1,\ldots,x_l] \sigma'
\end{array}
$$
where $\sigma'$ is the substitution $\{F(x_1\sigma)/ x_1,\ldots, F(x_l\sigma)/x_l \}.$
Similarly, we can show that $F(N) \equiv C'[x_1,\ldots,x_l]\sigma'.$
Therefore, we have $F(M) \rightarrow_{R_E} F(N).$
\end{itemize}

\item Suppose $M = g(u_1,\ldots,u_k)$ and $g\not \in \Sigma_{E_i}$.
Then $M$ is an $E_i$-alien subterm of $M$, and since $M$ is a quasi $E_i$ term,
$M$ must be in $E$-normal form. Therefore no reduction is possible, hence the lemma
holds vacuously.

\end{itemize}
\end{proof}

\paragraph{Proposition \ref{prop:var-abs}}
Let $E$ be a disjoint combination of $E_1,\ldots,E_n$. 
If $M$ is a quasi $E_i$ term and $M \to^*_{R_E} N$, then
$N$ is a quasi $E_i$ term and $F_{E_i}(M) \to^*_{R_E} F_{E_i}(N).$
\begin{proof} This follows directly from Lemma~\ref{lm:var-abs}. \qed
\end{proof}

\paragraph{\bf Proposition \ref{prop:var-abs2}}
Let $E$ be a disjoint combination of $E_1,\ldots,E_n$. 
If $M$ and $N$ are quasi $E_i$ terms and
$F_{E_i}(M) \to^*_{R_E} F_{E_i}(N)$, then 
$M \to^*_{R_E} N.$
\begin{proof}
It is enough to show that this holds for one-step rewrite
$F_{E_i}(M) \to_{R_E} F_{E_i}(N).$
This can be done by induction on the structure of $M$. 
In particular, we need to show that a rewrite rule that
applies to $F_{E_i}(M)$ also applies to $M$. 
Let $x_1,\ldots,x_k$ be the free variables in $F_{E_i}(M)$.
Let $M_1,\ldots,M_k$ be normal $E$-terms such that
$v_E(M_j) = x_j$ for each $j \in \{1,\ldots,k\}$, and
$$
\sigma = \{ M_1/x_1,\ldots,M_k/x_k\}.
$$
Then we can show by induction on the structure of $M$
and $N$, and using the fact that they are quasi $E_i$-terms, 
that 
$$
F_{E_i}(M)\sigma \equiv M
\hbox{ and }
F_{E_i}(N)\sigma \equiv N.
$$
Note that for any rewrite rule in a rewrite system,
by definition, we have that all the variables free in
the right-hand side of the rule are also free in the
left-hand side.
Hence, the free variables of $F_{E_i}(N)$ are 
among the free variables in $F_{E_i}(M)$ since they are
related by rewriting.

Now suppose there is a rewrite rule in $R_E$
$$
C[x_1,\ldots,x_l] \rightarrow C'[x_1,\ldots,x_l]
$$
where $C[..]$ and $C'[..]$ are $E_i$-contexts, such
that $F_{E_i}(M) \equiv C[x_1,\ldots,x_l]\theta$
and $F_{E_i}(N) \equiv C'[x_1,\ldots,x_l]\theta$, for
some substitution $\sigma.$
Then we have
$$
M \equiv F_{E_i}(M)\sigma \equiv (C[x_1,\ldots,x_n]\theta)\sigma \equiv C[x_1,\ldots,x_n](\theta \circ \sigma)
$$
and
$$
N \equiv F_{E_i}(N)\sigma \equiv (C'[x_1,\ldots,x_n]\theta)\sigma \equiv C'[x_1,\ldots,x_n](\theta \circ \sigma).
$$
Hence we also have $M \to_{R_E} N.$
\qed
\end{proof}

\paragraph{\bf Lemma \ref{lm:equiv}}
Let $\Pi$ be a derivation of $M_1,\ldots, M_k \vdash N$. Then for any
$M_1', \ldots, M_k'$ and $N'$
such that $M_i \equiv M_i'$ and $N \equiv N'$, there is a derivation $\Pi'$
of $M_1', \ldots, M_k' \vdash N'$ such that $|\Pi| = |\Pi'|.$
\begin{proof}
By induction on $|\Pi|.$ \qed
\end{proof}

\paragraph{\bf Lemma \ref{lm:decomp1}}
Let $X$ and $Y$ be terms in normal form. 
If $\Gamma, f(X,Y) \vdash M$ is cut-free derivable, where $f$ is a binary constructor, 
then $\Gamma, X, Y \vdash M$ is also cut-free derivable.
\begin{proof}
Let $\Pi$ be a cut-free derivation of $\Gamma, f(X,Y) \vdash M$. We construct a cut-free derivation
$\Pi'$ of $\Gamma, X, Y \vdash M$ by induction on $|f(X,Y)|$ with subinduction on $|\Pi|.$ 
The only non-trivial cases are when $\Pi$ ends with $\blindsym_{L2}$, acting on $f(X,Y)$,
and when $\Pi$ ends with $id$ and $f(X,Y)$ is used in the rule. We examine these cases
in more details below.

\begin{itemize}
\item Suppose $\Pi$ ends with $\blindsym_{L2}$, acting on $f(X,Y)$, i.e.,
$f = \signsym$ and $X = \blind N R$:
$$
\infer[\blindsym_{L2}]
{\Gamma, \sign {\blind N R} Y \vdash M}
{
\deduce{\Gamma, \sign {\blind N R} Y \vdash R}{\Pi_1}
&
\deduce{\Gamma, \sign {\blind N R} Y, \sign N Y, R \vdash M}{\Pi_2}
}
$$
Applying the inner induction hypothesis (on proof height) 
to $\Pi_1$ and $\Pi_2$ we obtain two derivations $\Pi_1'$ and $\Pi_2'$ of
$$
\Gamma, \blind N R, Y \vdash R
\quad \hbox{ and }
\quad
\Gamma, \blind N R, Y, \sign N Y, R \vdash M.
$$ 
Next we apply the outer induction hypothesis (on the size of $f(X,Y)$) to 
decompose $\sign N Y$ in the latter sequent to get a derivation $\Pi_2''$ of
$$
\Gamma, \blind N R, N, Y, R \vdash M.
$$
The derivation $\Pi'$ is constructed as follows:
$$
\infer[\blindsym_{L1}]
{\Gamma, \blind N R, Y \vdash M}
{
\deduce{\Gamma, \blind N R, Y \vdash R}{\Pi_1'}
&
\deduce{\Gamma, \blind N R, N, Y, R \vdash M}{\Pi_2''}
}
$$
\item Suppose $\Pi$ ends with $id$. The only non-trivial case is 
when $f(X,Y)$ is used in the rule, that is, we have
$$
M \approx C[f(X,Y)^n, M_1, \ldots, M_k]
$$
where $M_1, \ldots, M_k \in \Gamma$, $C[\ldots]$ is an $E$-context and 
$f(X,Y)$ fills $n$-holes in $C[\ldots].$ We distinguish several cases:
\begin{itemize}
\item There is a guarded subterm $A$ in $M$ or some $M_i$ such that $f(X,Y) \equiv A.$
Note that in this case $A$ must be of the form $f(X',Y')$ for some $X' \equiv X$
and $Y' \equiv Y.$
In this case, $\Pi'$ is constructed as follows:
$$
\infer[gs]
{\Gamma, X,Y \vdash M}
{
  \deduce{\Gamma, X,Y \vdash f(X',Y')}{\Xi}
  &
  \infer[id]
  {\Gamma, X,Y, f(X',Y') \vdash M}{}
}
$$
where $\Xi$ is a derivation formed using $id$ and the right rules for the constructor
$f$.

\item Suppose that there is no subterm $A$ of $M$, $M_1, \ldots, M_k$ such that $A \equiv f(X,Y).$ 
Note that since $M$ is in normal form, we have
$$
C[f(X,Y)^n,M_1,\ldots,M_k] \to^* M. 
$$
and both $C[f(X,Y)^n,M_1,\ldots,M_k]$ and $M$ are quasi $E$-terms. 
Let $x = v(f(X,Y))$. 
It follows from Proposition~\ref{prop:var-abs} that
$$
F_E(C[f(X,Y)^n, M_1,\ldots, M_k]) = C[x^n, F_E(M_1), \ldots, F_E(M_k)] \to^* F_E(M).
$$
Since no subterms of $M$ and $M_1,\ldots,M_k$
are equivalent to $f(X,Y)$, $x$ does not appear in
any of $F_E(M)$, $F_E(M_1), \ldots, F_E(M_k)$.
Now let $a$ be a name that does not occur in $\Gamma$, $X$, $Y$ or $M$.
Since rewriting is invariant under variable/name substitution, by substituting
$a$ for $x$ in the above sequence of rewrites, we have
$$
F_E(C[a^n, M_1,\ldots, M_k]) = C[a^n, F_E(M_1), \ldots, F_E(M_k)] \to^* F_E(M).
$$
Now by Proposition~\ref{prop:var-abs2}, we have
$$
C[a^n, M_1, \ldots, M_k] \to^* M.
$$
By substituting $X$ for $a$ in this sequence, we have
$$
C[X^n,M_1,\ldots,M_k] \longrightarrow_{\Rscr}^* M. 
$$
Thus, in this case, $\Pi'$ is constructed by an application of $id.$

\end{itemize}
\end{itemize}
\qed
\end{proof}

\paragraph{\bf Lemma \ref{lm:decomp2}}
Let $X_1,\ldots, X_k$ be normal terms and let 
$\Pi$ be a cut-free derivation of $\Gamma, \norm{f(X_1,\ldots,X_k)} \vdash M$, where $f \in \Sigma_E.$
Then there exists a cut-free derivation $\Pi'$ of $\Gamma, X_1,\ldots,X_k \vdash M.$ 
\begin{proof}
By induction on $|\Pi|.$ The case where $\Pi$ ends with $id$, or rules in which
$\norm{f(X_1,\ldots,X_k)}$ is not principal, is trivial. The other cases, where
$\Pi$ ends with a rule applied to $\norm{f(X_1,\ldots,X_k)}$, are given in the following.
\begin{itemize}
\item Suppose $\Pi$ ends with $p_L$ on $\norm{f(X_1,\ldots,X_k)}.$ This means
that $\norm{f(X_1,\ldots,X_k)}$ is a guarded term, i.e., it is a pair $\spr U V$ for
some $U$ and $V$, and therefore
$$
f(X_1,\ldots,X_k) \to^* \spr U V.
$$
Let $x = F_E(\spr U V)$. By Proposition~\ref{prop:var-abs}, we have
$$
f(F_E(X_1,),\ldots, F_E(X_k)) \to^* x.
$$
Obviously, $x$ has to occur in $F_E(X_i)$ for some $X_i$. Without loss
of generality we assume that $i = 1.$ This means that there exists
a subterm $A$ of $X_1$ such that $A = \spr {U'} {V'}$ and
$U \equiv U'$ and $V \equiv V'$. That is, $X_1 = C[\spr {U'} {V'}]$
for some context $C[.].$

Let $\Gamma'$ be the set $\Gamma \cup \{X_1,\ldots,X_k\}$. 
Then $\Pi'$ is the derivation
$$
\infer[gs]
{\Gamma, C[\spr {U'} {V'}], X_2, \ldots, X_k \vdash M}
{
  \infer[id]
  {\Gamma' \vdash \spr {U'} {V'}}{}
  &
  \deduce{\Gamma', \spr {U'} {V'} \vdash M}{\Pi_1}
}
$$
The instance of $id$ above is valid since $\spr U V \approx f(X_1,\ldots,X_k)$
and $X_1,$ $\ldots,$ $X_k \in \Gamma.$ The derivation $\Pi_1$ is obtained by weakening 
$\Pi$ with $X_1,\ldots, X_k$  and applying Lemma~\ref{lm:equiv} to replace
$\spr U V$ with its equivalent $\spr {X'} {Y'}.$
The cases where $\norm{f(X_1,\ldots,X_k)}$ is headed with some other constructor
are proved analogously.

\item Suppose $\Pi$ ends with $gs$ on $\norm{f(X_1,\ldots,X_k)}$:
$$
\infer[gs]
{\Gamma, \norm{f(X_1,\ldots,X_k)} \vdash M}
{
\deduce{\Gamma' \vdash A}{\Pi_1}
&
\deduce{\Gamma', A \vdash M}{\Pi_2}
}
$$
where $A$ is a guarded subterm of $\norm{f(X_1,\ldots,X_k)}$
and $$\Gamma' = \Gamma \cup \{\norm{f(X_1,\ldots,X_k)}\}.$$ 
Using a similar argument as in the previous case (utilising 
Proposition~\ref{prop:var-abs}), we can show that
$A \equiv A'$ for some $A'$ which is either an $E$-alien 
subterm of some $X_i$ (w.l.o.g., assume $i=1$) or 
a guarded subterm of an $E$-alien subterm of $X_i$.
In either case, we have that $X_1 = C[A']$ for some context $C[.].$
Then $\Pi'$ is
$$
\infer[gs]
{\Gamma, X_1, \ldots, X_k \vdash M}
{
  \deduce{\Gamma'' \vdash A'}{\Pi_1'}
  &
  \deduce{\Gamma'',A' \vdash M}{\Pi_2'}
}
$$
where $\Gamma'' = \Gamma \cup \{X_1,\ldots,X_k\}$ and $\Pi_1'$ and $\Pi_2'$ are obtained
by applying the induction hypothesis on $\Pi_1$ and $\Pi_2$, followed by
applications of Lemma~\ref{lm:equiv} to replace $A$ with its equivalent $A'.$
\end{itemize}
\qed
\end{proof}

\paragraph{\bf Lemma \ref{lm:decomp3}}
Let $M_1, \ldots, M_k$ be terms in normal form and let $C[\ldots]$ be a $k$-hole
$E$-context. If $\Gamma, \norm{C[M_1,\ldots,M_k]} \vdash M$ is cut-free derivable,
then $\Gamma, M_1, \ldots, M_k \vdash M$ is also cut-free derivable.
\begin{proof}
By induction on the size of $C[\ldots]$, Lemma~\ref{lm:equiv} and Lemma~\ref{lm:decomp2}. \qed
\end{proof}

\paragraph{\bf Theorem \ref{thm cut elim}}
The cut rule is admissible for $\Sscr$.
\begin{proof}
We give a set of transformation rules for derivations ending with cuts and show
that given any derivation, there is a sequence of reductions that applies to
this derivation, and terminates with a cut free derivation with the same end sequent. 
This is proved by induction on 
% the size of the cut term, with subinduction on
the height of the {\em left premise derivation} immediately above the cut rule. 
This measure is called the {\em cut rank}. As usual in cut elimination, we proceed by eliminating
the topmost instances of cut with the highest rank. So in the following, we suppose a given
derivation $\Pi$ ending with a cut rule, which is the only cut in $\Pi$, and then
show how to transform this to a cut free derivation $\Pi'.$

The cut reduction is driven by the left premise derivation of the cut. 
We distinguish several cases, based on the last rule of the left premise derivation. 
\begin{enumerate}
\item Suppose the left premise of $\Pi$ ends with either $p_R$, $e_R$, $\signsym_R$
or $\blindsym_R$,  thus $\Pi$ is 
$$
\infer[cut]
{\Gamma \vdash R}
{
 \infer[\rho]
 {\Gamma \vdash f(M, N)}
 {\deduce{\Gamma \vdash M}{\Pi_1} & \deduce{\Gamma \vdash N}{\Pi_2}}
 &
 \deduce{\Gamma, f(M, N) \vdash R}{\Pi_3}
}
$$
where $f$ is a constructor and $\rho$ is its right introduction rule.
By Lemma~\ref{lm:decomp1}, 
we have a cut free derivation $\Pi_3'$ of 
$\Gamma, M, N \vdash R.$ By applying Lemma~\ref{lm:weak} to $\Pi_2$, we also have a cut-free derivation
$\Pi_2'$ of $\Gamma, M \vdash N$ such that $|\Pi_2| = |\Pi_2'|.$
The above cut is then reduced to
$$
\infer[cut]
{\Gamma \vdash R}
{
 \deduce{\Gamma \vdash M}{\Pi_1}
 &
 \infer[cut]
 {\Gamma, M \vdash R}
 {
  \deduce{\Gamma, M \vdash N} {\Pi_2'}
  &
  \deduce{\Gamma, M, N \vdash R}{\Pi_3'}
 }
}
\enspace .
$$
These two cuts can then be eliminated by induction hypothesis since their
left premises are of smaller height than the left premise of $\Pi.$

\item Suppose the left premise of the cut ends with a left rule 
acting on $\Gamma.$ 
We show here the case where the left-rule has only one premise; generalisation 
to the other case (with two premises) is straightforward. Therefore $\Pi$ is of the form:
$$
\infer[cut]
{\Gamma \vdash R}
{
 \infer[\rho]
 {\Gamma \vdash M}
 {
  \deduce{\Gamma' \vdash M}{\Pi_1}
 }
 &
 \deduce{\Gamma, M \vdash R}{\Pi_2}
}
$$
By inspection of the inference rules in Figure~\ref{fig:msg}, it is clear that 
in the rule $\rho$ above, we have $\Gamma \subseteq \Gamma'$. We can therefore weaken $\Pi_2$
to a derivation $\Pi_2'$ of $\Gamma', M \vdash R$ with $|\Pi_2| = |\Pi_2'|$. 
The cut is then reduced as follows.
$$
\infer[\rho]
{\Gamma \vdash R}
{
 \infer[cut]
 {\Gamma' \vdash R}
 {
   \deduce{\Gamma' \vdash M}{\Pi_1}
   &
   \deduce{\Gamma', M \vdash R}{\Pi_2}
 }
}
$$
The cut rule above $\rho$ can be eliminated by induction hypothesis, 
the height of the left premise of the cut is smaller than the left
premise of the original cut. 

\item Suppose the left premise of the cut ends with $gs$, but using a subterm from
the right hand side of the sequent, i.e., $\Pi$ is
$$
\infer[cut]
{\Gamma \vdash R}
{
  \infer[gs]
  {\Gamma \vdash C[A]}
  {
    \deduce{\Gamma \vdash A}{\Pi_1}
    &
    \deduce{\Gamma, A \vdash C[A]}{\Pi_2}
  }
  &
  \deduce{\Gamma, C[A] \vdash R}{\Pi_3}
}
$$
If $C[.]$ is an empty context, then $C[A] \equiv A$ and the above cut reduces to
$$
\infer[cut]
{\Gamma \vdash R}
{
  \deduce{\Gamma \vdash A}{\Pi_1}
  &
  \deduce{\Gamma, A \vdash R}{\Pi_3}
}
$$
This cut can be reduced by induction hypothesis, since the height of the left premise
derivation ($\Pi_1$) is smaller than the left premise of the original cut.
If $C[.]$ is a non-empty context, the above cut reduces to the following two cuts:
$$
\infer[cut]
{\Gamma \vdash R}
{
  \deduce{\Gamma \vdash A}{\Pi_1}
  &
  \infer[cut]
  {\Gamma, A \vdash R}
  {
    \deduce{\Gamma, A \vdash C[A]}{\Pi_2}
    &
    \deduce{\Gamma, A, C[A] \vdash R}{\Pi_3'}
  } 
}
$$ 
The derivation $\Pi_3'$ is obtained by weakening $\Pi_3$ with $A$ (Lemma~\ref{lm:weak}).
Both cuts can be removed by induction hypothesis (the upper cut followed by the lower cut).

\item Suppose the left premise of the cut ends with the $id$-rule:
$$
\infer[cut]
{\Gamma \vdash R}
{
 \infer[id]
 {\Gamma \vdash M}
 {}
 &
 \deduce{\Gamma, M \vdash R}{\Pi_1}
}
$$
where $M = \norm{C[M_1,\ldots,M_k]}$ and 
$M_1, \ldots, M_k \in \Gamma.$ 
In this case, we apply Lemma~\ref{lm:decomp3} to $\Pi_1$, hence we get
a cut free derivation $\Pi'$ of 
$\Gamma \vdash R.$
\end{enumerate}
\qed
\end{proof}

\subsection{Proofs for Section~\ref{sec:normal}}

\begin{lemma}
\label{lm:perm1}
Let $\Pi$ be a cut-free derivation of $\Gamma \vdash M.$ Then there is a cut-free 
derivation of the same sequent such that all the right rules
appear above left rules.
\end{lemma}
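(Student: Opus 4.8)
The plan is to prove this by a standard rule-permutation argument: I will show that whenever a right rule occurs immediately below a left rule, the two can be swapped so that the right rule moves up and the left rule moves down, and then argue that iterating such swaps terminates in a cut-free derivation of the same sequent having the required shape. Throughout I rely on the fact (Lemma~\ref{lm:weak}) that weakening the left-hand context is height-preserving and introduces no new rule occurrences.

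First I would establish the single permutation step. Suppose a right rule $\rho \in \{p_R, e_R, \signsym_R, \blindsym_R\}$ derives $\Gamma \vdash f(A,B)$ from premises $\Gamma \vdash A$ and $\Gamma \vdash B$, and that the subderivation of the first premise ends in a left rule $\lambda$. Since every left rule leaves the succedent untouched and only enlarges the antecedent, exactly one premise of $\lambda$ carries the succedent $A$ (the main premise $\Gamma, S \vdash A$, with $S$ the terms $\lambda$ introduces), while any further premises are side premises with auxiliary succedents (a key for $e_L$, $\blindsym_{L1}$, $\blindsym_{L2}$, or the abstracted term for $gs$). I would rebuild the derivation with $\lambda$ as the lowermost rule, deriving $\Gamma \vdash f(A,B)$ from the unchanged side premises and from a new main premise $\Gamma, S \vdash f(A,B)$; the latter is obtained by applying $\rho$ to the original main premise $\Gamma, S \vdash A$ together with a height-preserving weakening of $\Gamma \vdash B$ to $\Gamma, S \vdash B$. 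Because left rules ignore the succedent, $\lambda$ applies just as well with $f(A,B)$ in place of $A$; the only side condition that mentions the succedent is that of $gs$, and there I would observe that a guarded subterm of $\Gamma \cup \{A\}$ is still a guarded subterm of $\Gamma \cup \{f(A,B)\}$ because $A$ is a subterm of $f(A,B)$. The symmetric case, where the second premise of $\rho$ ends in $\lambda$, is identical, and cut-freeness is preserved since the transformation only rearranges existing rules of $\Sscr$ and invokes weakening.

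Finally I would organise the iteration and prove termination using the measure $\mu(\Pi)$ that counts the pairs $(\lambda,\rho)$ in which a left rule $\lambda$ occurs above a right rule $\rho$. If $\mu(\Pi)>0$, then choosing a right rule $\rho$ maximal in the ``above'' ordering among those having a left rule above them forces the rule immediately above $\rho$ in the offending premise to be a left rule (otherwise that rule would be a right rule above $\rho$ still carrying a left rule above it, contradicting maximality), so an actual permutation step is available. Performing that step removes the pair $(\lambda,\rho)$ and, when $\lambda$ is branching, also the pairs between $\rho$ and the left rules of the side premises, while creating no new pairs: the left rules of the main premise were already counted above $\rho$, and the weakening of the other premise of $\rho$ introduces no left rules. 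Hence $\mu$ strictly decreases, and being a natural number it eventually reaches $0$, at which point no left rule lies above any right rule -- exactly the claim. The main obstacle, and the point demanding the most care, is precisely this bookkeeping: confirming that each branching left rule is handled by pushing $\rho$ only into its main premise, and that no permutation step manufactures a fresh inversion, so that the measure genuinely decreases.
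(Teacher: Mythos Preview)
Your proposal is correct and follows essentially the same approach as the paper: both permute a right rule upward past an adjacent left rule, using height-preserving weakening on the untouched premise and checking that the $gs$ side condition survives because the old succedent is a subterm of the new one. The only cosmetic difference is that the paper organises termination by a nested induction (number of offending right rules, then height of the derivation), whereas you use a global inversion count $\mu$; these are equivalent bookkeeping devices for the same local-swap argument.
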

\begin{proof}
We permute any offending right rules up over any left
rules. This is done by induction on the number of occurrences of the offending rules.
We first show the case where $\Pi$ has at most one offending right rule.
In this case, we show, by induction on the height of $\Pi$, that any offending
right-introduction rule can be permuted up in the derivation tree until it 
is above any left-introduction rule. 
We show here a non-trivial case involving $gs$; the others are treated analogously.
Suppose $\Pi$ is as shown below at left
where $\rho$ denotes a right introduction rule for the constructor $f$
and $A$ is a guarded subterm of $M$.
By the weakening lemma (Lemma~\ref{lm:weak}), we have a derivation
$\Pi_3'$ of $\Gamma, A \vdash N$ with $|\Pi_3'| = |\Pi_3|.$ The
original derivation $\Pi$ is
then transformed into the derivation shown below at right:
$$
\infer[\rho]
{\Gamma \vdash f(M,N)}
{
 \infer[gs]
 {\Gamma \vdash M}
 {
  \deduce{\Gamma \vdash A}{\Pi_1}
  &
  \deduce{\Gamma, A \vdash M}{\Pi_2}
 }
 &
 \deduce{\Gamma \vdash N}{\Pi_3}
}
\hspace{1cm}
\infer[gs]
{\Gamma \vdash f(M,N)}
{
 \deduce{\Gamma \vdash A}{\Pi_1}
 &
 \infer[\rho]
 {\Gamma, A \vdash f(M,N)}
 {
  \deduce{\Gamma,A \vdash M}{\Pi_2}
  &
  \deduce{\Gamma, A \vdash N}{\Pi_3'}
 }
}
$$ 

The rule $\rho$ in the right premise can then be further permuted up (i.e., if
$\Pi_2$ or $\Pi_3'$ ends with a left rule) by induction hypothesis. 

The derivation $\Pi'$ is then constructed by repeatedly applying the above
transformation to the topmost offending rules until all of them appear
above left-introduction rules.
\qed
\end{proof}

\paragraph{\bf Proposition \ref{prop:norm}}
If $\Gamma \vdash M$ is derivable then it has a normal derivation. 
\begin{proof}
Let $\Pi$ be a cut-free derivation of $\Gamma \vdash M$. By Lemma~\ref{lm:perm1},
we can assume without loss of generality that all the right rules in
$\Pi$ appear above the left rules. 
We construct a normal derivation $\Pi'$ of the same sequent by induction
on the number of offending left rules in $\Pi$.

We first consider the case where $\Pi$ has at most one offending left rule. 
Let $\Xi$ be a subtree of $\Pi$ where the offending rule occurs, i.e., 
$\Xi$ ends with a branching left rule, whose left premise derivation ends with
a left rule. We show by induction on the height of the left premise derivation of 
the last rule in $\Xi$ that $\Xi$ can be transformed into a normal derivation.
There are two cases to consider: one in which the left premise derivation ends with
a branching left rule and the other where it ends with a non-branching left rule.
We consider the former case here, the latter can be dealt with analogously.
So suppose $\Xi$ is of the form:
$$
\infer[L_1]
{\Gamma_1 \vdash M'}
{
  \infer[L_2]
  {\Gamma_1 \vdash N_1}
  {
    \deduce{\Gamma_1 \vdash N_2}{\Pi_1}
    &
    \deduce{\Gamma_2 \vdash N_1}{\Pi_2} 
  }
  &
  \deduce{\Gamma_3 \vdash M'}{\Pi_3}
}
$$
where $L_1$ is a left rule, and $\Pi_1$, $\Pi_2$ and $\Pi_3$ are normal derivations, 
$\Gamma_2 \supseteq \Gamma_1$ and $\Gamma_3 \supseteq \Gamma_1.$ We first weaken $\Pi_3$ into a derivation 
$\Pi_3'$ of $\Gamma_4 \vdash M'$, where $\Gamma_4 = \Gamma_2 \cup \Gamma_3$. 
Such a weakening can be easily shown to not 
affect the shape of the derivations (i.e., it does not introduce or
remove any rules in $\Pi_3$). $\Xi$ is then transformed into 
$$
\infer[L_2]
{\Gamma_1 \vdash M'}
{
  \deduce{\Gamma_1 \vdash N_2}{\Pi_1}
  &
  \infer[L_1]
  {\Gamma_2 \vdash M'}
  {
    \deduce{\Gamma_2 \vdash N_1}{\Pi_2}
    &
    \deduce{\Gamma_4 \vdash M'}{\Pi_3'}
  }
}
$$
By inspection of the rules in Figure~\ref{fig:msg}, it can be shown
that this transformation is valid for any pair of left rules $(L_1,L_2).$
Note that this transformation may introduce at most two offending
left rules, i.e., if $\Pi_1$ and/or $\Pi_2$ end with left rules. 
But notice that the left premise derivations of both $L_1$ and $L_2$
in this case have smaller height than the left premise derivation 
of $L_1$ in $\Xi$. By induction hypothesis, the right premise
derivation of $L_2$ can be transformed into a normal derivation, say
$\Pi_4$, resulting in 
$$
\infer[L_2]
{\Gamma_1 \vdash M'}
{
  \deduce{\Gamma_1 \vdash N_2}{\Pi_1}
  &
  \deduce{\Gamma_2 \vdash M'}{\Pi_4}
}
$$
By another application of the induction hypothesis, this derivation
can be transformed into a normal derivation.  

The general case where $\Pi$ has more than one offending rules can be
dealt with by transforming the topmost occurrences of the left rule
one by one following the above transformation. 
\qed 
\end{proof}

\paragraph{\bf Proposition \ref{prop:normal}}
Every sequent $\Gamma \vdash M$ is provable in $\Sscr$ if and only if it
is provable in $\Lscr.$
\begin{proof}
This follows immediately from cut elimination for $\Sscr$ and the
normal form for $\Sscr$ (Proposition~\ref{prop:norm}). \qed
\end{proof}

\paragraph{\bf Lemma \ref{lm:St-inv}}
Let $\Pi$ be an $\Lscr$-derivation of $\Gamma \vdash M.$
Then for every sequent $\Gamma' \vdash M'$ occuring in $\Pi$,
we have $\Gamma' \cup \{M'\} \subseteq St(\Gamma \cup \{M\}).$
\begin{proof}
By induction on $|\Pi|.$
It is enough to show that for each rule $\rho$ in $\Lscr$ other than $r$
$$
\infer[\rho]
{\Gamma \vdash M}
{\Gamma' \vdash M'}
$$
we have that $St(\Gamma \cup \{M\}) = St(\Gamma \cup \{M'\})$. 

The non-trivial case is the rule $\blindsym_2$:
$$
\infer[\blindsym_2]
{\Gamma_1, \sign {\blind N R} K \vdash M}
{\Gamma_1, \sign {\blind N R} K, \sign N K, R \vdash M}
$$
where $\Gamma = \Gamma_1 \cup \{ \sign {\blind N R} K \}.$
The premise of the rule has a term $\sign N K$ which may not occur in the
conclusion. However, the proper subterms of $\sign N K$ are
included in the proper subterms of $\sign {\blind N R} K$, hence
both the premise and the conclusion have the same set of proper
subterms. 
Notice that $\sign N K \in sst(\Gamma)$, since
both $N$ and $K$ are in $pst(\Gamma).$ 
Therefore in this case we also have that 
$
St(\Gamma \cup \{ M \}) = St(\Gamma' \cup \{M'\}). 
$
\qed
\end{proof}

\paragraph{\bf Lemma \ref{lm:linear}}
If there is an $\Lscr$-derivation of $\Gamma \vdash M$ then there is an $\Lscr$-derivation
of the same sequent whose length is at most quadratic with respect to the size
of $\Gamma \cup \{M\}.$
\begin{proof}
  We first note that any derivation of $\Gamma \vdash M$ can be turned
  into one in which every sequent in the derivation occurs exactly
  once on a branch. Our rules preserve their principal formula when
  read upwards from conclusion to premise, hence the left hand sides
  of the sequents as we go up a branch accumulate more and more
  formulae. That is, they form an increasing chain. At worst, each
  such rule adds only one formula from $St(\Gamma \cup \{M\})$.  Thus,
  by Lemma~\ref{lm:St-inv}, the number of different sequents on a
  branch is bounded by the cardinality of $St(\Gamma \cup \{M\})$,
  which is quadratic in the size of $\Gamma \cup \{M\}.$ \qed
\end{proof}

\begin{lemma}
\label{lm:right-deducibility}
The decidability of the relation $\Vdash_\Rscr$ is polynomially
reducible to the decidability of elementary deduction $\Vdash_E$.
\end{lemma}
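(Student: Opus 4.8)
The plan is to give a structural, recursive characterisation of $\Vdash_\Rscr$ and then turn it into a bottom-up decision procedure that issues exactly one elementary-deduction query per subterm of the goal. The starting observation is that every right rule ($p_R$, $e_R$, $\signsym_R$, $\blindsym_R$) leaves the left-hand side $\Gamma$ unchanged and, read downwards, introduces exactly the head constructor of the term on the right. Hence in any derivation using only right rules and $id$, the context $\Gamma$ is identical at every sequent, and the last rule is determined by the shape of the right-hand term: it is $id$, or it is the unique right rule for the head constructor of $M$ when that constructor is one of pairing, $\enc{\cdot}{\cdot}$, $\signsym$, or $\blindsym$. Note that $\pubsym$ and the symbols of $\Sigma_E$ have no right rule, so a term headed by one of those can only be obtained by $id$.

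This yields the equivalence
\[
\Gamma \Vdash_\Rscr M
\iff
\Gamma \Vdash_E M
\ \mbox{or}\
\bigl(M = f(M_1,\dots,M_k),\ f \mbox{ a constructor with a right rule, and } \Gamma \Vdash_\Rscr M_i \mbox{ for all } i\bigr).
\]
First I would prove this characterisation: the left-to-right direction by induction on the height of a right-rule derivation, and the right-to-left direction by reassembling a derivation, where an $id$ leaf closes a branch and an application of the appropriate right rule rebuilds $M$ from derivations of its immediate subterms. Since $M$ is in normal form and every constructor lies outside $\Sigma_E$, each immediate subterm $M_i$ is again in normal form, so the recursion stays within the subterms of $M$ and never leaves the world of normal terms.

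Next I would extract a decision procedure. Working over the DAG representation of $St(\Gamma \cup \{M\})$, I would compute the predicate $\Gamma \Vdash_\Rscr M'$ for every subterm $M'$ of $M$ in increasing order of size (bottom-up in the DAG), memoising the results. For each $M'$ I issue a single elementary-deduction query $\Gamma \Vdash_E M'$; if it succeeds, $M'$ is marked deducible, and otherwise, when $M'$ is headed by a constructor with a right rule, I mark it deducible exactly when all of its immediate subterms have already been marked deducible. Because $\Gamma$ is invariant along every branch, all these queries share the same left-hand side and differ only in the right-hand term, and by the characterisation only subterms of $M$ ever need to be queried.

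The complexity bound then follows. Let $n = |St(\Gamma \cup \{M\})|$. There are at most $n$ distinct subterms, so the procedure makes at most $n$ elementary-deduction queries, each on an instance of size $O(n)$ and hence of cost $O(f(n))$; the remaining bookkeeping (DAG traversal and looking up the status of immediate subterms) is polynomial in $n$. Thus $\Vdash_\Rscr$ is decided in time $O(n \cdot f(n))$ up to a polynomial overhead, which is the required polynomial reduction to $\Vdash_E$. The one point that needs care, and which I would single out as the crux, is the bound on the number of elementary queries: it rests precisely on the invariance of $\Gamma$ under right rules together with the fact that the recursion never leaves the set of subterms of $M$, so that memoisation keeps the query count linear rather than exponential as in a naive top-down search.
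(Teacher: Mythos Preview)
Your proposal is correct and follows essentially the same approach as the paper: exploit that right rules leave $\Gamma$ fixed and are driven by the head constructor of the goal, so proof search reduces to recursing into the immediate subterms of $M$ and issuing one $\Vdash_E$ query per distinct subterm. The paper's own proof is terser---it states the top-down procedure and asserts the bound of $n$ queries without spelling out memoisation---whereas you make the recursive characterisation, the invariance of $\Gamma$, and the bottom-up DAG traversal explicit, which gives a more careful justification of the same linear bound.
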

\begin{proof}
  Recall that the relation $\Gamma \Vdash_\Rscr M$ holds if we can
  derive $\Gamma \vdash M$ using only right-rules and $id$.  Here is a
  simple proof search procedure for $\Gamma \vdash M$, using only
  right-rules:
\begin{enumerate}
\item If $\Gamma \vdash M$ is elementarily deducible, then we are
  done.
\item Otherwise, apply a right-introduction rule (backwards) to
  $\Gamma \vdash M$ and repeat step 1 for each obtained premise, and
  so on. If no such rules are applicable, then $\Gamma \vdash M$ is
  not derivable.
\end{enumerate}
There are at most $n$ iterations where $n$ is the number of distinct
subterms of $M.$ Note that the check for elementary deducibility in
step 1 is done on problems of size less or equal to $n$.  \qed
\end{proof}

Before we proceed with proving Theorem~\ref{thm:deducibility}, let us first
define the notion of a {\em principal term} in a left-rule in the proof system $\Lscr$
(we refer to Figure~\ref{fig:linear} in the following definition):
\begin{itemize}
\item $\spr M N$ is the principal term of $lp$.
\item $\enc M K$ is the principal term of $le$.
\item $\sign M K$ is the principal term of $\signsym$.
\item $\blind M K$ is the principal term of $\blindsym_1$.
\item $\sign {\blind M R} K$ is the principal term of $\blindsym_2.$
\item $A$ is the principal term of $ls$.
\end{itemize}

Given a sequent $\Gamma \vdash M$ and a pair of principal-term and a left-rule $(N, \rho)$, 
we say that the pair $(N,\rho)$ is {\em applicable} to the sequent if 
\begin{itemize}
\item $\rho$ is $ls$, $N$ is a guarded subterm of $\Gamma \cup \{M\}$, and
there is an instance of $\rho$ with $\Gamma, N \vdash M$ as its premise;

\item $\rho$ is not $ls$, $N \in \Gamma$,  and there is an instance of $\rho$
with $\Gamma \vdash M$ as its conclusion;
\end{itemize}

Let us assume that the complexity of $\Vdash_E$ is $O(f(n)).$
We note the following two facts:
Given a sequent $\Gamma \vdash M$ and a pair of principal-term and a left-rule $(N, \rho)$, 
\begin{description}
\item[F1] the complexity of checking whether $(N, \rho)$ is applicable to $\Gamma \vdash M$
is $O(n^l f(n))$ for some constant $l$;

\item[F2] if $(N, \rho)$ is applicable to $\Gamma \vdash M$, then there is a unique
sequent $\Gamma' \vdash M$ such that
the sequent below 
is a valid instance of $\rho$:
$$
\infer[\rho]
{\Gamma \vdash M}
{\Gamma' \vdash M}
$$
\end{description}
Note that for (F1) to hold, we need to assume a DAG representation
of sequents with maximal sharing of subterms.
The complexity of checking whether a rule is applicable or not then consists of
\begin{itemize}
\item pointer comparisons;
\item pattern match a subgraph with a rule;
\item checking equality module associativity and commutativity (for the rule $\signsym$);
\item and checking $\Vdash_\Rscr$.
\end{itemize}
The first three can be done in polynomial time; and the last one is polynomially
reducible to $\Vdash_E$ (Lemma~\ref{lm:right-deducibility}). 

\paragraph{\bf Theorem \ref{thm:deducibility}}
The decidability of the relation $\Vdash_\Lscr$ is polynomially reducible
to the decidability of elementary deduction $\Vdash_E.$
\begin{proof}
Let $n$ be the size of $St(\Gamma \cup \{M\})$.
Notice that the left-rules in Figure~\ref{fig:linear} are invertible (they accumulate
terms, reading the rules bottom-up), so one does not lose provability by
applying any of the rules in proof search. Thus by blindly applying the left-rules, we
eventually reach a point where the right-rule ($r$) is applicable, hence the original sequent
is provable, or we reach a ``fix point''
where we encounter all previous sequents. For the latter, we show that there is a polynomial
bound to the number of rule applications we need to try before concluding that the original
sequent is not provable. 

Let $M_1, \ldots, M_n$ be an enumeration of the set $St(\Gamma \cup \{M\}).$
Suppose $\Gamma \vdash M$ is provable in $\Lscr.$ Then there is a shortest proof 
in $\Gamma$ where each sequent appears exactly once in the proof.
This also means that there exists a sequence of principal-term-and-rule pairs
$$
(M_{i_1}, \rho_1), \ldots, (M_{i_q}, \rho_q)
$$
that is applicable, successively, to $\Gamma \vdash M$.
Note that $q \leq n$ by Lemma~\ref{lm:linear}.

A simple proof search strategy for $\Gamma \vdash M$ is therefore to 
repeatedly try all possible applicable pairs $(M', \rho')$ for each possible
$M' \in St(\Gamma \cup \{M\})$ and each left-rule $\rho'$. More precisely:
Let $j := 0$ and initialise $\Delta := \Gamma$ 
\begin{enumerate}
\item $j := j + 1$.
\item If $\Delta \Vdash_\Rscr M$ then we are done.
\item Otherwise, for $k = 1$ to $n$ do
  \begin{itemize}
  \item[] for every left-rule $\rho$ do
    \begin{itemize}
  \item[] if $(M_k, \rho)$ is applicable to $\Delta \vdash M$, then
    let $\Gamma_1 \vdash M$ be the unique premise of $\rho$ determined
    by $(M_k, \rho)$ via {\bf F2} and let $\Delta := \Gamma_1$.
  \end{itemize}
    \end{itemize}
\item If $j \leq n$ then go to step 1.
\end{enumerate}
If the original sequent is provable, then at each iteration $j$, 
the algorithm (i.e., step 3) will find the correct
pair $(M_{i_j}, \rho_j)$. (Strictly speaking, the algorithm finds 
the $j$-th pair of a shortest proof, and not necessarily the one given above, since there can be
more than one proof of a given length.) 
If no proof is found after $n$ iterations, then the original sequent is not
provable, since the length of any shortest proof is bound by $n$ by Lemma~\ref{lm:linear}.
By Lemma~\ref{lm:right-deducibility}, step 2 takes $O(n^a f(n))$ for some constant $a$. 
By ({\bf F1}) above, each iteration in step 3 takes $O(n^b f(n))$ for some constant $b$.
Since there are at most $6n$ distinct principal-term-and-rule pairs,
this means step 3 takes $O(6n^{b+1}f(n)).$ Therefore the whole procedure takes
$O(n^{c+1} f(n))$ where $c$ is the greater of $a$ and $b+1$.
Hence the complexity of $\Vdash_\Lscr$ is polynomially reducible to $\Vdash_E.$
\qed
\end{proof}

\subsection{Proofs for Section~\ref{sec:comb}}

The following lemma is similar to Lemma~\ref{lm:equiv}, except that $\equiv$
now denotes equality modulo AC for $\oplus_1, \ldots, \oplus_n.$

\begin{lemma}
\label{lm:equiv2}
Let $\Pi$ be a derivation of $M_1,\ldots, M_k \vdash N$. Then for any
$M_1',$ $\ldots,$ $M_k'$ and $N'$
such that $M_i \equiv M_i'$ and $N \equiv N'$, there is a derivation $\Pi'$
of $M_1',$ $\ldots,$ $M_k' \vdash N'$ such that $|\Pi| = |\Pi'|.$
\end{lemma}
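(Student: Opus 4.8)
The plan is to prove this exactly as Lemma~\ref{lm:equiv} was proved, namely by induction on the height $|\Pi|$ of the derivation, now carried out in the system $\Dscr$ and with $\equiv$ reinterpreted as equality modulo AC of all the operators $\oplus_1,\ldots,\oplus_n$. The only new feature compared with the single-operator setting of Lemma~\ref{lm:equiv} is that several disjoint AC operators are present; but since the theories are pairwise disjoint and no $\oplus_i$ lies in $\Sigma_C$, the relation $\equiv$ remains a congruence that rearranges only the $\oplus_i$-structure of a term while leaving its constructor skeleton intact. I would first record this observation explicitly: if $M \equiv M'$ and $M = f(M_1,\ldots,M_k)$ for a constructor $f$, then $M' = f(M_1',\ldots,M_k')$ with $M_j \equiv M_j'$; and every guarded, resp. cross-theory, subterm $A$ of $M$ has an AC-equivalent counterpart $A' \equiv A$ that is a guarded, resp. cross-theory, subterm of $M'$.

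Next comes the case analysis on the last rule $\rho$ of $\Pi$. For the right rules $p_R, e_R, \signsym_R, \blindsym_R$ and for the constructor-decomposing left rules $p_L, e_L, \signsym_L, \blindsym_{L1}, \blindsym_{L2}$, the principal term is headed by a constructor, so by the observation above its AC-equivalent replacement is headed by the same constructor with AC-equivalent immediate subterms. I would then apply the induction hypothesis to the premise(s) --- each premise of the primed instance is obtained by the structurally matching decomposition, with every replaced term again $\equiv$ to the original --- and reapply $\rho$. Side conditions are handled by transitivity of $\equiv$: for instance the proviso $K \equiv L$ of $\signsym_L$ survives because $K' \equiv K \equiv L \equiv L'$.

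For the $id_{E_i}$ rule the conclusion satisfies $M \approx_E C[M_1,\ldots,M_k]$ for an $E_i$-context $C$ and $M_1,\ldots,M_k \in \Gamma$. Since $\equiv$ is contained in $\approx_E$ and $\approx_E$ is a congruence, from $M' \equiv M$ and $M_j' \equiv M_j$ I obtain $M' \approx_E M \approx_E C[M_1,\ldots,M_k] \approx_E C[M_1',\ldots,M_k']$, so $id_{E_i}$ applies directly to the primed sequent; being an axiom, its height is trivially preserved, and this is the base case of the induction. For the abstraction rules $gs$ and $cs$, the cut term $A$ is a guarded, resp. cross-theory, subterm of $\Gamma \cup \{M\}$; using the observation above I replace it by an AC-equivalent $A'$ enjoying the same property for the primed terms, apply the induction hypothesis to both premises, and reapply the rule.

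The main obstacle --- and it is a mild one --- is the bookkeeping for the subterm side conditions of $gs$ and $cs$: one must check that passing to AC-equivalent terms never destroys the property of being a guarded or cross-theory subterm. This is precisely what the preliminary observation secures, and it is where disjointness of the $E_i$ is used, so I would state and justify that observation first and let the remaining cases proceed as routine congruence arguments. Since no case adds or deletes an inference, the resulting derivation $\Pi'$ satisfies $|\Pi'| = |\Pi|$, as required.
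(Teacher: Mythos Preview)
Your proposal is correct and follows exactly the approach the paper intends: the paper's own proof of Lemma~\ref{lm:equiv} is simply ``By induction on $|\Pi|$,'' and Lemma~\ref{lm:equiv2} is stated as the direct analogue with $\equiv$ reinterpreted for the combined AC operators, with no further proof given. Your case analysis and the auxiliary observation about guarded and cross-theory subterms being stable under $\equiv$ are the right way to flesh this out, and indeed slightly more detailed than what the paper provides.
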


\begin{lemma}
\label{lm:D-decomp1}
Let $X$ and $Y$ be normal terms. 
If $\Gamma, f(X,Y) \vdash M$ is cut-free provable in $\Dscr$, where $f$ is a constructor, 
then $\Gamma, X, Y \vdash M$ is also cut-free provable in $\Dscr$.
\end{lemma}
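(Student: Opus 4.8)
The plan is to adapt the proof of Lemma~\ref{lm:decomp1} to the combined system $\Dscr$, relying on the fact that the variable-abstraction machinery (Proposition~\ref{prop:var-abs} and Proposition~\ref{prop:var-abs2}) is already stated for a disjoint combination $E_1,\ldots,E_n$ and its abstraction functions $F_{E_i}$. I would fix a cut-free $\Dscr$-derivation $\Pi$ of $\Gamma, f(X,Y) \vdash M$ and build a cut-free $\Dscr$-derivation of $\Gamma, X, Y \vdash M$ by the same double induction as before: an outer induction on $|f(X,Y)|$ and an inner induction on $|\Pi|$. Because $f(X,Y)$ occurs on the left, every rule of $\Pi$ whose principal term is not $f(X,Y)$ is routine --- apply the inner induction hypothesis to the premise(s) and reapply the same rule. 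This disposes of all right rules, of all left rules acting on other formulas, and, crucially for $\Dscr$, of the new cross-theory abstraction rule $cs$, which has exactly the shape of $gs$ and therefore requires no new argument.

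Two cases are genuinely non-trivial, just as in Lemma~\ref{lm:decomp1}. First, $\Pi$ may end with $\blindsym_{L2}$ acting on $f(X,Y)$, which forces $f = \signsym$ and $X = \blind N R$. This case lives entirely in the constructor layer and is untouched by passing from $E$ to a disjoint combination, so I would reproduce the original reasoning verbatim: apply the inner hypothesis to both premises, use the outer hypothesis to decompose the residual $\sign N Y$, and reassemble the derivation with $\blindsym_{L1}$.

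Second, $\Pi$ may end with $id_{E_i}$ using $f(X,Y)$, so that $M \approx_E C[f(X,Y)^n, M_1, \ldots, M_k]$ for some $E_i$-context $C[\ldots]$ with $M_1,\ldots,M_k \in \Gamma$. If some guarded subterm $A$ of $M$ or of one of the $M_j$ satisfies $f(X,Y) \equiv A$ --- necessarily $A = f(X',Y')$ with $X' \equiv X$ and $Y' \equiv Y$ --- I would abstract $A$ using $gs$ (still available in $\Dscr$), rebuild $f(X',Y')$ from $X$ and $Y$ with the right rules for $f$, and close with $id_{E_i}$. Otherwise no subterm of $M$ or the $M_j$ is equivalent to $f(X,Y)$, and here I would rerun the abstraction argument with $F_{E_i}$ in place of $F_E$: put $x = v_E(f(X,Y))$, push the rewrite sequence $C[f(X,Y)^n, M_1,\ldots,M_k] \to^* M$ through $F_{E_i}$ via Proposition~\ref{prop:var-abs} to get $C[x^n, F_{E_i}(M_1), \ldots, F_{E_i}(M_k)] \to^* F_{E_i}(M)$, note that $x$ occurs in none of $F_{E_i}(M)$ or the $F_{E_i}(M_j)$, replace $x$ first by a fresh name and then by $X$, and pull the result back down with Proposition~\ref{prop:var-abs2} to obtain $C[X^n, M_1,\ldots,M_k] \to^* M$. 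A single $id_{E_i}$ then finishes the case.

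The only place the combination genuinely enters is this last abstraction step, and it is precisely where I expect the care to lie. The key observations are that $f \notin \Sigma_{E_i}$ (so $F_{E_i}(f(X,Y)) = x$ and $C$ abstracts cleanly), and that both $C[f(X,Y)^n, M_1,\ldots,M_k]$ and $M$ are quasi-$E_i$ terms --- their $E_i$-alien subterms are already normal --- which is exactly the hypothesis Propositions~\ref{prop:var-abs} and~\ref{prop:var-abs2} demand. Since those propositions were deliberately proved theory-by-theory for the disjoint combination, no new obstacle appears; once one checks that $cs$ mirrors $gs$ and that the $E_i$-context uniformly replaces the old $E$-context, the adaptation is essentially mechanical.
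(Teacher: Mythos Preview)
Your proposal is correct and matches the paper's approach exactly: the paper's own proof is the single line ``Analogous to the proof of Lemma~\ref{lm:decomp1},'' and your write-up is precisely that adaptation, carried out in detail with the abstraction function $F_{E_i}$ replacing $F_E$ and with the observation that the extra rule $cs$ behaves like $gs$ under the inner induction. There is nothing to add.
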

\begin{proof}
Analogous to the proof of Lemma~\ref{lm:decomp1}.\qed
\end{proof}

\begin{lemma}
\label{lm:D-decomp2}
Let $X_1,$ $\ldots,$ $X_k$ be normal terms and let 
$\Pi$ be a cut-free $\Dscr$-derivation of $\Gamma, \norm{f(X_1,\ldots,X_k)} \vdash M$, where $f \in \Sigma_{E_i}.$
Then there exists a cut-free $\Dscr$-derivation $\Pi'$ of $\Gamma, X_1,\ldots,X_k \vdash M.$ 
\end{lemma}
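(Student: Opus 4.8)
The plan is to prove Lemma~\ref{lm:D-decomp2} by induction on $|\Pi|$, following the same case analysis as the proof of Lemma~\ref{lm:decomp2} and adapting each case to the combined system $\Dscr$. Write $W := \norm{f(X_1,\ldots,X_k)}$ throughout. Since $f \in \Sigma_{E_i}$ and the $X_j$ are normal, the term $f(X_1,\ldots,X_k)$ is a quasi-$E_i$ term and $W$ is its $E$-normal form, so the variable abstraction results (Proposition~\ref{prop:var-abs} and Proposition~\ref{prop:var-abs2}) apply to it. The cases where the last rule of $\Pi$ does not act on $W$ are routine: apply the induction hypothesis to the premises, then reapply the same rule after weakening with $X_1,\ldots,X_k$ (Lemma~\ref{lm:weak}) and adjusting for AC by Lemma~\ref{lm:equiv2}.

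Next I would treat the left rules that decompose $W$. The constructor left rules ($p_L$, $e_L$, $\signsym_L$, $\blindsym_{L1}$, $\blindsym_{L2}$) and the guarded-subterm rule $gs$ have the same shape in $\Dscr$ as in $\Sscr$, so these cases are handled essentially as in Lemma~\ref{lm:decomp2}: such a rule forces $W$ to be guarded (note that although $f\in\Sigma_{E_i}$, normalisation can make $W$ constructor-headed), and using $F_{E_i}$ together with Proposition~\ref{prop:var-abs} one locates the relevant guarded normal subterm as an $\equiv$-variant of a subterm of some $X_j$, so that $X_j = C[A']$ for a context $C[.]$ and the decomposition can be rebuilt using $gs$, the induction hypothesis, and (where a further constructor must be split) Lemma~\ref{lm:D-decomp1}.

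The genuinely new situations involve $id_{E_i}$ as a leaf and the rule $cs$, and I expect these to be the crux. For an $id_{E_j}$ leaf in which $W$ fills holes of the witnessing $E_j$-context $C$, there are two sub-cases. If $j = i$, then since $f \in \Sigma_{E_i}$ the symbol $f$ can be absorbed into $C$, yielding a larger $E_i$-context whose holes are filled by $X_1,\ldots,X_k$, so $id_{E_i}$ applies directly to $\Gamma, X_1,\ldots,X_k \vdash M$. If $j \neq i$, then $W$ occurs as an $E_j$-alien leaf sitting immediately inside an $E_j$-headed term, i.e.\ $W$ is a cross-theory subterm; here I would first derive $\Gamma, X_1,\ldots,X_k \vdash W$ by $id_{E_i}$ (using the $E_i$-context $f(\cdot,\ldots,\cdot)$ and $W \approx_E f(X_1,\ldots,X_k)$), obtain $\Gamma, X_1,\ldots,X_k, W \vdash M$ from the original leaf by weakening, and recombine the two with $cs$, whose side condition is met precisely because $W$ is cross-theory. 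Finally, when $\Pi$ ends in $cs$ abstracting a cross-theory subterm $N$ of $W$, I apply the induction hypothesis to both premises and must then re-establish the side condition of $cs$: that $N$, up to $\equiv$, is still a cross-theory subterm of a term in $\Gamma \cup \{X_1,\ldots,X_k, M\}$.

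The hard part will be exactly this last piece of bookkeeping: showing that abstracting a cross-theory subterm of the normalised term $W$ can always be traced back to an $\equiv$-variant of a cross-theory subterm occurring inside one of the $X_j$. This is the combined-theory analogue of the ``$A \equiv A'$'' argument in Lemma~\ref{lm:decomp2}, and it should follow from Proposition~\ref{prop:var-abs} and Proposition~\ref{prop:var-abs2} by abstracting $W$ with the appropriate $F_{E_l}$ and noting that, since the only material in $W$ beyond the $X_j$ is the pure-$E_i$ equational structure introduced by normalisation, no new cross-theory subterm can be created modulo $\equiv$. Once $N$ is relocated inside some $X_j$ via Lemma~\ref{lm:equiv2}, the rule $cs$ is again applicable and the derivation is reassembled, completing the induction.
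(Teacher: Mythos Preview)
Your overall approach matches the paper's: induction on $|\Pi|$, reusing the case analysis of Lemma~\ref{lm:decomp2}, and handling the new $cs$ case by using Propositions~\ref{prop:var-abs} and~\ref{prop:var-abs2} to relocate the abstracted cross-theory subterm inside some $X_j$ (up to $\equiv$), then reapplying $cs$ via Lemma~\ref{lm:equiv2}. The paper's proof is terser---it treats $id_{E_j}$ as ``similar to Lemma~\ref{lm:decomp2}'' and spells out only the $cs$ case---so your more explicit discussion of $id_{E_j}$ is a useful addition, not a deviation.

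There is, however, a small gap in your $id_{E_j}$, $j\neq i$ subcase. You argue that $W$ sits as an $E_j$-alien leaf in the witnessing $E_j$-context $C[\ldots]$ and hence is a cross-theory subterm, so $cs$ applies. But the side condition of $cs$ requires $W$ to be a cross-theory subterm of some term in $\Gamma\cup\{X_1,\ldots,X_k,M\}$, not merely of the unnormalised witness $C[\ldots,W,\ldots]$. If the $E_j$-normalisation of $C[\ldots,W,\ldots]$ to $M$ cancels all occurrences of $W$ (e.g.\ via $W\oplus_j W \to 0$), then $W$ need not occur in $M$ at all, and $cs$ is not licensed. The repair is exactly the two-way split used in the $id$ case of Lemma~\ref{lm:decomp1}: either some $\equiv$-copy of $W$ survives as a (guarded or cross-theory) subterm of $M$ or of a term in $\Gamma$, and then your $cs$/$gs$ reconstruction works; or no such copy survives, in which case the variable-abstraction argument shows that the abstracted variable $v_E(W)$ is absent from $F_{E_j}(M)$ and from each $F_{E_j}(M_p)$, so one may substitute any $X_l$ for $W$ in the witness and conclude $id_{E_j}$ directly on $\Gamma,X_1,\ldots,X_k\vdash M$. (A related point: your assertion that $W$ is $E_j$-alien assumes $W$ is $E_i$-headed; when normalisation collapses $f(X_1,\ldots,X_k)$ to an $E_i$-alien subterm of some $X_l$, you have $W\equiv$ that subterm and the sequent is handled by weakening, without $cs$.)

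Your treatment of the $cs$ case is correct in outline and coincides with the paper's; note only that the relocated term $R'$ may be $X_j$ itself rather than a proper cross-theory subterm of $X_j$, in which case no $cs$ is needed after the induction hypothesis---this is the ``possibly'' in the paper's phrase ``followed by (possibly) an application of $cs$''.
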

\begin{proof}
By induction on $|\Pi|$. Most cases are similar to the proof of Lemma~\ref{lm:decomp2}. In particular,
the case involving cross-theory subterms are a straightforward generalisation of those involving 
guarded subterms in the proof of Lemma~\ref{lm:decomp2}.
 
Let $N \equiv \norm{f(X_1,\ldots,X_k)}.$ The new case we need to consider is when $\Pi$ ends with $cs:$
$$
\infer[cs]
{\Gamma, N \vdash M}
{
 \deduce{\Gamma, N \vdash R}{\Pi_1}
 &
 \deduce{\Gamma, N, R \vdash M}{\Pi_2}
}
$$
where $R$ is a cross-theory subterm of $N.$ 

Observe that since $X_1,\ldots,X_k$ are in normal
form, the term $f(X_1,\ldots,X_k)$ is a quasi $E_i$-term. As in the proof of Lemma~\ref{lm:decomp2},
using the variable abstraction 
technique (Proposition~\ref{prop:var-abs} and Proposition~\ref{prop:var-abs2}), we can
show that there must be a cross-theory subterm $R'$ in some $X_i$ (w.l.o.g., assume $i = 1$)
such that $R \equiv R'$. 
Thus $\Pi'$ is constructed straightforwardly by induction hypothesis
on $\Pi_1$ and $\Pi_2$ followed by (possibly) an application of $cs$ on $X_l$
and Lemma~\ref{lm:equiv2}.
\qed
\end{proof}

\begin{lemma}
\label{lm:D-decomp3}
Let $M_1,$ $\ldots,$ $M_k$ be normal terms and let $C[\ldots]$ be a $k$-hole
$E_i$-context. If $\Gamma, \norm{C[M_1,\ldots,M_k]} \vdash M$ is cut-free derivable in $\Dscr$,
then $\Gamma, M_1, \ldots, M_k \vdash M$ is also cut-free derivable in $\Dscr$.
\end{lemma}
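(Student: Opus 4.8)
The plan is to follow the proof of Lemma~\ref{lm:decomp3} essentially verbatim, replacing its two ingredients by their $\Dscr$-counterparts: Lemma~\ref{lm:equiv2} (closure of cut-free derivability under $\equiv$, now for all of $\oplus_1,\dots,\oplus_n$) plays the role of Lemma~\ref{lm:equiv}, and Lemma~\ref{lm:D-decomp2} (peeling off a single $\Sigma_{E_i}$-headed symbol in $\Dscr$) plays the role of Lemma~\ref{lm:decomp2}. The induction is on the number of function symbols in the $E_i$-context $C[\ldots]$. The base case in which $C$ is a single hole is immediate: then $k=1$, $C[M_1]=M_1$, and since $M_1$ is normal we have $\norm{C[M_1]}=M_1$, so the hypothesis sequent is literally the conclusion.

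For the inductive step, since $C$ is a non-empty $E_i$-context it is headed by some $f \in \Sigma_{E_i}$, so $C[\ldots] = f(C_1[\ldots],\ldots,C_m[\ldots])$ with the $k$ holes partitioned among the $C_j$. Put $Y_j := \norm{C_j[\ldots]}$. Because $R_E$ is AC-convergent, $E$-normal forms are unique modulo $\equiv$ and may be computed innermost-first, so $\norm{C[M_1,\ldots,M_k]} \equiv \norm{f(Y_1,\ldots,Y_m)}$; applying Lemma~\ref{lm:equiv2} (which only relabels $\equiv$-equivalent terms and hence preserves cut-freeness) turns the hypothesis into a cut-free derivation of $\Gamma, \norm{f(Y_1,\ldots,Y_m)} \vdash M$. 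The $Y_j$ are normal and $f \in \Sigma_{E_i}$, so Lemma~\ref{lm:D-decomp2} yields a cut-free derivation of $\Gamma, Y_1,\ldots,Y_m \vdash M$. I would then peel off the $Y_j$ one at a time: each $C_j$ is a strictly smaller $E_i$-context, so the induction hypothesis, applied with the remaining $Y$'s folded into the ambient context, replaces $Y_j=\norm{C_j[\ldots]}$ by the normal terms filling the holes of $C_j$. After $m$ such steps the antecedent is exactly $\Gamma, M_1,\ldots,M_k$.

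The one point that needs genuine care — and the step I expect to be the main obstacle — is the degenerate $k=0$ base case, which is invoked whenever some $C_j$ above is \emph{closed} (has no holes, as with the constant $0$ in a context like $[\,]\oplus 0$). Then $Y_j=\norm{C_j}$ is a ground pure-$E_i$ term with no counterpart among $M_1,\ldots,M_k$, so it must be \emph{deleted} from the antecedent rather than decomposed; and since Lemma~\ref{lm:D-decomp3} feeds into the cut-elimination theorem for $\Dscr$, I cannot appeal to cut to do this. I would dispatch this case by a small strengthening observation: a ground normal pure-$E_i$ term $T$ may be removed from the antecedent of any cut-free $\Dscr$-derivation, proved by induction on the derivation using that $T$ is a closed $E_i$-context and hence derivable from the empty antecedent by $id_{E_i}$. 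Concretely, any $id_{E_i}$ instance that uses $T$ remains valid after the deletion by folding $T$, as a closed $E_i$-subcontext, into the $E_i$-context of that instance; the only bookkeeping is for the abstraction rules $ls$ and $cs$ whose principal term is drawn from $T$, and these are handled the same way, using that every guarded or cross-theory subterm of $T$ is again a pure-$E_i$ term derivable by $id_{E_i}$.
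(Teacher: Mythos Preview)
Your proof is correct and matches the paper's exactly: induction on the size of the $E_i$-context $C$, invoking Lemma~\ref{lm:equiv2} and Lemma~\ref{lm:D-decomp2} in place of their $\Sscr$-analogues Lemma~\ref{lm:equiv} and Lemma~\ref{lm:decomp2}. Your separate treatment of the closed-subcontext case is unnecessary: Lemma~\ref{lm:D-decomp2} is stated for arbitrary arity and in particular covers $0$-ary $f\in\Sigma_{E_i}$, so any closed $C_j$ (which has size $\geq 1$, its smallest instance being a constant) is handled by the ordinary inductive step rather than requiring a new argument.
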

\begin{proof}
By induction on the size of $C[\ldots]$, Lemma~\ref{lm:equiv2} and Lemma~\ref{lm:D-decomp2}. \qed
\end{proof}

\paragraph{\bf Theorem \ref{thm cut elim for D}}
The cut rule in $\Dscr$ is admissible. 
\begin{proof}
Analogous to the proof of Theorem~\ref{thm cut elim}, making use of Lemma~\ref{lm:D-decomp1}
and Lemma~\ref{lm:D-decomp3}. \qed
\end{proof}

\end{document}